\documentclass[10pt]{article}

\usepackage[a4paper, margin=0.8in]{geometry} 

\usepackage[utf8]{inputenc} 
\usepackage[T1]{fontenc}    
\usepackage{hyperref}       
\usepackage{url}            
\usepackage{booktabs}       
\usepackage{amsfonts}       
\usepackage{nicefrac}       
\usepackage{microtype}      
\usepackage{graphicx}
\usepackage{amsthm}
\usepackage{amssymb}
\usepackage{cite}
\usepackage{doi}
\usepackage[authoryear,round]{natbib}

\usepackage{subfigure}

\usepackage{array}
\usepackage{tabularx}

\newcolumntype{L}[1]{>{\raggedright\arraybackslash}m{#1}}
\newcolumntype{Y}{>{\raggedright\arraybackslash}X}

\usepackage{enumitem}
\setlist[itemize]{label=$\triangleright$}

\usepackage{wrapstuff}

\usepackage{bm}
\usepackage{amsmath}
\usepackage{algorithm}
\usepackage{algpseudocode}

\DeclareMathOperator*{\argmax}{arg\,max}
\usepackage[table]{xcolor}
\usepackage{changepage}
\usepackage{multirow}
\usepackage{bbm}
\usepackage{todonotes}
\usepackage[most]{tcolorbox}

\theoremstyle{plain}
\newtheorem{thm}{Theorem}[section]
\newtheorem{proposition}[thm]{Proposition}
\newtheorem*{proposition*}{Proposition}
\newtheorem{corollary}[thm]{Corollary}
\newtheorem*{corollary*}{Corollary}

\newtheorem{remark}[thm]{Remark}

\tcolorboxenvironment{proof}{%
  breakable,
  enhanced jigsaw,
  colback=black!5,   
  boxrule=0pt,       
  frame hidden,
  before skip=8pt,
  after skip=8pt,
}

\usepackage{amsthm}
\newtheorem{assumption}{Assumption}

\usepackage{listings}
\usepackage{authblk}

\definecolor{myred}{RGB}{131,0,12}
\title{\textcolor{myred}{\Large{Tempting the Agent: The Economics of Reputation without Persistent Identity in AI Agent Markets}}}

\author[1]{Federico Gatta}
\author[1]{Manuel Naviglio}
\author[2]{Francesco Tarantelli}
\affil[1]{Scuola Normale Superiore, Pisa, Italy}
\affil[2]{Universit\`{a} di Bologna, Italy}

\begin{document}

\maketitle

\begin{abstract} 

Reputation is a fundamental mechanism through which markets sustain trust when service quality cannot be perfectly assessed ex ante, constituting a form of intertemporal economic capital by attracting future demand. Its effectiveness as a disciplinary mechanism depends not only on past interactions but also on the persistence of the identity to which reputation is attached. When identities can be abandoned and recreated cheaply, reputational capital may itself become an object of opportunistic exploitation.

This paper develops a dynamic economic framework to study when reputation is sufficient to discipline autonomous agents. We model reputation as capital attracting future economic activity. At each point, an agent chooses between operating honestly, investing in quality to preserve future gains, or executing a one-shot deviation to extract its reputation's value and restart from a penalized identity. Our analysis relates the temptation to opportunistic behavior to identity-reset costs, reputation persistence, demand sensitivity, and enforcement design, deriving comparative statics on optimal quality provision.

Autonomous AI-agent operating on the blockchain are a relevant application: infrastructures such as \texttt{ERC-8004}, \texttt{ERC-8183}, and \texttt{x402} combine reputation, identity, and payments in permissionless markets. Nonetheless, our framework applies to any environment where reputation generates future business and identities are replaceable.

\end{abstract}

\begin{keywords}
Autonomous Agents, Decentralized Finance, Ethereum, On-chain Reputation, Permissionless Identity
\end{keywords}


\section{Introduction}

Reputation is a fundamental mechanism through which economic relationships can be sustained when contracts are incomplete and the quality of a service cannot be perfectly assessed ex ante. A seller, professional, platform participant, or service provider that repeatedly interacts with customers accumulates a history of past behavior. When future users condition their decisions on this history, reputation becomes economically valuable: a better and more established reputation can attract more demand, support larger transaction volumes, and generate higher future revenues. In this sense, reputation is not merely an informational score. It is a form of \emph{reputational capital} whose value derives from the future economic activity that it allows its holder to attract.

The institutional structure surrounding reputation is equally important. In many centralized environments, accounts are embedded in systems in which identities are relatively persistent. Platform operators can suspend or exclude participants, retain transaction histories, require verified payment instruments or business credentials, and, in some cases, rely on contractual relationships or Know-Your-Customer (KYC) procedures. Even when perfect real-world identification is absent, these mechanisms can create a degree of \emph{economic identity persistence}: misconduct may lead not only to a lower rating, but also to exclusion, legal liability, loss of access, or substantial costs of re-entry.

The situation changes substantially when the economic actor behind a service is not persistently identifiable or economically punishable. Consider an environment in which an autonomous software agent provides services, receives payments, accumulates ratings, and interacts directly with humans or with other autonomous agents. If the agent is not associated with a verified and persistent real-world principal, its observable identity may be easier to abandon and recreate. Reputation remains valuable because it attracts future business, but the agent or the entity controlling it has the option of a one-shot deviation: destroy the existing reputational capital, appropriate a sufficiently large short-run payoff, and subsequently re-enter the market under a new or partially reset identity.

This possibility creates a distinctive dynamic incentive problem. An agent with a valuable reputation faces a trade-off between the continuation value generated by preserving that reputation and the immediate value that can be extracted through opportunistic behavior. Remaining honest generates a stream of future revenues, but may require costly effort to deliver the promised service and preserve the quality of the reputation record. Deviating may instead allow the agent to appropriate some of the economic value currently entrusted to it, while sacrificing its existing reputation and incurring the cost of rebuilding a new identity. The relevant comparison is therefore between the value of preserving reputational capital and the value of opportunistically liquidating it. When identity reset is sufficiently cheap, reputational losses need not be permanent, and reputation alone may fail to discipline behavior.

The absence of a persistent one-to-one mapping between observable identities and economically accountable principals creates a second, related concern. If identities can be produced at low cost, the same economic entity may be able to operate multiple agents simultaneously. These identities can potentially be used not only to replace an agent after misconduct, but also to manipulate ratings, generate artificial activity, coordinate feedback, or otherwise distort the informational content of the reputation system. This is the classical connection between cheap pseudonyms, whitewashing, and Sybil behavior, but it acquires renewed importance in autonomous-agent economies, where the creation, operation, and coordination of multiple software identities can itself be automated. The present paper focuses on the first mechanism---the dynamic incentive to preserve or abandon an existing reputation---while cheap identity creation and reputation manipulation provide a broader motivation and a natural extension of the framework.

These issues become particularly relevant as autonomous Artificial Intelligence (AI) agents begin to act as economic participants rather than merely as passive software tools. An AI agent may provide a digital service to a human user, but it may also contract another agent to perform a specialized task, purchase access to data or computational resources, delegate part of a workflow, or receive and manage funds on behalf of another economic actor. In such an environment, humans may transact with agents, agents may transact with other agents, and autonomous entities may act as buyers, sellers, intermediaries, evaluators, and counterparties.

Unlike traditional firms or professionals, however, the observable identity through which an autonomous agent interacts with the market need not be persistently tied to an economically accountable human or organization. While a conventional market participant may face legal liability, contractual sanctions, platform exclusion, or long-lasting reputational consequences following misconduct, an autonomous agent operating under a permissionless identity may instead be abandoned and replaced at relatively low cost, leaving only limited consequences for the underlying controller. As a result, many of the disciplinary mechanisms that sustain trust in conventional markets may become substantially weaker.

This paper studies the economic consequences of this form of identity structure. Our central question is:

\begin{quote}
\emph{When can reputation alone discipline an economic agent whose observable identity is not persistently tied to an economically accountable principal, and can therefore be abandoned and recreated without imposing substantial consequences on its underlying controller?\\ In other words: is reputation something that the agent considers as a capital that fears to lose, and is this enough to incentivize it to behave well?}
\end{quote}

We develop a dynamic microeconomic model in which an agent accumulates reputation through repeated interaction and earns economic returns from the demand generated by that reputation. The state of the agent is summarized by two variables. The first captures the quality of its reputation, while the second captures the depth or maturity of the underlying track record. This distinction reflects the fact that an identical reputation score may have very different economic value depending on whether it is supported by only a few interactions or by a long and established history.

Reputation affects the volume of economic activity entrusted to the agent. Better and more mature reputations, therefore, generate larger business opportunities and higher future income. Conditional on continuing to operate, the agent chooses the quality or effort devoted to service provision: greater effort is costly but improves future reputation. At the same time, the agent can execute a one-shot opportunistic deviation. In that case, it extracts a fraction of the economic value currently associated with the activity it intermediates, destroys or abandons its existing reputational position, and subsequently restarts from a penalized state. The framework, therefore, treats reputation as an intertemporal productive asset and opportunistic deviation as a possible liquidation of that asset.

This formulation makes explicit the role played by identity persistence. If re-entry is costly and rebuilding reputation is slow, destroying a mature identity implies a large loss of future income and reputation can sustain honest behavior even in the absence of direct legal enforcement. If, instead, a new identity can be created cheaply and a credible reputation can be reconstructed rapidly, the disciplinary effect of reputation becomes weaker. The model characterizes the regions of the state space in which continuation or deviation is optimal and studies how these regions depend on the value extractable from misconduct, the cost of identity reset, the persistence of reputation, the sensitivity of demand to reputation, and the agent's discount factor.

We additionally introduce staking and slashing as a complementary economic enforcement mechanism. A stake tied to the activity intermediated by the agent increases the amount that can be lost following misconduct and therefore alters the comparison between opportunistic extraction and the continuation value of reputation. The framework also highlights a potential trade-off: stronger staking may reduce the incentive to deviate while simultaneously increasing the cost of honest operation. Enforcement can therefore affect not only whether the agent behaves honestly, but also the effort optimally supplied when it remains in the market.

The theoretical mechanism is deliberately more general than any particular technological implementation. It applies whenever three conditions are present: reputation affects future demand, opportunistic behavior can generate a sufficiently large one-shot gain, and economic identity is cheap to recreate. These conditions can arise in conventional digital markets, pseudonymous online services, decentralized marketplaces, and autonomous-agent economies.

Blockchain-based systems provide an especially transparent application of
this problem because they combine programmable payments, publicly observable
transaction histories, and permissionless identity creation. Importantly,
however, the development of agentic-commerce infrastructure did not begin
with Ethereum's ERC standardization process. The x402 protocol was launched
in May 2025 as an HTTP-native payment mechanism allowing software clients,
including autonomous agents, to purchase APIs, data, computation, and other
digital resources through programmatic stablecoin payments
\citep{x402docs2026}. Although x402 operates at the application and HTTP
layers, payment settlement can occur on-chain. It therefore provided an
early market-level implementation of autonomous machine-to-machine commerce
before Ethereum had standardized dedicated identity, reputation, or job
escrow primitives for agents.

Ethereum standardization followed this application-layer experimentation.
ERC-8004, proposed in August 2025, introduced registries for agent identity,
reputation, and validation \citep{derossi2025erc8004}. The proposal explicitly
treats payments as orthogonal to its trust layer and discusses x402 payments
as a complementary source of economically meaningful interaction data.
ERC-8183, proposed in February 2026, subsequently introduced a standardized
on-chain job-escrow primitive in which clients fund tasks, providers submit
work, and evaluators determine whether escrowed payments are released or
refunded \citep{crapis2026erc8183}. Its specification also explicitly
contemplates compatibility with HTTP-native payment protocols such as x402.

This chronology is economically informative. Rather than originating all
components of the agent economy within the Ethereum standards process,
payment and autonomous-commerce primitives first developed at the
application layer and were subsequently complemented by ERC proposals that
formalized identity, reputation, validation, and escrow on-chain. The
evolution from x402 to ERC-8004 and ERC-8183 can therefore be interpreted as
a bottom-up standardization process in which Ethereum's standards layer
codifies and makes composable primitives that had already begun to emerge in
the broader agentic-commerce ecosystem.

This setting makes the theoretical trade-off particularly concrete. An \texttt{ERC-8004} identity may accumulate a valuable public history, while \texttt{ERC-8183} or \texttt{x402}-like payment mechanisms may allow that reputation to translate into increasingly valuable task flow or payment activity. At the same time, permissionless identity creation means that abandoning one identity and creating another may be significantly easier than replacing a legally identified economic actor in a conventional market. The blockchain application, therefore, does not define the theory developed in this paper; rather, it provides a particularly clean institutional environment in which the underlying problem of reputation without persistent identity becomes observable.

The paper is organized as follows. The remainder of this section discusses the related literature. Section \ref{sec:institutional} introduces the concepts of reputational capital and identity persistence, highlighting how they differ in the context of autonomous agents. Section \ref{sec:model} presents the model describing the dynamics of the system, introducing the state variables, control variables, and economic flows. Section \ref{sec:dynamic} formulates the agent's dynamic optimization problem and studies how reputation simultaneously creates discipline and temptation. Section \ref{sec:sur} specializes the analysis to the small-update regime, where the model admits an explicit analytical characterization, and we are able to explicitly extract the effect that staking could have on agent's behavior. Section \ref{sec:numerical} provides some numerical results corroborating the theory previously developed. Section~\ref{sec:effort} shifts the focus to the agent's optimal quality choice conditional on honest continuation. Section \ref{sec:conc} concludes. The Appendix is organized as follows: Appendix \ref{app_symbols} summarizes the main symbols used in the paper; Appendix \ref{app_proofs} contains the proofs of the results stated in the main text.

\subsection{Related Literature}

Our contribution is related to several strands of literature. Classical models of reputation study how repeated interaction and future rents can sustain quality provision \citep{klein1981role,shapiro1983premiums,honer2002reputation}. Work on online marketplaces shows how reputation systems affect seller behavior and how feedback mechanisms can themselves be shaped by reciprocity, timing, and strategic interaction \citep{resnick2002trust,dellarocas2003digitization,cabral2010dynamics,bolton2013engineering}. Particularly close to our setting is the literature on cheap pseudonyms and whitewashing, which emphasizes that low-cost identity replacement weakens the disciplinary role of reputation \citep{friedman2001social,bar2008seller}. We build on this insight but study it in an environment in which reputation determines the scale of economic activity entrusted to an autonomous agent, creating an endogenous tension between the future value of reputational capital and the value that can be extracted from that activity through a one-shot deviation.

Our paper is also related to the emerging literature on trust and evaluation in autonomous-agent systems. Agent evaluation is intrinsically multidimensional, and the quality of an agent cannot generally be summarized by a single measure of predictive accuracy or user satisfaction \citep{kapoor2024agents}. Dynamic approaches to agent ranking similarly emphasize that credibility depends on observed interactions, performance, and the information accumulated over time \citep{omidshafiei2019alpharank,lou2025drf}. Recent work on inter-agent trust further stresses that reputation is only one component of trustworthy agent interaction and may need to be complemented by proof, stake, validation, or other constraints \citep{hu2025interagent}. Our analysis provides an economic foundation for this argument by identifying conditions under which reputation is insufficient on its own and by showing how financial penalties can alter the agent's incentives.

Finally, blockchain and decentralized systems provide a natural environment for studying the interaction between pseudonymity and economic incentives. Previous work has examined decentralized reputation systems, privacy-preserving trust mechanisms, and blockchain-based trust management \citep{bellini2020blockchain,pal2021blockchain,schaub2016trustless,dimitriou2021decentralized,hasan2022privacy}. Other studies document opportunistic behavior in decentralized markets, including scam tokens, rug pulls, sniper bots, and MEV extraction \citep{xia2021trade,cernera2022tokens,weintraub2022flashbot,tran2025cook,gogol2025firstspammed,naviglio2026price,di2026deviations}. Our focus is complementary: rather than identifying a particular class of attacks, we characterize the dynamic incentive problem that arises when an economically active entity can accumulate valuable reputation while retaining the option to abandon its identity after misconduct.

\section{Reputation and Identity in Autonomous Agent Economies}
\label{sec:institutional}
This section introduces the concepts of reputational capital and discusses some issues related to the presence of autonomous agents as market participants. Finally, it discusses how the presence of autonomous agents is currently handled by existing blockchain protocols.

\subsection{Reputational Capital}

The starting point of our framework is that reputation affects future economic activity. A service provider with a stronger and more established reputation is more likely to be selected by counterparties, receive a larger number of requests, be assigned more valuable tasks, or intermediate greater economic flows. Reputation therefore generates future rents and can be interpreted as a stock of \emph{reputational capital}.

We distinguish between two dimensions of this capital. The first is the \emph{reputation level}, which summarizes how positively an entity has been evaluated. The second is the \emph{reputation maturity}, which captures the amount of observable evidence supporting that evaluation. This distinction is economically relevant because an identical reputation score may have very different implications depending on whether it is based on only a few observations or on a long and established interaction history.

We denote these two dimensions by $F$ and $M$, respectively. Their joint economic effect is summarized by a demand (or activity) function $v(F,M)$, which maps reputational capital into the volume of economically meaningful activity entrusted to the agent. We assume that both a higher reputation level and a more mature reputation record attract weakly more activity, so that $v(F,M)$ is increasing in both arguments.

The key mechanism is therefore:
\[
(F,M)
\quad \longrightarrow \quad
v(F,M)
\quad \longrightarrow \quad
\text{future economic value}.
\]

This relationship is deliberately general. The model does not require reputation to be summarized by a universal score, nor does it require a specific feedback architecture. What matters is that counterparties use observable information about past behavior to condition future economic interaction.

\subsection{Identity Persistence and Reset}

The disciplinary power of reputation depends not only on how reputation affects future demand, but also on how costly it is to lose the identity to which that reputation is attached. When identity is highly persistent and tied to an economically accountable principal, misconduct can generate consequences that extend beyond the loss of reputation itself. By contrast, when an agent can continue operating after misconduct without exposing an identifiable underlying controller to substantial legal, contractual, or economic consequences, reputational deterioration may constitute the main disciplinary cost generated by deviation.

We represent the consequences of misconduct through a \emph{post-deviation penalized state}. Following opportunistic behavior, the agent loses part or all of the reputational capital associated with its current position and moves to a lower-value state. This transition may correspond to different underlying situations. The agent may abandon its current identity and re-enter under a new one, or it may continue operating under the same identity with a substantially deteriorated reputation and a thinner effective track record. What matters for the model is not the specific implementation of the transition, but the fact that misconduct moves the agent from $(F,M)$ to a penalized state $(F_{\mathrm{reset}},M_{\mathrm{reset}})$.

The economic cost of this transition has two components. First, the agent loses the future economic value associated with the reputational capital destroyed by the deviation. Second, it incurs a direct cost $\kappa^R$ associated with restoring economically meaningful activity after misconduct. Depending on the environment, this cost may include the technical cost of establishing a new identity, the resources required to rebuild reputation, the loss of access to counterparties, or other frictions associated with returning to a valuable market position.

The relevant friction is therefore broader than the cost of identity creation itself. Even if the agent does not abandon its identity, or if creating a replacement identity is technically inexpensive, recovering from a sufficiently penalized reputational state may remain economically costly because demand must be rebuilt over time. This distinction is central to the incentive problem studied below. Reputational discipline is effective only if the value destroyed by misconduct, together with any direct post-deviation cost or financial penalty, is sufficiently large relative to the immediate gain from opportunistic behavior. When the agent can rapidly return to economically meaningful activity from the penalized state, and misconduct does not generate substantial consequences for an identifiable underlying principal, the deterrent effect of reputation is correspondingly weaker.

\subsection{Blockchain-Based Autonomous Agent Markets}
\label{sec:blockchain_application}

Permissionless blockchain systems provide a particularly transparent implementation of this environment. They combine publicly observable interaction histories with cryptographic identities that are not necessarily associated with a verified real-world principal. At the same time, new addresses or protocol-level identities can often be created at relatively low technical cost. Public transparency, therefore, does not automatically imply persistent accountability. A prominent example is \texttt{ERC-8004}, which introduces a modular infrastructure for autonomous agents based on three registries: an Identity Registry, a Reputation Registry, and a Validation Registry \citep{derossi2025erc8004}.

\paragraph{Identity Registry.}
The Identity Registry represents each agent through an ERC-721 token associated with an agent card. The token provides a persistent protocol-level identifier and records the blockchain address controlling the agent. The associated metadata can describe capabilities, service endpoints, and other agent-specific information. For the purposes of our framework, the relevant distinction is between \emph{protocol identity} and \emph{economic identity}: on-chain, \texttt{ERC-8004} provides only the former.

\paragraph{Reputation Registry.}
The Reputation Registry allows clients to attach publicly observable feedback to an agent identity. Feedback may include numerical evaluations as well as additional semantic information, while the standard does not impose a unique aggregation rule. In our model, the reputation state $F$ should not be interpreted as a protocol-level universal score. Instead, it represents the normalized reputation measure used by the relevant marketplace, application, or counterparty. Different environments may aggregate the same public feedback in different ways. The maturity state $M$ captures the amount or depth of observable evidence supporting that reputation.

\paragraph{Validation Registry.}
The Validation Registry allows additional attestations or proofs to be associated with an agent. These may include cryptographic proofs, hardware-backed attestations, or other mechanisms intended to strengthen trust in the agent's behavior or execution environment. From the perspective of our model, validation is complementary to reputation. A validation mechanism can reduce the expected gain from misconduct by making deviation harder, increasing the likelihood that misconduct is detected, or enabling financial penalties. We mathematically discuss the impact of one of such mechanisms: \emph{Staking and Slashing} (S\&S). This requires the agent to stake an amount proportional to the handled volume: in case of misbehavior, the stake is slashed.

\subsection{Agentic Commerce and Economic Flow}

Other protocols are gaining popularity in agentic commerce. \texttt{ERC-8183} defines a job-escrow mechanism in which a client funds a task, a provider performs the required service, and an evaluator determines whether the task is completed or rejected, resulting in payment release or refund \citep{crapis2026erc8183}. This creates a direct link between an agent's ability to attract tasks and the amount of economic value entrusted to it. The \texttt{x402} protocol provides a complementary mechanism for programmatic payments over HTTP. It allows software clients, including autonomous agents, to purchase access to APIs, data, computation, or other digital services without relying on a traditional account-based payment flow \citep{x402docs2026}. Recent work on agentic commerce emphasizes that these architectures introduce additional issues involving transaction authorization, payment-service atomicity, privacy, evaluator integrity, and inter-agent trust \citep{mao2026sok,li2026fiveattacks,li2026a402,stantchev2026hardeningx402}.

These protocols provide a concrete interpretation for the economic flow $v(F,M)$ used in the model. A better and more mature reputation may allow an agent to receive more requests, larger tasks, or greater payment flows. Under honest operation, this activity generates fee income. Under opportunistic behavior, the same flow may become the object of economic extraction. The key point is therefore that reputation has a dual effect. It increases the continuation value of remaining reputable, because it generates future business, but it may also increase the immediate amount of value available for opportunistic extraction. A highly reputable agent can therefore become both more valuable to preserve and more tempting to exploit.

\texttt{ERC-8004}, \texttt{ERC-8183}, and \texttt{x402} jointly provide a concrete example of this mechanism. \texttt{ERC-8004} supplies persistent protocol-level identity and reputation signals, while \texttt{ERC-8183} and \texttt{x402} provide channels through which economically meaningful task and payment flows can be generated. At the same time, the underlying infrastructure remains permissionless and does not necessarily enforce a persistent mapping between each agent identity and a verified real-world principal. The theoretical framework we develop below abstracts from these implementation details and studies the underlying economic mechanism directly.

%

\section{Model Environment}
\label{sec:model}

We now formalize the economic mechanism described in the previous sections. We consider a forward-looking economic agent repeatedly providing a service to a sequence of counterparties. Our model does not rely on a specific technological infrastructure. The agent accumulates reputational capital through repeated interaction. A stronger and more established reputation attracts a larger volume of economically meaningful activity and therefore generates future income. At the same time, the agent may sacrifice its current reputational position in exchange for an immediate opportunistic payoff and subsequently re-enter the market from a penalized state. The central economic trade-off is therefore between preserving reputational capital and extracting the value currently associated with it.

\subsection{State Variables and Control}

Let us consider a discrete-time model with equally spaced time increments and an uninterrupted flow of events. In the following, we denote the state space with $\mathcal{X} := [0,1]^2$. We denote the component-wise order on $\mathcal{X}$ with $\preceq$. That is, given $(F_1, M_1), (F_2, M_2) \in \mathcal{X}$, it holds $(F_1, M_1) \preceq (F_2, M_2)$ if and only if $F_1 \le F_2$ and $M_1 \le M_2$. We model the state of the system as a pair $\{(F_t, M_t)\}_{t\in\mathbb{N}}$. $F_t \in [0,1]$ denote the agent's \emph{average reputation score} at time $t$, and $M_t \in [0,1]$ denotes the \emph{maturity} or \emph{depth} of the agent's observable reputation record. Both are assumed to be normalized quantities within reliable boundaries.

The state variable $F_t$ summarizes the effective reputation measure relevant for economic decisions in the environment under consideration. In addition, $M_t$ captures how much publicly observable evidence supports that reputation.
We define $M_t$ as a normalized, exponentially weighted measure of the economically meaningful interaction evidence supporting the agent current reputation. Specifically, starting from $M_0 \in [0,1]$, $M_t$ is recursively defined as:
\begin{equation}
    M_{t+1} = (1-\rho) \, M_t+\rho\,\sigma(F_t,M_t), \qquad \rho \in [0,1], \qquad \sigma:\mathcal{X}\rightarrow[0,1]
\end{equation}
$\sigma$ can be thought of as the current normalized interaction signal, which enters with weight $\rho$; signals from earlier periods receive geometrically declining weights. Thus, $M_t$ captures the recency-weighted depth of the agent track record rather than its cumulative handled volume. Intuitively, a high reputation score supported by a very thin history of observations need not generate the same demand, nor the same incentive effects, as the same score supported by a long and well-established record.\\

The two state variables, therefore, play distinct but complementary roles. The variable $F_t$ captures \emph{how positively} the agent is evaluated on average, while $M_t$ captures \emph{how solidly established} that evaluation is. This distinction will be crucial below, because opportunistic incentives depend not only on the score itself, but also on how much demand that score is able to support in virtue of the depth of the underlying record.

Importantly, the framework does not impose a unique interpretation or aggregation rule for reputation. Different markets, platforms, or applications may process observable evaluations differently and assign different weights to the available information. The explicit dynamical specification introduced below allows us to study the incentive problem without committing to a particular technological implementation.

As discussed above, the central incentive problem studied in this paper is whether the agent finds it optimal to continue operating honestly or to opportunistically misbehave to exploit the entrusted volume related to its current reputation. At each period, the agent first chooses between two actions. It may continue operating honestly or execute a one-shot opportunistic deviation. We denote this binary decision by $h_t\in\{0,1\}$, where $h_t=1$ corresponds to honest continuation and $h_t=0$ to opportunistic deviation. In decentralized applications, such a deviation may take the form of a \emph{rug pull}.

When the agent chooses to deviate, it appropriates the economic value currently entrusted to it and subsequently damages its current reputation. The damage could result in a lower reputation value, caused by the negative feedback, or in the creation of a new identity. The latter may occur because misconduct becomes publicly observable or because the agent deliberately abandons its current identity and restarts from a penalized state.

Conditional on choosing honest continuation ($h_t=1$), the agent also selects a service-quality level $q_t\in[0,1]$. The variable $q_t$ represents the effort, computational resources, engineering practices, or technological standards devoted to service provision. Higher quality is more costly but better improves future reputation. The agent's control at time $t$ is therefore the pair $(h_t,q_t)$, although $q_t$ is relevant only when $h_t=1$.

\subsection{State Transition and Reputation Dynamics}
\label{sec:transition}

We now specify how the reputational state evolves following either honest continuation or opportunistic deviation. We adopt a deterministic mean-field approximation, so that the state at time $t+1$ is represented as a deterministic function of the current state and of the agent's action. This amounts to replacing the realized sequence of individual feedback events and interactions with their expected effect on the aggregate reputational state.

When the agent chooses honest continuation, $h_t=1$, it provides service quality $q_t$ and subsequently receives feedback. Let $s_t=s(q_t)\in[0,1]$ denote the feedback generated by quality $q_t$, where $s:[0,1]\rightarrow[0,1]$ maps service quality into the corresponding evaluation.

Contrastingly, the maturity of the reputation record evolves, by definition, as a function of the amount of observable interaction generated by the current state: $\sigma(F,M)$.
The honest transition (originating from the honest behavior, $h_t=1$) is therefore described by the following assumption.

\begin{assumption}[Honest transition]
\label{ass:update}
The honest transition $T:\mathcal{X}\times[0,1]\rightarrow\mathcal{X}$ is defined as
\[
T(F,M,q) := \big(T_F(F,q), T_M(F,M)\big)
\]
where
\[
T_F(F,q) := (1-\lambda) \, F+\lambda \, s(q), \qquad\qquad T_M(F,M) = (1-\rho) \, M+\rho\,\sigma(F,M),
\]
with $\lambda,\rho\in[0,1)$. The feedback function $s$ satisfies
\[
s'(q)>0, \qquad\qquad s''(q)\leq0.
\]

The interaction function $\sigma$ satisfies
\[
\sigma(0,M)=\sigma(F,0)=0,
\]
together with
\[
\partial_F\sigma(F,M)\geq0, \qquad\qquad \partial_M\sigma(F,M)\geq0,
\]
\[
\partial^2_{F,F}\sigma(F,M)\leq0, \qquad\qquad \partial^2_{M,M}\sigma(F,M)\leq0,
\]
and
\[
\partial^2_{F,M}\sigma(F,M)\geq0.
\]
\end{assumption}

The parameters $\lambda$ and $\rho$ govern the speed at which reputation and
maturity incorporate new information. Larger values imply that recent
outcomes have a stronger impact on the state, whereas smaller values generate
greater persistence. Thus, when $\lambda$ and $\rho$ are small, reputational
capital behaves as a slowly evolving stock accumulated through repeated
interaction. The assumption $s'(q)>0$ captures the fact that higher service quality
generates better feedback, while $s''(q)\leq0$ allows for diminishing returns
to quality. Similarly, the assumptions on $\sigma$ imply that stronger
reputation and a deeper existing track record generate more observable
activity, while the corresponding marginal effects may decline as the state
becomes more mature.\\

The consequences of opportunistic deviation need not coincide with the honest
transition. We therefore introduce a separate post-deviation transition:
\begin{assumption}[Post-deviation transition]
\label{ass:deviation_transition}
The post-deviation transition
\[
R:\mathcal{X}\rightarrow\mathcal{X}, \qquad\qquad R(F,M) := \big( R_F(F,M), R_M(F,M) \big),
\]
is continuous and order-preserving with respect to the componentwise order on
$\mathcal{X}$. That is, for any
$(F_1,M_1),(F_2,M_2)\in\mathcal{X}$ satisfying $(F_1,M_1) \preceq (F_2,M_2)$, we have $R(F_1,M_1) \preceq R(F_2,M_2)$
Moreover, $R$ is independent of the staking parameter $\kappa^S$.
\end{assumption}

This formulation allows the model to encompass different responses to misconduct. One possibility is that the agent remains active under the same identity after the deviation. In this case, misconduct may generate an adverse feedback signal and therefore reduce the reputation score without mechanically destroying the previously accumulated track record. For example, if deviation generates a fixed feedback score $s^{dev}$, with $s^{dev}$ close to zero, a natural specification is
\begin{equation}
\label{eq:R_same_identity}
R^{same}(F,M) := \left((1-\lambda) \, F+\lambda \, s^{dev}, \, (1-\rho) \, M+\rho\,\sigma(F,M) \right).
\end{equation}
The reputational punishment then operates mainly through the deterioration of $F$, which reduces subsequent economic activity through the volume function $v(F,M)$, while the maturity accumulated under the existing identity is preserved and may continue to evolve.

A second possibility is that sufficiently severe or verifiable misconduct forces the agent to abandon the compromised identity and re-enter the market under a new one. In this case, both the reputation level and the maturity of the public record are reset:
\begin{equation}
\label{eq:R_new_identity}
R^{new}(F,M) := \big( F_{\mathrm{reset}}, M_{\mathrm{reset}} \big).
\end{equation}
A fresh identity naturally corresponds to low values of $F_{\mathrm{reset}}$ and $M_{\mathrm{reset}}$. In particular, under the baseline specification below we allow for a small positive $M_{\mathrm{reset}}$, representing the minimal initial exposure or activity required for a new identity to enter the market. This avoids making the zero-maturity state mechanically absorbing. The agent then loses not only its previous reputation score but also the accumulated depth of the track record supporting that reputation.

The second mechanism therefore imposes a stronger form of reputational discipline. Under continued operation $R^{same}$, the agent retains part of the reputational capital accumulated before misconduct, particularly through the maturity dimension. Under identity replacement $R^{new}$, by contrast, the entire economic value associated with the old reputational state has to be rebuilt. This distinction highlights the potential importance of explicit signaling, verification, and enforcement mechanisms capable of associating misconduct with the corresponding market identity and, when appropriate, preventing the compromised identity from simply preserving its accumulated reputational position.

Rather than imposing either mechanism at the outset, the theoretical analysis below is formulated in terms of the general post-deviation map $R$. The two cases in Eqs.~\eqref{eq:R_same_identity} and \eqref{eq:R_new_identity} will then be studied as economically meaningful special cases and compared to stress the importance of explicit signaling.

\subsection{Economic Flow and Payoffs}

We model the agent's economic flow $v_t$, which is the volume of economically meaningful activity intermediated by the agent, as a function $v(F_t, M_t):\mathcal{X} \rightarrow \mathbb{R}_+$ of both reputation and maturity.

\begin{assumption} \label{ass:volume}
The volume function $v$ satisfies the following:
\[
    v(0,M) = v(F,0) = 0 \quad  \partial_F v(F,M), \; \partial_M v(F,M) \ge 0, \quad \partial^2_{F,F} v(F,M), \; \partial^2_{M,M} v(F,M) \le 0, \quad \partial^2_{F,M} v(F,M)\ge 0
\]
\end{assumption}

The function $v$ captures the reduced-form response of the market to the publicly observable reputation signal. $v(0,M) = v(F,0) = 0$ models the fact that reputation without maturity is worthless, as well as a mature, bad reputation. The assumption on the first derivatives describes the fact that both a higher average reputation and a more mature reputation record increase expected activity. The assumption on the second derivatives describes the concavity of $v$ when viewed as a function of only the reputation or the maturity. The assumption on the mixed derivative means that average reputation and reputation maturity are complementary: the demand effect of a higher score is stronger when the score is backed by a deeper track record.\\

The boundary condition $v(F,0)=0$ abstracts from exogenous discovery or exploration demand. Entry is therefore represented below through a strictly positive reset maturity $M_{\mathrm{reset}}>0$. Allowing a baseline flow independent of reputation would relax this assumption without changing the central incentive mechanism.

Mathematically, the instantaneous reward is
\begin{equation}
\pi(F_t,M_t, h_t, q_t) := \pi^{dev} (F_t,M_t) \, \chi_{\{h_t = 0\}} + \pi^{cont} (F_t,M_t,q_t) \, \chi_{\{h_t = 1\}}
\end{equation}
where $\chi$ is the indicator function, $\pi^{dev} (F_t,M_t)$ is the reward generated by deviation, and $\pi^{cont} (F_t,M_t,q_t)$ is the reward from continuing to behave honestly. Specifically, $\pi^{dev} (F_t,M_t)$ is described as:
\begin{equation} \label{eq:booty}
\pi^{dev} (F_t,M_t) := (\theta - \kappa^S) \, v(F_t,M_t) - \kappa^R
\end{equation}
$\theta\in(0,1]$ converts order flow into extractable value and can be interpreted as the average amount of user funds or economic surplus that the agent can appropriate per unit of activity. A larger $\theta$ therefore corresponds to environments in which each unit of trust-induced activity carries more stealable value. On the other side, there are two kinds of loss suffered by the malicious agent. There is a fixed component, $\kappa^R \ge 0$, related to re-entering the system with a new identity, capturing setup costs and the broader economic burden of rebuilding activity under a new or compromised identity, for example, rebuilding the reputation required to attract economically meaningful demand. $\kappa^R = 0$ represents the case where the agent chooses not to restore the identity, continuing to use it with a degraded reputation. Additionally, there is a component proportional to the handled volume that the market, platform, protocol, or contracting environment could require to stake and, eventually, slash. This is represented by $\kappa^S \in [0,\theta)$.  $\kappa^S$ is the effect of the slash in the S\&S policy. Therefore, $\kappa^S=0$ if and only if the protocol decides not to adopt this policy. Moreover, we restrict our analysis to $\kappa^S < \theta$, as it is the only economically relevant case.\\

Contrastingly, when the agent chooses to behave honestly, it has to set the quality level $q_t \in [0,1]$ that represents the effort (and the financial cost, computational resources, engineering practices, or technological standards) used to deliver the service. The upper bound $1$ denotes the maximal feasible service quality or technological standard that the agent can implement in one period. Quality is costly to provide but improves user outcomes and, therefore, future reputation. The cost is described by a function $c:q \in [0,1] \rightarrow [0,1]$. Furthermore, when the protocol chooses to implement the S\&S policy, the agent has to stake a quantity $\kappa^S v(F_t, M_t)$ computed as a fraction of the volume. This leads to an opportunity cost described by the function $c^o : \kappa^S v(F_t, M_t) \in [0,1] \rightarrow [0,1]$. Regarding the incomes, we assume the agent earns a fixed fee $f\in(0,\theta)$ per unit of order flow.

\begin{assumption} \label{ass:cost}
The function $c$ is increasing and strictly convex, while $c^o$ is increasing and convex. Both functions are equal to $0$ when evaluated at $0$.
Moreover, $(c^o)'(v) < f, \forall v$\footnote{This, in turn, implies $c^o(v) < v$. This technical assumption guarantees some desirable properties of the value function, such as the monotonicity in $F$ and $M$, and the fact that it cannot become negative.}.
\end{assumption}

Thus, under honest operation, the per-period profit is given by
\begin{equation}
\pi^{cont} (F_t,M_t,q_t) := f\,v(F_t,M_t) - c^o\left( \kappa^S \, v(F_t, M_t) \right) - c(q_t)
\label{eq:flowprofit}
\end{equation}

This specification encodes the economic mechanism of interest: reputation affects the scale of business through the state $(F_t,M_t)$ and therefore the current volume $v(F_t,M_t)$, while the agent can invest in quality to influence future reputation at a direct cost. Importantly, in this baseline specification, quality affects demand only through future reputation and not directly through contemporaneous demand. In other words, the chosen effort level affects future economic opportunities only through reputation. Furthermore, the opportunity cost makes the honest behavior less attractive. The protocol has, therefore, interests to reduce $c^o\left( \kappa^S v(F_t, M_t) \right)$, for a given $\kappa^S$. This may be accomplished by requiring the staking in illiquid tokens. Additionally, the protocol could remunerate the staked amount -- e.g., by re-investing it. However, a detailed analysis of the approaches to reduce the opportunity cost faced by the agent is beyond the aim of this paper.

Before introducing the dynamic evolution of reputation, it is useful to study the agent’s myopic incentive to deviate by comparing the current-period payoff from opportunistic deviation, $\pi^{dev}$, with that from honest continuation, $\pi^{cont}$. This comparison does not determine the optimal dynamic policy, since it ignores the future value of preserving or rebuilding reputational capital. Nevertheless, it provides a useful instantaneous benchmark for understanding how the model parameters affect the immediate attractiveness of opportunistic behavior. Moreover, assuming the future value coming from the honest behavior is greater than the future value after the deviation\footnote{The following section largely discusses this point.}, this comparison provides us with a necessary condition for the deviation to be profitable.

For a given state $(F,M)$ and a given quality choice $q$, the difference between the current-period deviation payoff and the current-period continuation payoff is
\begin{align}
\pi^{dev}(F,M)-\pi^{cont}(F,M,q) &= (\theta-\kappa^S) \, v(F,M)-\kappa^R - \left[ f\,v(F,M)-c^o\left(\kappa^S \, v(F,M)\right)-c(q) \right] \nonumber\\
&= (\theta-f-\kappa^S) \, v(F,M) -\kappa^R +c^o\left(\kappa^S \, v(F,M)\right) +c(q)
\label{eq:0418_1100}
\end{align}

Since $c(q)\geq 0$, we obtain the lower bound
\begin{equation}
\pi^{dev}(F,M)-\pi^{cont}(F,M,q) \geq (\theta-f-\kappa^S) \, v(F,M) -\kappa^R +c^o\left(\kappa^S \, v(F,M)\right).
\end{equation}
Hence, a sufficient condition for the instantaneous payoff from deviation to dominate the payoff from honest operation at state $(F,M)$ is
\begin{equation}
(\theta-f-\kappa^S) \, v(F,M) -\kappa^R +c^o\left(\kappa^S \, v(F,M)\right) \geq 0.
\end{equation}

To obtain a transparent benchmark, suppose that the opportunity cost is linear,
\begin{equation}
c^o(x)=a\,x, \qquad a\leq f
\end{equation}
The condition becomes
\begin{equation} \label{eq:instantaneous_deviation}
\left[ \theta-f-\kappa^S(1-a) \right] \, v(F,M) -\kappa^R > 0.
\end{equation}

The coefficient multiplying $v(F,M)$ determines how the instantaneous attractiveness of deviation varies with the amount of economic activity entrusted to the agent. As $\kappa^R \ge 0$, it immediately turns out that Eq. \eqref{eq:instantaneous_deviation} can be satisfied only if the $v(F,M)$ coefficient is positive: $\theta-f-\kappa^S(1-a)>0$. Since $v$ is increasing in both reputation and maturity, its maximum is attained at $(F,M)=(1,1)$. Hence, there exists at least one state for which the sufficient condition is satisfied whenever $\left[ \theta-f-\kappa^S(1-a) \right]v(1,1) -\kappa^R \geq 0$. Equivalently,
\begin{equation}
\kappa^S \leq \bar{\kappa}^S := \left[ \theta-f-\frac{\kappa^R}{v(1,1)} \right] \, (1-a)^{-1}.
\label{eq:0512_2047}
\end{equation}

The threshold $\bar{\kappa}^S$ should be interpreted as a \emph{myopic staking threshold}. When $\kappa^S<\bar{\kappa}^S$, sufficiently large values of $v(F,M)$ may make the instantaneous deviation payoff exceed the current-period payoff from honest operation. Thus, honest behavior cannot be inferred solely on the basis of the current-period payoff comparison. Conversely, when $\kappa^S>\bar{\kappa}^S$, the sufficient condition above cannot be satisfied at any state.

Several comparative-static implications are immediate. A larger extractable-value parameter $\theta$ raises $\bar{\kappa}^S$, because a larger fraction of the entrusted economic flow can be appropriated through deviation. A larger fee rate $f$ lowers the threshold by increasing the current return to honest operation. A larger fixed post-deviation cost $\kappa^R$ also lowers $\bar{\kappa}^S$, since misconduct becomes less attractive even before accounting for future reputational losses. Finally, a larger opportunity-cost parameter $a$ increases the threshold because staking becomes more costly under honest operation.

Importantly, Eq.~\eqref{eq:0512_2047} is not a sufficient characterization of the agent's optimal dynamic policy. The instantaneous comparison ignores the future income associated with preserving the current reputation, as well as the economic consequences of moving to a penalized reputational state after deviation. These intertemporal effects are introduced through the state-transition dynamics below and are subsequently incorporated into the agent's dynamic optimization problem.

\section{Dynamic Optimization and Reputational Discipline}
\label{sec:dynamic}

This section formulates the agent's dynamic optimization problem
using the Bellman framework and establishes its well-posedness. In
particular, we prove the existence and uniqueness of the value function and
show that it is monotonic in the reputational state, establishing reputation
as an economically valuable asset.

\subsection{Value Function and Dynamic Objective}

The agent's objective is to maximize the total economic value generated over its lifetime. Since interactions with counterparties are repeated, the relevant decision is inherently dynamic: the agent must compare the immediate benefit of its current action with its impact on future reputation and, therefore, on future earning opportunities. At every point in time, the agent faces two decisions. First, it decides whether to continue operating honestly or to behave opportunistically. Second, conditional on remaining honest, it chooses the level of service quality to provide. Both decisions affect the future evolution of the reputation state and therefore the stream of future profits.

Starting from state $(F_0,M_0)=(F,M)$, the agent chooses a sequence of controls
\[
\{(h_t,q_t)\}_{t\in\mathbb{N}},
\]
where $h_t\in\{0,1\}$ determines whether the agent continues honestly or deviates, and $q_t\in[0,1]$ is relevant only when $h_t=1$. The control at time $t$ is supposed to be chosen after observing the state at time $t$. Nonetheless, we neglect this technical aspect due to the deterministic nature of our setting. Future profits are discounted by a factor $\beta\in(0,1)$. The coefficient $\beta$ depends and can be calibrated on the client arrival frequency -- $\beta$ is closer to $1$ as the arrival frequency increases.

The value function, therefore, is defined as the maximum discounted lifetime payoff attainable from a given reputation state,
\begin{equation}
\label{eq:valuefunction}
V(F,M) := \sup_{\{(h_t,q_t)\}_{t\in\mathbb{N}}} \sum_{t=0}^{\infty} \beta^t \, \pi(F_t,M_t,h_t,q_t) \qquad\qquad with \ (F_{t+1}, M_{t+1}) = \begin{cases} T(F_t, M_t, q_t) & if \ h_t = 1 \\ R(F_t, M_t) & if \ h_t = 0 \end{cases}
\end{equation}

The dynamic programming formulation naturally decomposes the current decision into two alternative branches: deviating immediately or continuing honest operation. The value associated with each branch consists of two components: the current-period payoff generated by the chosen action and the discounted value of the reputation state reached in the following period.

To characterize the decision at the current state, it is useful to separate the deviation and continuation branches. For any candidate future-value function $W$, define
\begin{align}
V^{dev}_W(F,M) & := (\theta-\kappa^S) \, v(F,M) -\kappa^R + \beta \, W\big(R(F,M)\big) \label{eq:deviation_branch} \\
V^{cont}_W(F,M) & := f\,v(F,M) - c^o\left(\kappa^S \, v(F,M)\right) + \max_{q\in[0,1]}U_W(F,M;q) \label{eq:continuation_branch} \\
U_W(F,M;q) & := -c(q) + \beta \, W\big(T(F,M,q)\big). \label{eq:continuation_kernel}
\end{align}
The value function is therefore written in terms of the Bellman equation:
\begin{equation} \label{eq:valuefunction_bellman}
V(F,M) = \max \left\{ V^{dev}_V(F,M), V^{cont}_V(F,M) \right\}.
\end{equation}

The deviation branch combines the immediate payoff from opportunistic behavior with the continuation value of future activity after misconduct. The first two terms capture the current net gain from deviation, including the extractable value of the interaction, the loss of the required stake, and any additional fixed penalty. The continuation term $
\beta \, W\big(R(F,M)\big)$ represents the discounted lifetime value of operating from the post-deviation state. Thus, if misconduct resets the agent to an initial reputation state, this branch includes both the one-shot gain from deviation and the value of rebuilding reputation thereafter.

The continuation branch instead describes the value of preserving honest operation. It consists of the current fee income, the opportunity cost of locking the required stake, the cost of providing service quality, and the discounted continuation value associated with the reputation state reached through the honest transition \(T(F,M,q)\). The maximization over \(q\) captures the trade-off between the current cost of supplying higher quality and the future value generated by a stronger reputation.

Notice that the previously discussed fixed-reset formulation is nested in Eq.~\eqref{eq:deviation_branch}. If
\[
R(F,M) \equiv (F_{\mathrm{reset}},M_{\mathrm{reset}}),
\]
then
\[
W(R(F,M)) = W(F_{\mathrm{reset}},M_{\mathrm{reset}})
\]
is independent of the current state. By contrast, under continued operation after misconduct, the future value following deviation remains state-dependent.

\subsection{Solvability and Shape Properties of the Bellman Problem}
\label{sec:solv}

Let $C_b(\mathcal{X})$ denote the Banach space of bounded continuous functions from $\mathcal{X}$ to $\mathbb{R}$, endowed with the supremum norm
\[
\|W\|_\infty = \sup_{(F,M)\in\mathcal{X}}|W(F,M)|.
\]
Since $\mathcal{X}$ is compact, every $W\in C_b(\mathcal{X})$ is also uniformly continuous. For any $W\in C_b(\mathcal{X})$, define the Bellman operator $\mathcal{B}:C_b(\mathcal{X})\rightarrow C_b(\mathcal{X})$ by
\begin{equation}
\label{eq:bellman_operator}
(\mathcal{B}W)(F,M) := \max \left\{ V^{dev}_W(F,M), V^{cont}_W(F,M) \right\}.
\end{equation}

The following proposition establishes that the dynamic optimization problem is well posed. Specifically, it proves the existence and uniqueness of the value function together with its main shape properties, namely monotonicity with respect to reputational capital and the staking requirement, and non-negativity. These results provide the theoretical foundation for the subsequent analysis.

\begin{proposition}[Existence, Monotonicity, and Sign]
\label{prop:0509_0709}
Assume \ref{ass:update}--\ref{ass:cost}. Then there exists a unique, uniformly continuous value function $V$ satisfying Eq.~\eqref{eq:valuefunction_bellman}. Furthermore:
\begin{itemize}[label=$\triangleright$]
    \item $V$ is non-decreasing in $F$ and $M$;
    \item $V$ is non-increasing in $\kappa^S$;
    \item $V(F,M)\geq0$ for every $(F,M)\in\mathcal{X}$.
\end{itemize}
\end{proposition}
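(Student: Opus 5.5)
The plan is to apply the Banach fixed-point theorem to the Bellman operator $\mathcal{B}$ on $C_b(\mathcal{X})$, and then obtain each shape property by checking that $\mathcal{B}$ maps a closed subset of $C_b(\mathcal{X})$ with that property into itself. Since such subsets are closed under uniform limits, the unique fixed point inherits the property.

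\textbf{Step 1: $\mathcal{B}$ maps $C_b(\mathcal{X})$ into itself.} I first check that the primitives are continuous: $v$ (differentiable under Assumption~\ref{ass:volume}), $c^o$ (convex, hence continuous on the interior, and continuous at the endpoints by monotonicity), $s$, $\sigma$, and $R$ (Assumption~\ref{ass:deviation_transition}). I take $c$ to be continuous on $[0,1]$, as the paper implicitly does. Then $V^{dev}_W$ is continuous for $W\in C_b(\mathcal{X})$. For $V^{cont}_W$, the map $(F,M,q)\mapsto U_W(F,M;q)$ is jointly continuous on the compact set $\mathcal{X}\times[0,1]$, because $T$ is continuous and maps into $\mathcal{X}$. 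Berge's maximum theorem then makes $\max_q U_W$ continuous in $(F,M)$. Boundedness follows from compactness.

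\textbf{Step 2: contraction.} The operator $\mathcal{B}$ satisfies Blackwell's conditions. It is monotone in $W$, since both branches are non-decreasing in $W$. It also discounts: $\mathcal{B}(W+a)=\mathcal{B}W+\beta a$. Alternatively, I would estimate directly $|\max_q U_W-\max_q U_{W'}|\le\beta\|W-W'\|_\infty$ and use $|\max\{a,b\}-\max\{a',b'\}|\le\max\{|a-a'|,|b-b'|\}$. Either way $\mathcal{B}$ is a $\beta$-contraction, so it has a unique fixed point $V\in C_b(\mathcal{X})$, which is uniformly continuous by compactness. To identify this fixed point with the sup in \eqref{eq:valuefunction}, I would use the standard argument: rewards are bounded and $\beta<1$, so tails vanish, and the max over $q$ is attained by compactness, so the verification theorem applies.

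\textbf{Step 3: monotonicity in $(F,M)$.} I show that $\mathcal{B}$ preserves the closed set of functions that are non-decreasing with respect to $\preceq$. Take $W$ monotone and $(F_1,M_1)\preceq(F_2,M_2)$.
\begin{itemize}
    \item In the deviation branch, $(\theta-\kappa^S)v$ is non-decreasing because $\theta>\kappa^S$ and $v$ is monotone. Also $W\circ R$ is non-decreasing because $R$ is order-preserving.
    \item In the continuation branch, $T_F$ is increasing in $F$. $T_M$ is non-decreasing in $M$ since $\partial_M T_M=(1-\rho)+\rho\partial_M\sigma\ge0$, and non-decreasing in $F$ since $\partial_F\sigma\ge0$. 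So $T(\cdot,\cdot,q)$ is order-preserving for each $q$, and $\max_q U_W$ is non-decreasing.
    \item The flow term $g(v)=fv-c^o(\kappa^S v)$ is non-decreasing in $v$, because $g'(v)=f-\kappa^S (c^o)'(\kappa^S v)\ge f-(c^o)'>0$ using $\kappa^S<\theta\le1$ and Assumption~\ref{ass:cost}. Composed with the monotone $v$, this gives monotonicity.
\end{itemize}
The maximum of the two monotone branches is monotone.

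\textbf{Step 4: monotonicity in $\kappa^S$ and sign.} Write $\mathcal{B}_{\kappa}$ for the operator with staking parameter $\kappa$. For $\kappa_1\le\kappa_2$ and any $W$, I claim $\mathcal{B}_{\kappa_1}W\ge\mathcal{B}_{\kappa_2}W$ pointwise. The deviation payoff $(\theta-\kappa)v$ decreases in $\kappa$ since $v\ge0$, and $R$ does not depend on $\kappa$. The continuation payoff decreases in $\kappa$ because $c^o$ is increasing. Iterating from a common starting point and using the monotonicity of $\mathcal{B}_{\kappa_1}$ in $W$ gives $\mathcal{B}^n_{\kappa_1}W\ge\mathcal{B}^n_{\kappa_2}W$ for all $n$, and passing to the limit yields $V_{\kappa_1}\ge V_{\kappa_2}$.

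For the sign, I show that $\mathcal{B}$ preserves the closed cone $\{W\ge0\}$. In the continuation branch, choosing $q=0$ gives $V^{cont}_W\ge fv-c^o(\kappa^S v)+\beta W(T)\ge0$, because $c(0)=0$ and $c^o(\kappa^S v)\le \kappa^S v\cdot\sup (c^o)'<fv$ (using $c^o(0)=0$). So $\mathcal{B}W\ge V^{cont}_W\ge0$, and the fixed point obtained by iterating from $W\equiv0$ is non-negative.

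\textbf{Main obstacle.} I expect the delicate point to be Step 1's continuity of $\max_q U_W$, together with the continuity and domain assumptions on $c$ and $c^o$. The paper states $c^o$ on $[0,1]$, so I must check that $\kappa^S v\in[0,1]$, or else extend $c^o$. The monotonicity of the net honest flow $g$, which relies on the footnoted assumption $(c^o)'<f$, is the other point to handle carefully.
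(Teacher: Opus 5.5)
Your proposal is correct and follows the same route as the paper. Both arguments use the $\beta$-contraction of $\mathcal{B}$ on $C_b(\mathcal{X})$ with Banach's theorem, closed invariant sets for order-monotonicity and non-negativity (via $g'(x)=f-\kappa^S(c^o)'(\kappa^S x)>0$ and the feasible choice $q=0$), and the pointwise ordering $\mathcal{B}_{\kappa_1}W\ge\mathcal{B}_{\kappa_2}W$ with operator monotonicity for $\kappa^S$; your verification step identifying the fixed point with the supremum in Eq.~\eqref{eq:valuefunction} is an addition the paper omits.

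Two harmless slips. First, $c^o(\kappa^S v)<fv$ fails at $v=0$, but only the non-strict bound is needed. Second, monotonicity plus convexity does not give continuity of $c^o$ at the right endpoint; use the differentiability implicit in $(c^o)'<f$ instead.
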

The proof of this result is in Appendix \ref{app_proofs_dynamic}.

\begin{remark}
\label{rem:general_R}
Proposition \ref{prop:0509_0709} does not require a particular post-deviation mechanism. The existence and uniqueness result requires only continuity of the post-deviation transition, while the monotonicity of the value function in reputational capital additionally follows from the order-preserving property of $R$.
\end{remark}

\begin{remark}
    The monotonicity of $V$ as a function of $F$ and $M$ gives us a very important economic message we were looking for: the reputation is economic capital. More reputation means agent's larger value. This is the baseline for interpreting the results of our model.
\end{remark}

\subsection{Temptation Function and Post-deviation Mechanisms}
\label{sec:temptation}

The central economic question of the model is whether the agent prefers to
preserve its reputational capital or to exploit the economic value currently
associated with it through opportunistic deviation. We therefore define the
continuation region $\mathcal{C}$, the deviation region $\mathcal{D}$, and the
indifference region $\mathcal{I}$ as
\begin{align}\label{eq:regions}
\mathcal{C}
&=
\left\{
(F,M)\in\mathcal{X}
\;:\;
V^{dev}(F,M)\le V^{cont}(F,M)
\right\},
\\
\mathcal{D}
&=
\left\{
(F,M)\in\mathcal{X}
\;:\;
V^{dev}(F,M)\geq V^{cont}(F,M)
\right\},
\nonumber
\\
\mathcal{I}
&=
\left\{
(F,M)\in\mathcal{X}
\;:\;
V^{dev}(F,M)=V^{cont}(F,M)
\right\}
=
\mathcal{C}\cap\mathcal{D}.
\nonumber
\end{align}

We summarize the relative attractiveness of the two actions introducing the \emph{temptation function}
\begin{equation}
\label{eq:temptation}
\Delta(F,M) := V^{dev}(F,M)-V^{cont}(F,M).
\end{equation}
Thus,
\[
\mathcal{C} = \Delta^{-1}\big((-\infty,0]\big), \qquad \mathcal{D} = \Delta^{-1}\big([0,\infty)\big),
\]
while
\[
\mathcal{I} = \Delta^{-1}(\{0\}).
\]

Using Eqs.~\eqref{eq:deviation_branch}--\eqref{eq:continuation_kernel}, the
temptation function can be written as
\begin{align}
\Delta(F,M) =&\;
(\theta-\kappa^S-f) \, v(F,M)
-\kappa^R
+
c^o\big(\kappa^S \, v(F,M)\big)
\nonumber\\
&+
\beta \, V\big(R(F,M)\big)
-
\max_{q\in[0,1]}
\left\{
-c(q)
+
\beta \, V\big(T(F,M,q)\big)
\right\}.
\label{eq:temptation_explicit}
\end{align}

Equation \eqref{eq:temptation_explicit} provides a transparent decomposition of the temptation to deviate. The first line captures the immediate consequences of misconduct: $(\theta-\kappa^S-f) \, v(F,M)$ is the net value that can be extracted through deviation, $-\kappa^R$ is the direct reputational penalty, and $c^o(\kappa^S \, v(F,M))$ is the operational cost associated with the stake. The second line captures the dynamic trade-off: $\beta \, V(R(F,M))$ is the continuation value after deviation, while the maximization term is the value of remaining honest, accounting for the optimal quality choice. Reputation, therefore, affects temptation through two channels: it increases both the current value available for opportunistic extraction and the future value that is forfeited by deviating. The current reputation allows us to understand when the reputational capital is preserved and when it is abandoned for deviation.

\paragraph{Comparison of post-deviation mechanisms.}

The general formulation allows us to isolate the economic role of the institutional response to misconduct. To make this comparison precise, we introduce an auxiliary temptation function. For any candidate future-value function $W$ and post-deviation transition $R$, define
\begin{equation}
\Delta_W^R(F,M) := V^{dev}_{W,R}(F,M) - V^{cont}_W(F,M).
\end{equation}
Unlike the equilibrium temptation function $\Delta$, which is evaluated at
the model's value function $V$, $\Delta_W^R$ holds the candidate continuation
value $W$ fixed and makes the post-deviation rule $R$ explicit. This allows us
to compare alternative responses to misconduct while varying only the
post-deviation transition.

\begin{proposition}[Post-deviation comparison]
\label{prop:R_temptation_comparison}
Let $W$ be non-decreasing in $F$ and $M$. Then, $\forall (F,M)$ such that:
\begin{equation} \label{eq:0821_2318}
    F \ge \frac{F_{reset} - \lambda \, s^{dev}}{1 - \lambda} \qquad\qquad M \ge \frac{M_{reset} - \rho \, \sigma(F,M)}{1 - \rho}
\end{equation}
we have:
\begin{equation}
\Delta_W^{same}(F,M) - \Delta_W^{new}(F,M) = \beta \, \left[ W\big(R^{same}(F,M)\big) - W\big(R^{new}(F,M)\big) \right] \geq0
\end{equation}
\end{proposition}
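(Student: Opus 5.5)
The plan is to prove the equality by a direct cancellation and the inequality through monotonicity of $W$ with respect to the componentwise order $\preceq$.

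\textbf{Step 1: the equality.} Write out both auxiliary temptation functions from the definition $\Delta_W^R = V^{dev}_{W,R} - V^{cont}_W$, using Eq.~\eqref{eq:deviation_branch} with $R$ made explicit. The continuation branch $V^{cont}_W(F,M)$ does not involve $R$, so it is the same term in $\Delta_W^{same}$ and $\Delta_W^{new}$. In the deviation branch, the immediate terms $(\theta-\kappa^S)\,v(F,M)-\kappa^R$ are also the same in both. I will treat $\kappa^R$ as a common parameter here, since the statement compares the transitions alone, so I will not set $\kappa^R=0$ for $R^{same}$ unless the paper requires it. Subtracting, everything cancels except $\beta\,[W(R^{same}(F,M))-W(R^{new}(F,M))]$.

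\textbf{Step 2: the inequality.} Because $\beta>0$ and $W$ is non-decreasing in each argument, $W$ is order-preserving for $\preceq$ on $\mathcal{X}$. It therefore suffices to show $R^{new}(F,M)\preceq R^{same}(F,M)$, that is, $(F_{reset},M_{reset})\preceq R^{same}(F,M)$. By Eq.~\eqref{eq:R_same_identity} this amounts to two scalar inequalities:
\begin{equation*}
(1-\lambda)F+\lambda s^{dev}\ge F_{reset}, \qquad (1-\rho)M+\rho\,\sigma(F,M)\ge M_{reset}.
\end{equation*}
Since $\lambda,\rho\in[0,1)$, we have $1-\lambda>0$ and $1-\rho>0$. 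Rearranging each inequality (dividing by $1-\lambda$, resp.\ $1-\rho$) gives exactly the two conditions in \eqref{eq:0821_2318}. Hence $(F_{reset},M_{reset})\preceq R^{same}(F,M)$, and monotonicity of $W$ yields the claim.

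\textbf{Where care is needed.} There is no real analytic obstacle. The point to check is that $W(R^{same}(F,M))$ is well defined, i.e.\ that $R^{same}$ maps $\mathcal{X}$ into $\mathcal{X}$. This holds because $R^{same}(F,M)$ is a convex combination of points in $[0,1]$, given $s^{dev}\in[0,1]$ and $\sigma\in[0,1]$.

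Two remarks on how to read the condition:
\begin{itemize}
\item The $M$-condition depends on $(F,M)$ through $\sigma$. It is an implicit condition, but it is exactly the one required, so no further work is needed.
\item The strict positivity of $1-\lambda$ and $1-\rho$ is what makes the rearrangement legitimate.
\end{itemize}
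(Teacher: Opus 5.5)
Your proposal is correct and follows the same route as the paper. The immediate deviation payoff (with a common $\kappa^R$) and the continuation branch cancel, leaving $\beta\,[W(R^{same}(F,M))-W(R^{new}(F,M))]$; the hypotheses say exactly $R^{new}(F,M)\preceq R^{same}(F,M)$, and monotonicity of $W$ gives the sign. Your explicit rearrangement using $1-\lambda>0$ and $1-\rho>0$ spells out the step the paper compresses into ``by the hypothesis in the statement.''
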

The proof of this result is in Appendix \ref{app_proofs_dynamic}.

\begin{remark}
Proposition \ref{prop:R_temptation_comparison} isolates the direct effect of the post-deviation mechanism. When the reputational capital is sufficiently developed, holding the continuation-value environment fixed, preserving a larger fraction of reputational capital after misconduct makes opportunistic deviation more attractive. Identity replacement, therefore, provides stronger reputational discipline precisely because it severs a larger fraction of the link between the pre-deviation state and future economic value.

The proposition deliberately compares the two mechanisms for a common candidate value function $W$. In the fully endogenous equilibrium comparison, the value function itself generally differs across the two environments -- that is, $V^{new}$ obtained from $R^{new}$ is different than $V^{same}$ coming from $R^{same}$. Therefore, a pointwise ranking of $\Delta^{same}$ and $\Delta^{new}$ at their respective fixed points does not follow from Proposition \ref{prop:R_temptation_comparison} alone, as it assumes $W$ to be the same both in the $R^{new}$ and $R^{same}$ cases. The result instead identifies the direct mechanism through which post-deviation reputational persistence affects the incentive to deviate.
\end{remark}

Proposition \ref{prop:R_temptation_comparison} shows that the effectiveness of a reputation system depends not only on how reputation is accumulated, but also on how it is treated following misconduct. Policies that force an agent to restart under a new identity destroy a larger fraction of its accumulated reputational capital, thereby reducing the continuation value available after deviation and strengthening the incentive to remain honest. By contrast, if a substantial part of the reputation survives misconduct, opportunistic behavior becomes more attractive because the agent retains a larger fraction of its future earning opportunities.


\subsection{Monotonicity of the Temptation}
We previously defined the temptation function $\Delta$. Now, we study its monotonicity in the reputational capital -- that is, under which conditions it holds $(F_1, M_1) \preceq (F_2, M_2) \implies$ $\Delta(F_1, M_1) \le \Delta(F_2, M_2)$ or $\Delta(F_1, M_1) \ge \Delta(F_2, M_2)$. To achieve this goal, we first need to derive conditions for the Lipschitz continuity of the value function. To this end, we impose additional regularity assumptions on the state-transition maps and derive Lipschitz bounds for the value function.

\begin{assumption}[Lipschitz post-deviation transition]
\label{ass:R_lipschitz}
We restrict the general post-deviation map of Assumption \ref{ass:deviation_transition} to the class $R(F,M) = \big( R_F(F), R_M(F,M) \big)$ such that $R_F$ and $R_M$ are Lipschitz continuous. In particular, there exist non-negative constants $L_{R,F}$, $L_{R,MF}$, and $L_{R,MM}$ such that
\begin{align*}
|R_F(F_1)-R_F(F_2)| &\leq L_{R,F}|F_1-F_2|, \\
|R_M(F_1,M)-R_M(F_2,M)| &\leq L_{R,MF}|F_1-F_2|, \\
|R_M(F,M_1)-R_M(F,M_2)| &\leq L_{R,MM}|M_1-M_2|.
\end{align*}
\end{assumption}

We note that assumption \ref{ass:R_lipschitz} is satisfied by $R^{new}$ and, when $\sigma$ is Lipschitz continuous, also by $R^{same}$. The following result provides the Lipschitz characterization.

\begin{proposition}[Lipschitz Continuity]
\label{prop:lip}
Assume \ref{ass:update}--\ref{ass:cost} and Assumption \ref{ass:R_lipschitz}. Assume furthermore that $v$ and $\sigma$ are Lipschitz continuous with constants $L_v$ and $L_\sigma$, respectively. Define $A_M := 1-\rho+\rho L_\sigma$. Suppose that
\begin{equation}
\label{eq:0508_1500}
\beta A_M<1, \qquad\qquad \beta L_{R,MM}<1, \qquad\qquad \beta L_{R,F}<1.
\end{equation}

Then the value function $V$ is Lipschitz continuous separately in $M$ and
$F$. In particular, valid Lipschitz constants are
\begin{align}
\label{eq:0508_1501}
L_M &= L_v \max \left\{ \frac{\theta-\kappa^S}{1-\beta \, L_{R,MM}}, \, \frac{f}{1-\beta \, A_M} \right\} \\
L_F &= \max \left\{ \frac{(\theta-\kappa^S) \, L_v + \beta \, L_{R,MF} \, L_M}{1-\beta \, L_{R,F}}, \, \frac{f \, L_v + \beta \, \rho \, L_\sigma \, L_M}{1-\beta \, (1-\lambda)} \right\} \nonumber
\end{align}
\end{proposition}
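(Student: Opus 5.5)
The plan is to show that the Bellman operator $\mathcal{B}$ of Eq.~\eqref{eq:bellman_operator} maps a closed subset of $C_b(\mathcal{X})$ into itself. This subset, call it $\mathcal{L}$, consists of the functions that are Lipschitz in $M$ with constant $L_M$ and Lipschitz in $F$ with constant $L_F$. Since $\mathcal{B}$ is a $\beta$-contraction (Proposition \ref{prop:0509_0709}), its unique fixed point $V$ is the limit of $\mathcal{B}^n W_0$ for any starting point, for instance $W_0\equiv 0\in\mathcal{L}$. The set $\mathcal{L}$ is closed under uniform limits, because pointwise Lipschitz inequalities pass to the limit, so $V\in\mathcal{L}$. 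The key elementary fact is that a maximum of two functions that are each Lipschitz in a variable with constants $a$ and $b$ is Lipschitz in that variable with constant $\max\{a,b\}$. The same holds for $\max_q$ of a family that is uniformly Lipschitz in the state, because $|\sup_q g_1-\sup_q g_2|\le\sup_q|g_1-g_2|$. So it suffices to bound the two branches separately, which explains the $\max$ structure of the constants.

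\emph{Step 1 (the $M$-direction).} Fix $F$ and take $W\in\mathcal{L}$. In the deviation branch, $|(\theta-\kappa^S)(v(F,M_1)-v(F,M_2))|\le(\theta-\kappa^S)L_v|M_1-M_2|$. Since $R_F$ does not depend on $M$, the continuation term satisfies $\beta|W(R(F,M_1))-W(R(F,M_2))|\le\beta L_M L_{R,MM}|M_1-M_2|$, so the constant is $(\theta-\kappa^S)L_v+\beta L_{R,MM}L_M$. In the continuation branch, $T_F$ does not depend on $M$ and $|T_M(F,M_1)-T_M(F,M_2)|\le A_M|M_1-M_2|$. The flow term $f v-c^o(\kappa^S v)$ is Lipschitz with constant $fL_v$: indeed $x\mapsto fx-c^o(\kappa^S x)$ has derivative in $[f-\kappa^S f,f]\subset[0,f]$, using $(c^o)'<f$ and $\kappa^S<1$. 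The branch constant is therefore $fL_v+\beta A_M L_M$. One then checks that $L_M$ as defined satisfies $L_M\ge\max\{(\theta-\kappa^S)L_v+\beta L_{R,MM}L_M,\ fL_v+\beta A_M L_M\}$. This is immediate because $L_M\ge (\theta-\kappa^S)L_v/(1-\beta L_{R,MM})$ rearranges to the first inequality, and similarly for the second; here $\beta L_{R,MM}<1$ and $\beta A_M<1$ from \eqref{eq:0508_1500} are what make the rearrangement legitimate.

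\emph{Step 2 (the $F$-direction).} Fix $M$. In the deviation branch, $R(F_1,M)$ and $R(F_2,M)$ differ in both coordinates, so I pass through the intermediate point $(R_F(F_2),R_M(F_1,M))$ and use the triangle inequality. This gives $|W(R(F_1,M))-W(R(F_2,M))|\le L_F L_{R,F}|F_1-F_2|+L_M L_{R,MF}|F_1-F_2|$, using the $F$-Lipschitz bound for the first coordinate and the $M$-Lipschitz bound for the second. In the continuation branch, for fixed $q$, $|T_F(F_1,q)-T_F(F_2,q)|=(1-\lambda)|F_1-F_2|$ and $|T_M(F_1,M)-T_M(F_2,M)|\le\rho L_\sigma|F_1-F_2|$. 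This yields the constant $fL_v+\beta(1-\lambda)L_F+\beta\rho L_\sigma L_M$, uniformly in $q$. With $L_M$ already fixed from Step 1, the choice of $L_F$ in \eqref{eq:0508_1501} is exactly what closes both inequalities, using $\beta L_{R,F}<1$ and $\beta(1-\lambda)<1$.

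The main obstacle is this coupling in Step 2: the $F$-Lipschitz bound of $\mathcal{B}W$ depends on the $M$-constant of $W$ through the cross terms $L_{R,MF}$ and $\rho L_\sigma$. This forces the order of the argument: establish the $M$-constant first, which is self-contained because $R_F$ and $T_F$ do not depend on $M$ (this is precisely why Assumption~\ref{ass:R_lipschitz} restricts $R_F$ to depend on $F$ only), and only then solve for $L_F$. A minor technical point is the Lipschitz bound on the flow term. If the $c^o$ derivative bound is awkward, I would instead use the monotonicity of $x\mapsto fx-c^o(\kappa^S x)$ on $[0,v(1,1)]$, which follows from $(c^o)'<f$, and bound its slope by $f$.
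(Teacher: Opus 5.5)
Your proposal is correct and follows the paper's proof essentially step for step. It uses the same invariant-set/contraction argument, the same branch-wise constants $(\theta-\kappa^S)L_v+\beta L_{R,MM}L_M$ and $fL_v+\beta A_M L_M$ in the $M$-direction, the same intermediate point $(R_F(F_2),R_M(F_1,M))$ in the $F$-direction, and the same order of fixing $L_M$ before solving for $L_F$. Working with a single set $\mathcal{L}$ that carries both bounds, rather than the paper's $S_M$ followed by $S_F$, is only a cosmetic difference, since the paper's $S_F$ already imposes both conditions.
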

The proof of this result is in Appendix \ref{app_proofs_dynamic}.

\begin{remark}
Under the identity replacement mechanism $R^{new}$, the $R$- related Lipschitz constants $L_{R,F}$, $L_{R,MF}$, and $L_{R,MM}$ are $0$. Under different mechanisms, the additional terms coming by these constants have a direct economic interpretation. They quantify the extent to which reputational capital survives opportunistic behavior and therefore continues to affect the agent's future value after a deviation. If we make the additional assumption of linear opportunity cost,
\[
c^o(\kappa^S v)=a\kappa^S v,
\qquad
a<f,
\]
the continuation-branch bounds can be sharpened because
\[
f\,v-c^o(\kappa^S v)
=
(f-a\kappa^S)v.
\]
Equation \eqref{eq:0508_1501} therefore becomes
\begin{align}
\label{eq:0508_1504}
L_M
&=
L_v
\max
\left\{
\frac{\theta-\kappa^S}
{1-\beta L_{R,MM}},
\,
\frac{f-a\kappa^S}
{1-\beta(1-\rho+\rho L_\sigma)}
\right\},
\\
L_F
&=
\max
\left\{
\frac{
(\theta-\kappa^S)L_v
+
\beta L_{R,MF}L_M
}
{1-\beta L_{R,F}},
\,
\frac{
(f-a\kappa^S)L_v
+
\beta\rho L_\sigma L_M
}
{1-\beta(1-\lambda)}
\right\}.
\nonumber
\end{align}
\end{remark}

We now exploit the Lipschitz continuity to study how temptation varies with the amount of accumulated reputational capital. The economic effect is in principle ambiguous. A stronger reputation or a more mature track record increases the current economic activity that can potentially be extracted, but it also increases the future value associated with remaining in the market. Additionally, when part of the reputational state survives misconduct, a better current state may also increase the continuation value following deviation. The following result provides sufficient conditions under which the overall
temptation to deviate increases with reputational capital.

\begin{proposition}[General Monotonicity]
\label{prop:temptation_monotonicity}
Assume \ref{ass:update}--\ref{ass:cost} and the additional assumptions of
Proposition \ref{prop:lip}. Then, for every $\bar{F},\bar{M}>0$, if
\begin{equation}
\label{eq:0804_2117}
\begin{cases} \big[ \theta-f-\kappa^S \big] \, \partial_F v(1,\bar{M}) \geq \beta \, L_F \, (1-\lambda) + \beta \, \rho \, L_M \, L_\sigma & \mathbf{(F)} \\[0.3cm]
\big[ \theta-f-\kappa^S \big] \, \partial_M v(\bar{F},1) \geq \beta \, L_M \, (1-\rho) + \beta \, \rho \, L_M \, L_\sigma & \mathbf{(M)} \end{cases}
\end{equation}
then the temptation function $\Delta(F,M)$ is non-decreasing in the corresponding variable on $[\bar{F},1]\times[\bar{M},1]$.
\end{proposition}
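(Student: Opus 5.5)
The plan is to compare $\Delta$ at two ordered states differing in one coordinate only, and to bound each term of the explicit decomposition \eqref{eq:temptation_explicit} separately. Take first the $F$-direction. Fix $M\in[\bar M,1]$ and $\bar F\le F_1<F_2\le 1$. I would show $\Delta(F_2,M)-\Delta(F_1,M)\ge 0$ by splitting this difference into four pieces:
\begin{itemize}
\item the immediate-extraction term $(\theta-f-\kappa^S)\,[v(F_2,M)-v(F_1,M)]$;
\item the staking term $c^o(\kappa^S v(F_2,M))-c^o(\kappa^S v(F_1,M))$;
\item the post-deviation term $\beta[V(R(F_2,M))-V(R(F_1,M))]$;
\item minus the difference of the two continuation maxima $\max_q\{-c(q)+\beta V(T(\cdot,M,q))\}$.
\end{itemize}

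The second and third pieces are harmless. Since $c^o$ is increasing, $\kappa^S\ge0$ and $v$ is non-decreasing (Assumption \ref{ass:volume}), the staking piece is non-negative. Since $R$ is order-preserving (Assumption \ref{ass:deviation_transition}) and $V$ is non-decreasing in both arguments (Proposition \ref{prop:0509_0709}), the post-deviation piece is also non-negative. Both can simply be dropped from a lower bound.

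For the extraction term, write $v(F_2,M)-v(F_1,M)=\int_{F_1}^{F_2}\partial_F v(x,M)\,dx$. Concavity in $F$ makes $\partial_F v(\cdot,M)$ non-increasing, so it is minimized at $x=1$. Supermodularity ($\partial^2_{F,M}v\ge0$) makes $\partial_F v(1,\cdot)$ non-decreasing, so on $M\ge\bar M$ it is at least $\partial_F v(1,\bar M)$. Hence $v(F_2,M)-v(F_1,M)\ge \partial_F v(1,\bar M)(F_2-F_1)$. To multiply this bound by $\theta-f-\kappa^S$ I need that coefficient to be non-negative. This follows from $\mathbf{(F)}$, whose right-hand side is non-negative, except in the degenerate case $\partial_F v(1,\bar M)=0$. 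In that case I would note that the condition forces the right-hand side to vanish, so the remaining bounds below are trivially zero.

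For the continuation maxima I use $\max_q a(q)-\max_q b(q)\le \sup_q[a(q)-b(q)]$, where the $c(q)$ terms cancel. This leaves bounding $\beta[V(T(F_2,M,q))-V(T(F_1,M,q))]$ uniformly in $q$. Here I apply the separate Lipschitz constants of Proposition \ref{prop:lip} by moving from $T(F_1,M,q)$ to $T(F_2,M,q)$ in two steps: first change the $F$-coordinate, then the $M$-coordinate. The $F$-coordinate moves by $(1-\lambda)(F_2-F_1)$, since $s(q)$ is the same at both points. The $M$-coordinate moves by $\rho|\sigma(F_2,M)-\sigma(F_1,M)|\le \rho L_\sigma (F_2-F_1)$. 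This gives the bound $\beta[L_F(1-\lambda)+\rho L_M L_\sigma](F_2-F_1)$.

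Combining the pieces, the difference is at least $\{(\theta-f-\kappa^S)\partial_F v(1,\bar M)-\beta L_F(1-\lambda)-\beta\rho L_M L_\sigma\}(F_2-F_1)\ge0$ by $\mathbf{(F)}$.

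The $M$-direction is identical. The extraction bound uses $\partial_M v(\bar F,1)$, by concavity in $M$ and supermodularity in $F$. In the continuation step only the $M$-coordinate of $T$ changes, by $(1-\rho)(M_2-M_1)+\rho[\sigma(F,M_2)-\sigma(F,M_1)]\le(1-\rho+\rho L_\sigma)(M_2-M_1)$, and it is multiplied by $L_M$. This reproduces $\mathbf{(M)}$.

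The main point needing care is the continuation-maximum step. The inequality for the difference of maxima must go the right way: we need an upper bound on the amount by which the honest value rises. The two-step Lipschitz path must also use only the separate-coordinate constants established in Proposition \ref{prop:lip}, evaluating each step at an intermediate point that stays in $\mathcal X$, which it does because all maps are into $[0,1]^2$.
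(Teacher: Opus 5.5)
Your argument is correct and is essentially the paper's proof: you drop the non-negative post-deviation term (and the staking term, which the paper instead discards inside its continuation upper bound), bound the rise of the continuation maximum by evaluating at the maximizer of the larger state and applying the separate Lipschitz constants $L_F, L_M$ of Proposition~\ref{prop:lip}, and lower-bound the extraction term through coordinatewise concavity of $v$. You are more explicit than the paper on three points it leaves implicit: the non-negativity of $\theta-f-\kappa^S$, the use of $\partial^2_{F,M}v\ge 0$ to reach $\partial_F v(1,\bar M)$ or $\partial_M v(\bar F,1)$, and the two-step Lipschitz path in the $F$-direction. In your degenerate case, add one line: (F) forces $L_F=0$, hence $L_v=0$ and $v\equiv 0$, so the extraction term also vanishes.
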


The proof is in Appendix \ref{app_proofs_dynamic}.

\begin{remark}
Proposition \ref{prop:temptation_monotonicity} concerns the direction in which temptation changes with reputation, rather than the sign of temptation itself. Thus, the proposition does not by itself imply that deviation is optimal at high values of $F$ or $M$; it implies that, under Eq.~\eqref{eq:0804_2117}, increasing the corresponding reputational component cannot reduce the relative attractiveness of deviation. The conditions in Eq.~\eqref{eq:0804_2117} are sufficient rather than necessary. They compare the marginal increase in the economic value that can be extracted through deviation,
\[
\big[\theta-f-\kappa^S\big]\partial_x v, \qquad x\in\{F,M\},
\]
with an upper bound on the additional continuation value generated by a stronger reputation under honest behavior. When the former effect dominates, the temptation to deviate is non-decreasing in the corresponding state variable. The conditions are conservative because the positive post-deviation
continuation term
\[
\beta \, \left[ V\big(R(F,M_1)\big) - V\big(R(F,M_2)\big) \right]
\]
is omitted when constructing the lower bound for the change in temptation. Hence, when the post-deviation mechanism preserves part of the current reputational capital, temptation may be increasing even when Eq.~\eqref{eq:0804_2117} is not satisfied.

Monotonicity also provides a useful characterization of the deviation region. For example, if $\Delta(F,M)$ is non-decreasing in $M$ for fixed $F$ and there exists a threshold $M^\star(F)$ such that
\[
\Delta(F,M^\star(F))=0,
\]
then
\[
M<M^\star(F) \quad\Longrightarrow\quad \Delta(F,M)\leq0,
\]
whereas
\[
M>M^\star(F) \quad\Longrightarrow\quad \Delta(F,M)\geq0.
\]
Thus, whenever such a crossing exists, monotonicity implies a threshold structure for the continuation and deviation regions. An analogous argument applies to $F$.

The restriction to $\bar{F},\bar{M}>0$ avoids boundary regions in which the relevant marginal effect of the volume function may become degenerate or ill-behaved. For example, under the Cobb--Douglas specification
\[
v(F,M) = F^\eta \, M^\zeta, \qquad \eta,\zeta\in(0,1),
\]
the derivatives of $v$ may become singular near the boundary of the state space.
\end{remark}

Proposition \ref{prop:temptation_monotonicity} highlights a not very intuitive implication of the model. One might naturally expect that accumulating more reputational capital always strengthens the incentive to behave honestly, since a more reputable agent has more future business to lose. This is indeed one force at work: a better reputation increases the value of remaining in the market and therefore raises the cost of misconduct.

However, the opposite force is also present. A better reputation increases the economic activity associated with the agent through $v(F,M)$, so that a more reputable agent also has more value available to extract opportunistically at the moment of deviation. The proposition identifies sufficient conditions under which the current gain from exploiting a stronger reputation grows faster than the future benefit of preserving it through honest behavior. In that case, the temptation to deviate increases with accumulated reputation. Thus, reputational capital plays a dual role: it is both an asset that the agent may wish to preserve through honest behavior and an asset whose accumulated economic value can be exploited through misconduct. Ultimately, having identified these two opposing forces, in the next section, we aim at making an explicit characterization of when one dominates over the other.

\section{Small-update Regime}
\label{sec:sur}

The previous analysis applies to a general post-deviation transition and therefore encompasses both continued operation under the same identity and identity replacement. In this section, we specialize the analysis to the identity-reset mechanism, in which detected misconduct forces the agent to restart from a fixed reputational state,
\[
R(F,M)=R^{new}(F,M)=\big(F_{\mathrm{reset}},M_{\mathrm{reset}}\big).
\]
This specialization removes the state dependence of the post-deviation continuation value and makes it possible to obtain a sharper analytical characterization of the continuation and deviation incentives.

Our main objective is to make the general monotonicity result of Proposition~\ref{prop:temptation_monotonicity} more explicit in terms of the primitive parameters of the model. The sufficient conditions derived in the previous section establish when temptation increases with reputational capital, but they depend on endogenous Lipschitz bounds for the value function and therefore do not immediately reveal how parameters such as the extractable-value share, the discount factor, fees, staking requirements, and staking opportunity costs determine the direction of the effect.

To obtain such a characterization, we focus on the small-update regime, in which the reputation-update parameters $\lambda$ and $\rho$ are bounded by arbitrarily small positive thresholds $\bar{\lambda}$ and $\bar{\rho}$. Economically, this corresponds to an environment in which reputational capital is highly persistent and each additional interaction changes the reputational state only gradually. In the limiting case $\lambda=\rho=0$, the Bellman problem simplifies sufficiently to yield a closed-form expression for the temptation function. This allows us to identify directly, in terms of the economic primitives, whether additional reputational capital strengthens discipline or instead makes opportunistic deviation increasingly attractive. By uniform continuity, these qualitative results extend to sufficiently small positive values of $\lambda$ and $\rho$.

The mathematical framework is as follows. We extend the value function domain to $\mathcal{X} \times [0,1]^2$, by considering $4$-dimensional inputs: $(F, M, \lambda, \rho)$. Thus, $(\lambda, \rho)$ are viewed as state variables rather than environmental constants. The existence and uniform continuity of the value function $V:\mathcal{X} \times [0,1]^2 \rightarrow \mathbb{R}$ is proven as previously done. Then, we focus on the small-update regime $\mathcal{X} \times [0,\bar{\lambda}] \times [0,\bar{\rho}]$. The idea is to infer information on the points satisfying a given property, in the small-update regime, by simply looking at what happens when $\lambda=0$ and $\rho=0$. The main advantage of this approach is that we get rid of the iterative component of the value function. Indeed, $\forall (F,M) \in \mathcal{X}$, the state-update function coincides with the identity: $T(F,M, q) = (F,M)$, $\forall q \in [0,1]$ and the continuation kernel is trivially maximized by $q=0$: $\max_q U(F, M, q) = \beta \, V(F, M)$. \\
\linebreak
Specifically, assume to be interested in properties represented as $\phi(F,M, \lambda, \rho) \ge 0$, for some uniformly continuous function $\phi$. The target is to identify the subset of points $\mathcal{A}_{\lambda, \rho, 0} \subset \mathcal{X}$ defined as $\mathcal{A}_{\lambda, \rho, 0} = \big\{ (F,M) \in \mathcal{X} \text{ s.t. } \phi(F,M, \lambda, \rho) \ge 0 \big\}$. In the small-update regime, we can study the behavior of $\phi(F,M, 0, 0) \ge \xi$, for some $\xi > 0$, and then exploit the uniform continuity for claiming the existence of $\bar{\lambda}$ and $\bar{\rho}$ such that, $\forall (\lambda, \rho) \in [0,\bar{\lambda}] \times [0,\bar{\rho}]$, we have $\mathcal{A}_{0, 0, \xi} \subset \mathcal{A}_{\lambda, \rho, 0}$. In this way, we obtain explicit expressions for the continuation and deviation value functions. These, in turn, yield a closed-form characterization of the temptation function and allow the monotonicity conditions derived in the previous section to be expressed directly in terms of the primitive parameters of the model. By uniform continuity, these qualitative conditions extend to the small-update regime for sufficiently small positive values of $\lambda$ and $\rho$. This characterization also makes it possible to isolate the contribution of each primitive parameter, providing a transparent basis for economic interpretation and policy analysis.

\begin{proposition}[Explicit form of the value functions] \label{prop:0825_1616}
    Assume \ref{ass:update} to \ref{ass:volume}, and $\lambda = \rho = 0$. Then, the two branches of the value function read:
    \begin{align} \label{eq:0417_1410}
    V^{dev}(F,M,0,0) &= (\theta - \kappa^S) \, v(F, M) + \frac{1}{1-\beta} \, \max \begin{cases} (\theta - \kappa^S) \, \beta v_{reset} - \kappa^R \\ \beta \big[ f \, v_{reset} - c^o(\kappa^S \, v_{reset}) \big] - (1 - \beta) \, \kappa^R \end{cases} \\
    V^{cont}(F,M,0,0) &= \frac{1}{1-\beta} \, \bigg[ fv(F,M) - c^o(\kappa^S \, v(F,M)) \bigg] \nonumber
    \end{align}
\end{proposition}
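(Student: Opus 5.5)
The approach is to exploit the fact that when $\lambda=\rho=0$ the honest transition is the identity, $T(F,M,q)=(F,M)$ for every $q$. Combined with the fixed reset map $R=R^{new}$, this turns the Bellman equation \eqref{eq:valuefunction_bellman} into a scalar problem at each state, with a single coupling through the reset value $V_r:=V(F_{\mathrm{reset}},M_{\mathrm{reset}})$.

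First, I would fix notation. Write $x=(F,M)$, $v_{reset}:=v(F_{\mathrm{reset}},M_{\mathrm{reset}})$, and
\[
g(x):=f\,v(x)-c^o(\kappa^S v(x)), \qquad d(x):=(\theta-\kappa^S)\,v(x)-\kappa^R .
\]
Since $T$ is the identity and $c$ is increasing with $c(0)=0$ (Assumption \ref{ass:cost}), the continuation kernel \eqref{eq:continuation_kernel} satisfies $\max_q U_V(x;q)=\beta V(x)$, attained at $q=0$. The Bellman equation therefore reads
\[
V(x)=\max\{\, d(x)+\beta V_r,\; g(x)+\beta V(x)\,\}.
\]

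Second, I would solve this pointwise. I claim that
\[
V(x)=\max\Big\{ d(x)+\beta V_r,\; \tfrac{g(x)}{1-\beta} \Big\}.
\]
Intuitively, under honest play the state never moves, so any policy is "continue for $t$ periods, then deviate", or "never deviate". Its payoff is $(1-\beta^t)g(x)/(1-\beta)+\beta^t(d(x)+\beta V_r)$, which is a convex combination of the two endpoint options, so an endpoint is optimal. Formally, I would check that the displayed function solves the fixed-point equation: if the max is $g/(1-\beta)$, then $g+\beta V=g/(1-\beta)\ge d+\beta V_r$; otherwise $g+\beta V<V$. Uniqueness of the fixed point (Proposition \ref{prop:0509_0709}, whose contraction argument goes through unchanged at $\lambda=\rho=0$) then identifies it with $V$.

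Third, I would evaluate this identity at the reset state itself. There $d+\beta V_r$ involves $V_r$ again, so the identity becomes a scalar fixed-point equation:
\[
V_r=\max\{d_r+\beta V_r,\; g_r/(1-\beta)\}, \qquad\text{which forces}\qquad V_r=\tfrac{1}{1-\beta}\max\{d_r,g_r\},
\]
where $d_r$ and $g_r$ denote $d$ and $g$ evaluated at the reset state. To derive this, consider the two cases. If $V_r=d_r+\beta V_r$, then $V_r=d_r/(1-\beta)$, and this must dominate $g_r/(1-\beta)$; otherwise $V_r=g_r/(1-\beta)\ge d_r/(1-\beta)$.

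Fourth, I would substitute into $V^{dev}(x)=d(x)+\beta V_r$. The algebra is
\[
-\kappa^R+\tfrac{\beta d_r}{1-\beta}=\tfrac{\beta(\theta-\kappa^S)v_{reset}-\kappa^R}{1-\beta}, \qquad -\kappa^R+\tfrac{\beta g_r}{1-\beta}=\tfrac{\beta g_r-(1-\beta)\kappa^R}{1-\beta},
\]
which gives exactly the two entries of the max in \eqref{eq:0417_1410}.

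The main obstacle is the continuation branch. Literally, $V^{cont}_V(x)=g(x)+\beta V(x)$, and this equals $g(x)/(1-\beta)$ only on the continuation region. I would state that $V^{cont}$ in the proposition is to be read as the value of perpetual honest play, and I would justify that this reading leaves the regions \eqref{eq:regions} and the sign of $\Delta$ unchanged. Suppose $D:=d(x)+\beta V_r > g(x)/(1-\beta)$. Then $V(x)=D$, and
\[
\big(g+\beta V\big)-D = g-(1-\beta)D \qquad\text{and}\qquad \tfrac{g}{1-\beta}-D
\]
differ only by the positive factor $1-\beta$, so they have the same sign. If instead $D\le g(x)/(1-\beta)$, the two versions of $V^{cont}$ coincide. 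Hence the sign of $\Delta$, and so $\mathcal{C}$, $\mathcal{D}$ and $\mathcal{I}$, are the same under either reading, which is all that is used downstream.
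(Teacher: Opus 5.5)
Your proposal is correct and follows essentially the paper's own argument: with $\lambda=\rho=0$ the honest transition is the identity, so honest play at a state is absorbing; the reset value $V_r=\frac{1}{1-\beta}\max\{d_r,g_r\}$ follows because the reset state is absorbing under both actions; and substituting $V_r$ into $V^{dev}$ gives the stated formula. Your caveat on the continuation branch improves on the paper's proof, which writes $V^{cont}=g+\beta V^{cont}$ and so tacitly assumes $V=V^{cont}$ at $(F,M)$. Hence $g/(1-\beta)$ equals $V^{cont}_V$ only on $\mathcal{C}$ and is the value of perpetual honesty elsewhere, while, as you show, the sign of $\Delta$ and therefore $\mathcal{C}$, $\mathcal{D}$, $\mathcal{I}$ are unaffected.
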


The proof of this result is in Appendix \ref{app_proofs_sur}.

\begin{remark}
    Note that the dependence of $V^{dev}(F,M,0,0)$ and $V^{cont}(F,M,0,0)$ on $F$ and $M$ is through the volume only. Furthermore, the previous result does not hold in the small-update regime, but only in the hyperspace defined by $\lambda=\rho=0$. In other words, we don't have to think about it as a property of the value function, but rather as a trick to simplify calculations.
\end{remark}

\begin{corollary}
    The temptation function $\Delta(F,M,0,0)$ reads:
    \begin{align} \label{eq:2608_1402}
        \Delta(F,M,0,0) =& \frac{1}{1-\beta} \, \big\{ \big[(\theta-\kappa^S) \, (1-\beta)-f\big] \, v(F,M) +c^o(\kappa^S \, v(F,M)) \big\} \\
        &+ \frac{1}{1-\beta} \, \begin{cases}
            \beta\, (\theta-\kappa^S) \, v_{reset}-\kappa^R, & \text{if } (\theta-\kappa^S-f) \, v_{reset} \ge \kappa^R-c^o(\kappa^S \, v_{reset}), \\
            \beta \, \big[f \, v_{reset}-c^o(\kappa^S \, v_{reset})\big] -(1-\beta) \, \kappa^R, & \text{otherwise}. \end{cases} \nonumber
    \end{align}
Assuming a linear opportunity cost, $c^o(\kappa^S \, v(F,M))=a \, \kappa^S \, v(F,M)$ with $a<f$, the closed-form expression for $\Delta(F,M,0,0)$ implies an explicit threshold in the reputation-dependent volume $v(F,M)$. In the exact zero-update limit, this threshold separates the continuation and deviation regions. By uniform continuity, for sufficiently small positive values of $\lambda$ and $\rho$, states that lie at least a margin $\xi>0$ away from this threshold preserve their classification. Hence, rather than characterizing the whole regions $\mathcal{C}$ and $\mathcal{D}$ exactly, we identify subsets that are guaranteed to remain in the continuation and deviation regions throughout the small-update regime:
\begin{align} \label{eq:0512_2045}
        \mathcal{D}\supseteq & \begin{cases} v^{-1} \bigg( \frac{\kappa^R - \beta \, (\theta - \kappa^S) \, v_{reset}}{(\theta - \kappa^S) \, (1-\beta) - (f-a \, \kappa^S)} + \xi, \infty \bigg) & if \ \big[ \theta - (1-a) \, \kappa^S - f \big] \, v_{reset} \ge \kappa^R \\
        v^{-1} \bigg( \frac{(1 - \beta) \, \kappa^R - \beta \, (f - a \, \kappa^S) \, v_{reset}}{(\theta - \kappa^S) \, (1-\beta) - (f-a \, \kappa^S)} + \xi, \infty \bigg) & otherwise \end{cases} \\
        \mathcal{C}\supseteq & \begin{cases} v^{-1} \bigg( -\infty, \frac{\kappa^R - \beta \, (\theta - \kappa^S) \, v_{reset}}{(\theta - \kappa^S) \, (1-\beta) - (f-a \, \kappa^S)} - \xi \bigg) & if \ \big[ \theta - (1-a) \, \kappa^S - f \big] \, v_{reset} \ge \kappa^R \\
        v^{-1} \bigg( -\infty, \frac{(1 - \beta) \, \kappa^R - \beta \, (f - a \, \kappa^S) \, v_{reset}}{(\theta - \kappa^S) \, (1-\beta) - (f-a \, \kappa^S)} - \xi \bigg) & otherwise \end{cases} \nonumber
    \end{align}
\end{corollary}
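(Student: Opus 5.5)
The plan is to subtract the two explicit branches of Proposition~\ref{prop:0825_1616}, then identify which term of the max is active, and finally pass from the exact limit $\lambda=\rho=0$ to the small-update regime by uniform continuity.

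\textbf{Step 1: the closed form.} Subtract $V^{cont}(F,M,0,0)$ from $V^{dev}(F,M,0,0)$. Writing $(\theta-\kappa^S)v(F,M)=\frac{1}{1-\beta}(\theta-\kappa^S)(1-\beta)v(F,M)$ puts everything over the common factor $\frac{1}{1-\beta}$, which yields the first line of Eq.~\eqref{eq:2608_1402}.

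\textbf{Step 2: which branch of the max is active.} For the second line, compare the two candidates inside the max. The first term is at least the second iff
\[
\beta(\theta-\kappa^S)v_{reset}-\kappa^R\ge \beta\big[f v_{reset}-c^o(\kappa^S v_{reset})\big]-(1-\beta)\kappa^R .
\]
This rearranges to $\beta\big[(\theta-\kappa^S-f)v_{reset}+c^o(\kappa^S v_{reset})-\kappa^R\big]\ge 0$, and since $\beta>0$ it is exactly the stated case condition. Economically, this is the comparison of deviating versus continuing at the reset state.

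\textbf{Step 3: the threshold under linear cost.} Set $c^o(x)=a x$. Then $\Delta(F,M,0,0)$ is affine in $v=v(F,M)$:
\[
(1-\beta)\Delta = D\,v + K, \qquad D:=(\theta-\kappa^S)(1-\beta)-(f-a\kappa^S),
\]
where $K$ is the case-dependent constant from Step 2. Solving $(1-\beta)\Delta=0$ gives the threshold $v^\star=-K/D$, which matches the displayed quotients; note the case condition becomes $[\theta-(1-a)\kappa^S-f]v_{reset}\ge\kappa^R$.

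Here lies the main obstacle: the sign of $D$. The displayed inclusions, with $\mathcal{D}$ corresponding to $v$ above the threshold, are correct only when $D>0$. I would state this explicitly as the operative case, which holds for instance when $\beta$ is small relative to the gap $\theta-f$. If $D<0$, the roles of $\mathcal{C}$ and $\mathcal{D}$ swap, and I would remark on this.

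\textbf{Step 4: margins and passage to $\lambda,\rho>0$.} With $D>0$, $v>v^\star+\xi$ gives $\Delta(F,M,0,0)\ge D\xi/(1-\beta)=:\xi'>0$, and symmetrically below the threshold. Next I would invoke the uniform continuity of the extended value function on the compact set $\mathcal{X}\times[0,1]^2$, established as in Proposition~\ref{prop:0509_0709}. This requires checking that $V^{dev}$ and $V^{cont}$, and hence $\Delta$, are uniformly continuous in $(F,M,\lambda,\rho)$. For $V^{cont}$, this follows from continuity of the maximized kernel via Berge's maximum theorem on the compact $q$-set.

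Finally, take $\bar\lambda,\bar\rho$ such that $|\Delta(F,M,\lambda,\rho)-\Delta(F,M,0,0)|<\xi'$ uniformly in $(F,M)$. Then the sign of $\Delta$ is preserved on these sets, which gives the inclusions $\mathcal{A}_{0,0,\xi}\subset\mathcal{A}_{\lambda,\rho,0}$ described before the statement.
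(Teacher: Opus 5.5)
Your proof follows the paper's own route: subtract the branches of Proposition~\ref{prop:0825_1616}, resolve the max at the absorbing reset state, solve the affine equation in $v$, and transfer strict margins to small $(\lambda,\rho)$ by uniform continuity. Your caveat that the displayed inclusions need $D=(\theta-\kappa^S)(1-\beta)-(f-a\kappa^S)>0$, with $\mathcal{C}$ and $\mathcal{D}$ exchanging roles when $D<0$, is correct and is missing from the statement. In the paper's own $\kappa^S=0.3$ example $D=-0.011$, and the literal inclusions would give $\mathcal{D}\supseteq\mathcal{X}$ even though $\Delta(\cdot,\cdot,0,0)<0$ everywhere.

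You also inherit a caveat from Proposition~\ref{prop:0825_1616}. Its $V^{cont}=\frac{1}{1-\beta}\big[fv-c^o(\kappa^S v)\big]$ is the always-honest value, which coincides with the Bellman continuation branch only on $\mathcal{C}$; on $\mathcal{D}$ the branch equals $fv-c^o(\kappa^S v)+\beta V^{dev}$. Hence the true $\Delta(F,M,0,0)$ equals the displayed closed form where that form is non-positive, and $(1-\beta)$ times it where it is positive. Signs, the threshold and the inclusions are unaffected, but your deviation-side margin should be $D\xi$ rather than $D\xi/(1-\beta)$.
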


We now study how the temptation function evolves as reputational capital accumulates. Our objective is to identify the parameter regimes under which reputation reinforces honest behavior or, conversely, makes opportunistic deviation increasingly attractive. The following result provides explicit conditions on the primitives of the model determining which of these two effects dominates in the small-update regime. The proof is in Appendix \ref{app_proofs_sur}.

\begin{proposition}[Small-update regime monotonicity] \label{prop:0825_1626}
    Assume \ref{ass:update} to \ref{ass:volume}. Assume that one of the following conditions hold:
    \begin{align} \label{eq:0417_1344}
        (\theta - \kappa^S) \, (1-\beta) - f \ge \xi & \qquad\qquad\text{\textbf{Increasing}} \\
        (\theta - \kappa^S) \, (1-\beta) - f + \kappa^S \, \le - \xi & \qquad\qquad\text{\textbf{Decreasing}} \nonumber
    \end{align}
    for some $\xi > 0$. With linear opportunity cost $c^o(v) = a \, v$, these conditions can be sharpened as:
    \begin{equation}
        (\theta - \kappa^S) \, (1-\beta) - f + a \, \kappa^S \gtrless \pm \xi
    \end{equation}
    Then, $\forall \bar{\delta}, \bar{F}, \bar{M} \in (0,1)$, there exists a small-update regime $[0, \bar{\lambda}) \times [0, \bar{\rho})$ such that $\forall F_2 \in [\bar{F},1]$, $\forall M_2 \in [\bar{M},1]$, $\forall F_1 \in [0, F_2 \, \bar{\delta}]$, and $\forall M_1 \in [0, M_2 \, \bar{\delta}]$, it holds:
    \begin{align}
        \Delta(F_1,M_2, \lambda, \rho) \le \Delta(F_2,M_2, \lambda, \rho) & \qquad\qquad \Delta(F_2,M_1, \lambda, \rho) \le \Delta(F_2,M_2, \lambda, \rho) & \qquad\qquad\text{\textbf{Increasing}} \\
        \Delta(F_1,M_2, \lambda, \rho) \ge \Delta(F_2,M_2, \lambda, \rho) & \qquad\qquad \Delta(F_2,M_1, \lambda, \rho) \ge \Delta(F_2,M_2, \lambda, \rho) & \qquad\qquad\text{\textbf{Decreasing}} \nonumber
    \end{align}
\end{proposition}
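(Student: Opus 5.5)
The plan is to reduce everything to the exact zero-update slice $\lambda=\rho=0$, where the corollary to Proposition~\ref{prop:0825_1616} gives $\Delta(F,M,0,0)$ in closed form. Then I transfer the resulting strict inequalities to a neighbourhood $[0,\bar\lambda)\times[0,\bar\rho)$ by uniform continuity of the extended value function on the compact set $\mathcal{X}\times[0,1]^2$.

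\emph{Step 1: monotonicity in the volume.} By Eq.~\eqref{eq:2608_1402}, $\Delta(F,M,0,0)=h(v(F,M))+K$. Here $K$ does not depend on $(F,M)$, since the case distinction involves only $v_{reset}$, and
\[
h(x):=\frac{1}{1-\beta}\Big\{\big[(\theta-\kappa^S)(1-\beta)-f\big]x+c^o(\kappa^S x)\Big\}.
\]
Assumption~\ref{ass:cost} gives $0\le (c^o)'<f<1$ (convex, increasing, $c^o(0)=0$). Hence for $x_1\le x_2$,
\[
h(x_2)-h(x_1)=\frac{x_2-x_1}{1-\beta}\Big[(\theta-\kappa^S)(1-\beta)-f+\kappa^S\,\overline{(c^o)'}\Big],
\]
where $\overline{(c^o)'}\in[0,1)$ is a mean value.
\begin{itemize}
\item Under \textbf{Increasing}, the bracket is at least $\xi$.
\item Under \textbf{Decreasing}, the bracket is at most $-\xi$, because $\kappa^S\overline{(c^o)'}\le\kappa^S$.
\item In the linear case $\overline{(c^o)'}=a$ exactly, which yields the sharpened condition.
\end{itemize}
So at $\lambda=\rho=0$ we get $\pm\big(\Delta(\cdot_2)-\Delta(\cdot_1)\big)\ge \frac{\xi}{1-\beta}\,\big(v(\cdot_2)-v(\cdot_1)\big)$.

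\emph{Step 2: a uniform volume gap.} I consider the compact parameter set
\[
\mathcal{K}_F=\{(F_1,F_2,M_2): F_2\in[\bar F,1],\ M_2\in[\bar M,1],\ F_1\in[0,\bar\delta F_2]\},
\]
and the analogous set $\mathcal{K}_M$ for the $M$-comparison. The function $g=v(F_2,M_2)-v(F_1,M_2)$ is continuous on $\mathcal{K}_F$. If $g$ is strictly positive there, it attains a minimum $\gamma>0$. At $(0,0)$ this gives a margin $\xi\gamma/(1-\beta)$ in the desired direction, uniformly over $\mathcal{K}_F$ and $\mathcal{K}_M$.

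I expect this positivity to be the main obstacle. Assumption~\ref{ass:volume} only gives $\partial_F v\ge 0$, so $v$ could be flat on $[\bar\delta F_2,F_2]$.

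\begin{itemize}
\item First I would try to extract strictness from the boundary condition $v(0,M)=0$ together with concavity and supermodularity. Concavity in $F$ with $v(0,M)=0$ forces $v(F,M_2)>0=v(0,M_2)$ once $v$ is nonzero, and flatness on an interval would then propagate to $[\bar\delta F_2,1]$.
\item If this does not close, I would state the needed nondegeneracy explicitly: $v$ strictly increasing on $(0,1]^2$, which holds e.g.\ for the Cobb--Douglas form $v=F^\eta M^\zeta$. This is presumably what the restriction to $F_2\ge\bar F$, $M_2\ge \bar M$ is designed to allow.
\item Where $g=0$, the claimed inequality still holds at $(0,0)$ with equality, but it has no margin and cannot be perturbed. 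That is why a positive $\gamma$ is essential.
\end{itemize}

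\emph{Step 3: perturbation.} The extended value function $V(F,M,\lambda,\rho)$ is uniformly continuous on $\mathcal{X}\times[0,1]^2$; this is the extension of Proposition~\ref{prop:0509_0709} described at the start of Section~\ref{sec:sur}. With $R=R^{new}$ and $T$ continuous in $(\lambda,\rho)$, both $V^{dev}$ and $V^{cont}$ are uniformly continuous, and so is $\Delta$. For the continuation branch this uses that the maximum over compact $q\in[0,1]$ of a uniformly continuous function is uniformly continuous. Choose $\bar\lambda,\bar\rho$ so that
\[
|\Delta(F,M,\lambda,\rho)-\Delta(F,M,0,0)|<\frac{\xi\gamma}{2(1-\beta)}
\quad\text{for all }(F,M)\in\mathcal{X},\ \lambda<\bar\lambda,\ \rho<\bar\rho.
\]
Then the strict margin from Steps~1--2 survives. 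This gives $\Delta(F_1,M_2,\lambda,\rho)\le\Delta(F_2,M_2,\lambda,\rho)$ and $\Delta(F_2,M_1,\lambda,\rho)\le\Delta(F_2,M_2,\lambda,\rho)$ in the \textbf{Increasing} case, and the reverse inequalities in the \textbf{Decreasing} case, on the whole of $\mathcal{K}_F$ and $\mathcal{K}_M$.
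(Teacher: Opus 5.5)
Your argument is correct and follows essentially the paper's route: the closed form of $\Delta(\cdot,\cdot,0,0)$, the sign of the coefficient of $v$ (using $0\le (c^o)'<1$ for the general Decreasing case), and then uniform continuity in $(\lambda,\rho)$; the only difference is that the paper divides the gap by $(1-\delta)M$ and bounds it below by $\xi\,\partial_M v(\bar F,1)/(1-\beta)$ via concavity and supermodularity of $v$, whereas you extract $\gamma>0$ by compactness. Your Step~2 concern is well founded, since the paper's final ``$>0$'' silently requires $\partial_M v(\bar F,1)>0$ and $\partial_F v(1,\bar M)>0$, which Assumption~\ref{ass:volume} does not guarantee (a concave $v$ that is flat in $F$ on an interval is admissible, and your concavity-based attempt cannot exclude it), so your explicit nondegeneracy hypothesis is genuinely needed; the condition you actually use, $g>0$ on $\mathcal{K}_F$ and $\mathcal{K}_M$, is implied by and slightly weaker than the paper's implicit one.
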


\begin{remark}
Proposition \ref{prop:0825_1626} characterizes how the temptation to deviate changes with reputational capital in the small-update regime. The result does not identify whether deviation is optimal at a given state; rather, it determines whether increasing $F$ or $M$ makes deviation relatively more or less attractive. In the limit $\lambda=\rho=0$, this direction is governed by the sign of the coefficient multiplying the reputation-dependent volume term in $\Delta(F,M,0,0)$. With linear opportunity cost, $c^o(v)=av$, the relevant quantity is
\[
(\theta-\kappa^S) \, (1-\beta)-f+a \, \kappa^S.
\]
When this term is sufficiently positive, the additional economic value created by a stronger reputation increases the payoff from deviation faster than the value of honest continuation, so that temptation increases with reputation. When it is sufficiently negative, the opposite force dominates and temptation decreases.

The expression also provides a direct economic interpretation of the primitives governing these two forces. A larger $\theta$ increases the fraction of economic activity that can be appropriated through misconduct and therefore pushes the system toward increasing temptation. A larger honest fee $f$ has the opposite effect, since it raises the return from preserving reputation and continuing to operate. A larger discount factor $\beta$ increases the relative importance of future honest earnings and therefore strengthens reputational discipline. The role of the stake $\kappa^S$ is more subtle. A larger stake reduces the net value that can be captured through deviation, but it also raises the opportunity cost of capital borne during honest operation. Under linear opportunity costs, the net effect of increasing $\kappa^S$ on the monotonicity condition is governed by
\[
\kappa^S \, \big[a-(1-\beta)\big].
\]
When $a<1-\beta$, increasing the stake makes the sufficient monotonicity condition more difficult to satisfy, so that the system may transition from a regime in which temptation increases with reputation to one in which reputation instead strengthens discipline. Conversely, when $a>1-\beta$, the opportunity cost of maintaining the stake dominates its disciplinary effect, so that increasing $\kappa^S$ shifts incentives in the opposite direction. Finally, a larger $a$ always increases the cost of honest operation and therefore tends to make deviation relatively more attractive. The magnitude of $a$ depends on the nature of the asset posted as stake and on its alternative uses. For instance, if staking requires locking highly liquid assets such as stablecoins that could otherwise be deployed in lending, market-making, or other yield-generating strategies, the opportunity cost may be substantial. By contrast, if the staked asset has limited alternative uses or is itself remunerated while locked, the effective opportunity cost is much lower. Hence, the effectiveness of staking as a disciplinary mechanism depends not only on the amount of collateral required, but also on the economic cost of immobilizing that collateral.

The strict margin $\xi>0$ allows the result to extend beyond the exact limit $\lambda=\rho=0$. By uniform continuity, the same qualitative ordering continues to hold for sufficiently small positive values of $\lambda$ and $\rho$. The proposition therefore identifies not only whether accumulated reputation acts predominantly as a source of discipline or as an additional opportunity for extraction, but also which economic and protocol primitives shift the system from one regime to the other.
\end{remark}
\begin{remark}
    In the case $\kappa^S=0$, the corresponding zero-update condition becomes
\[
\theta(1-\beta)-f \gtrless \pm\xi,
\]
or equivalently,
\[
\theta
\gtrless
\frac{f\pm\xi}{1-\beta}.
\]
That is, the monotonicity of $\Delta$ depends on whether the per-unit profit, at time $0$, of the opportunistic deviation is greater than the fee gained from the honest behavior, compounded over the infinite horizon.
\end{remark}

\section{Numerical Analysis} \label{sec:numerical}

In this section, we provide the reader with quantitative results obtained via a numerical approach. The environment is as follows:
\begin{itemize}
    \item The deviation and continuation gross profit shares are $\theta = 0.7, \ f=0.06$.
    \item The cost function is quadratic: $c(q) = \gamma/2 \, q^2$ with $\gamma=0.25$.
    \item The feedback function is: $s(q) = \sqrt{q}$.
    \item The opportunity cost is linear: $c^o(v) = a \, v$.
    \item The re-entering cost is $\kappa^R = 0.05$.
    \item The volume is Cobb-Douglas: $v(F,M) = F^{1/2} \, M^{1/2}$ and the number of interactions is $\sigma(M) = \sqrt{M}$.
    \item The post-deviation transition $R^{new}(F, M) =(F_{reset}, M_{reset}) = (0.01, 0.01)$ and $R^{same}(F, M) = \big( (1 - \lambda) \, F, (1 - \rho) \, M + \rho \, \sigma(M) \big)$.
    \item The numerical values of $\kappa^S$, $\lambda$, $\rho$, $\beta$, and $a$ vary across the experiments and will be provided case-by-case.
\end{itemize}
The approach to finding the value function is based on the state space discretization. The state space of the reputation $F$, which is $[0,1]$, is discretized into $N_S$ points: $\{ F_j := j/(N_F-1) \ s. \, t. \  j=0, \cdots, N_F-1\}$. The same discretization is applied to $M$, where we have $\{ M_h := h/(N_M-1) \ s. \, t. \  h=0, \cdots, N_M-1\}$ and the possible effort level $q$, resulting in $\{ q_l := l/(N_q-1) \ s. \, t. \  l=0, \cdots, N_q-1\}$. Ultimately, $\mathcal{X}$ is discretized into an $N_F \times N_M$ grid, while the optimal effort in the continuation state is computed as the maximum over an $N_q$-sized grid. The value function is therefore firstly evaluated on a grid: $\big\{ V(F_j, M_h) \big\}_{j=1,\cdots, N_F}^{h=1,\cdots, N_M}$ and then is linearly interpolated over the whole state space $\mathcal{X}$; the continuation branch and deviation branches are only evaluated on the grid.

Specifically, the value function $V$ is iteratively approximated. The starting guess is $V_{(0)} := 0 \in \mathbb{R}_+^{N_F \times N_M}$. Then, at iteration $k \ge 1$, the estimate of the deviation branch $V^{dev}_{(k)}(F_j, M_h)$ is computed by the definition in Eq. \eqref{eq:deviation_branch}, replacing $V_{(k-1)}$ as the candidate value function:
\begin{equation}
    V^{dev}_{(k)}(F_j,M_h) = (\theta-\kappa^S) \, v(F_j,M_h) -\kappa^R + \beta \, V_{(k-1)}\big( R(F_j,M_h) \big)
\end{equation}
Similarly, the continuation branch approximation $V^{cont}_{(k)}(F_j,M_h)$ follows the definition in \eqref{eq:continuation_branch}. The $argmax$ of the continuation kernel is computed by iterating over the $\{q_l\}$ grid:
\begin{equation}
    V^{cont}_{(k)}(F_j,M_h) = f \, v(F_j,M_h) - c^o\left(\kappa^S \, v(F_j,M_h)\right) + \max_{l=0, \cdots, N_q-1} \bigg[ -c(q_l) + \beta \, V_{(k-1)}\big(T(F_j,M_h,q_l)\big) \bigg]
\end{equation}
The value function update is defined as the maximum of the continuation and deviation branches, according to Eq. \eqref{eq:valuefunction_bellman}:
\begin{equation}
    V_{(k)}(F_j,M_h) = \max \left\{ V^{dev}_{(k)}(F_j,M_h), V^{cont}_{(k)}(F_j,M_h) \right\}
\end{equation}
Finally, the convergence is reached when $\| V_{(k)} - V_{(k-1)} \|_\infty < tol = 10^{-12}$, where the sup norm is computed on the grid $\big\{ (F_j, M_h) \big\}$.\\
\linebreak

First, we apply the numerical approach to stress on the difference between $R^{new}$ and $R^{same}$, highlighting once more the importance for the protocol to precent deviating agents from keeping their identity. Moreover, we also show the fundamental role of the S\&S policy in preventing misbehaviors. Figure \ref{img:deviation_area} shows the area of the deviation region $\mathcal{D}$ as a function of the required stake percentage $\kappa^S$. At the same time, the figure compares the reset functions as well as the contribution of the discount factor $\beta$ and the opportunity cost $a$. It emerges that high values of $\beta$, as well as low values of the opportunity cost $a$, discourage deviation. The size of the grid is $N_F = N_M = N_q = 300$.

\begin{figure}[ht]
\centering
    \includegraphics[width=\linewidth]{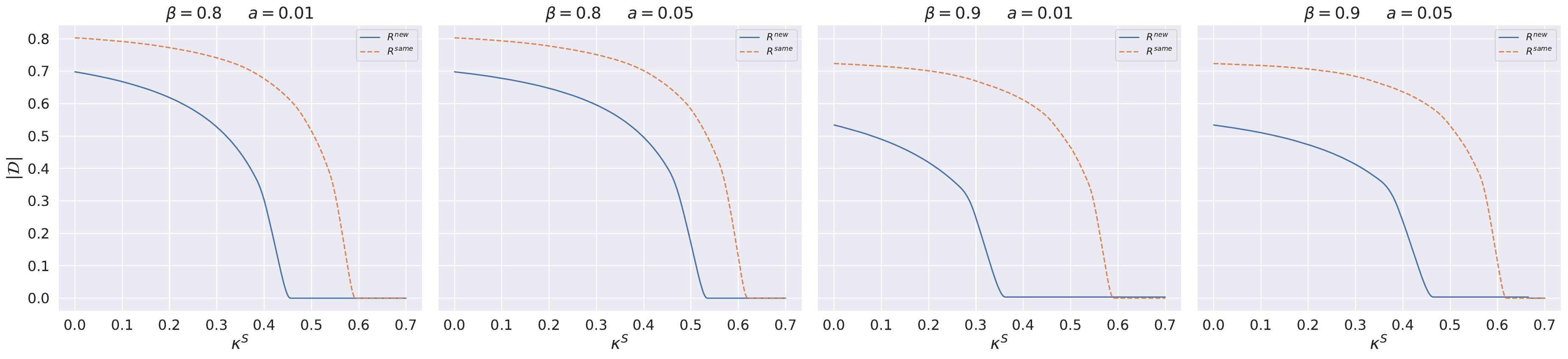}
    \vspace{-.3cm}
	\caption{Deviation area (defined as the Lebesgue measure of $\mathcal{D}$) as a function of  $\kappa^S$. The blue line is for the $R^{new}$ reset; the dotted, orange line is for $R^{same}$. Each subplots is for a different pair $(\beta, a)$. From left to right: $(0.8, 0.01)$, $(0.8, 0.05)$, $(0.9, 0.01)$, and $(0.9, 0.05)$.}
\label{img:deviation_area}
\end{figure}

Next, Figure \ref{img:small_update_regime} shows two examples of small-update regimes. The environment is as follows: $\lambda = \rho = 0.001$, $\beta=0.9$, and $a=0.03$. The value function is approximated via iterated evaluations over an equi-spaced grid $200 \times 200$. The grid dimension is $N_F = N_M = N_q = 1,000$. Two values of $\kappa^S$ are compared: $0$ and $0.3$. In the first case, the "Increasing" inequality in Eq. \eqref{eq:0417_1344} is satisfied: this means that the function $\Delta$ is increasing in $(F,M)$. Moreover, both the continuation and deviation $v$ counter-images in Eq. \eqref{eq:0512_2045} are non-empty; thus, $\mathcal{C}, \mathcal{D} \neq \emptyset$. Specifically, there are some situations when the agent is incentivized to deviate. A possible strategy available to the market or protocol to prevent this kind of misbehavior is forcing the agent to stake $\kappa^S >0$. We analyze the case $\kappa^S = 0.3$ and find that the situation is the opposite of what we previously saw. Indeed, the "Decreasing" inequality in Eq. \eqref{eq:0417_1344} is satisfied, and the continuation $v$ counter-images in Eq. \eqref{eq:0512_2045} coincides with $\mathcal{X}$. Thus, the temptation function is decreasing, and the agent is no longer motivated to dishonestly act.

\begin{figure}[ht]
\centering
    \includegraphics[width=\linewidth]{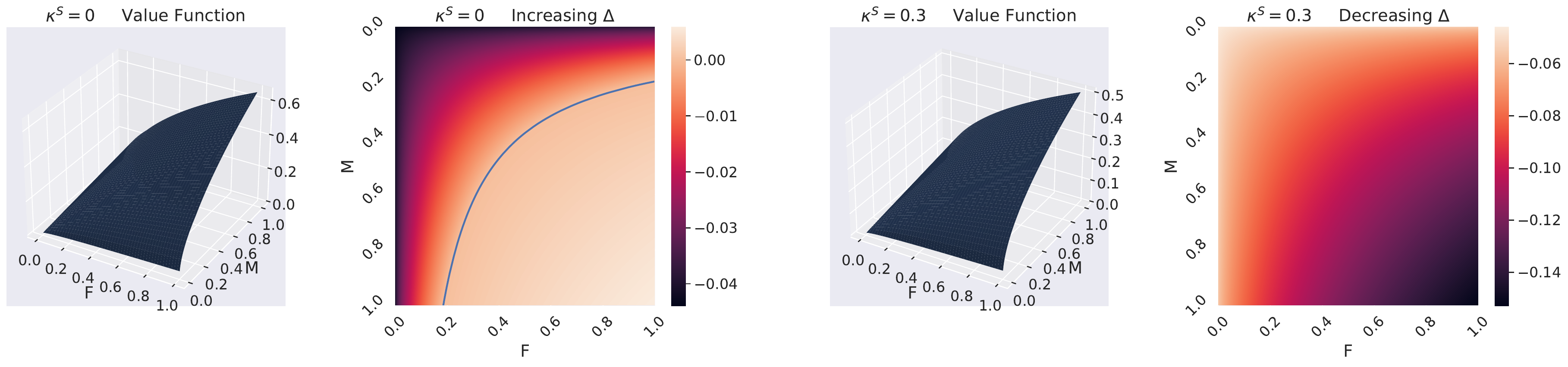}
    \vspace{-.3cm}
	\caption{Small-update regime: $\lambda = \rho = 0.001$. The two plots on the left correspond to stake amount $\kappa^S = 0$; the right plots are for $\kappa^S = 0.3$. The $3D$ plots are for the value function $V(F,M)$; the heatmaps show $\Delta(F,M)$. The blue line in the heatmap corresponds to the indifference region $\mathcal{I}$ -- that is, the points $(F,M)$ such that $\Delta(F,M) = 0$.}
\label{img:small_update_regime}
\end{figure}

Figure \ref{img:delta_heatmap_nosur} shows how the results deviate from the small-update regime approximation as $\lambda$ and $\rho$ increase. By using the same set of parameters as in Figure \ref{img:small_update_regime} ($\kappa^S = 0$) and letting the update coefficients $\lambda$ and $\rho$ vary, it is possible to observe that the indifference region $\mathcal{I}$ moves away from the small-update limit. Moreover, the increasing property of $\Delta$ is lost for small values of $F$.

\begin{figure}[ht]
\centering
    \includegraphics[width=\linewidth]{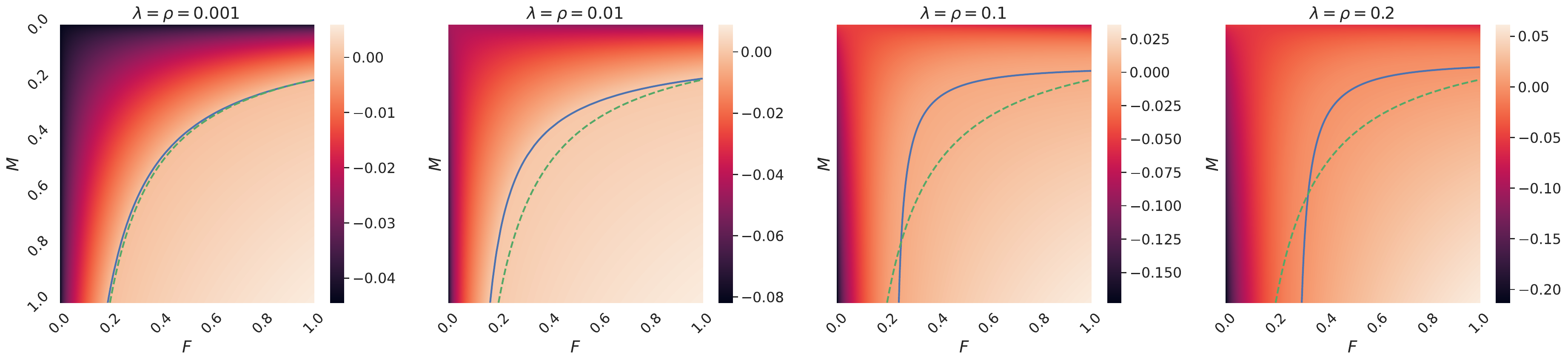}
    \vspace{-.3cm}
	\caption{$\Delta(F,M)$ heatmaps. The blue continuous line is the (numerical) indifference region $\mathcal{I} = \Delta^{-1} \big( \{0\} \big)$. The green dotted line is the indifference region in the limit case $\lambda=\rho=0$, where $\Delta$ is given by Eq. \eqref{eq:2608_1402}. Each subplot is obtained from a different pair $(\lambda, \rho)$. From left to right: $\lambda=\rho=0.001$, $0.01$, $0.1$, and $0.2$.}
\label{img:delta_heatmap_nosur}
\end{figure}

Finally, we study the monotonicity of $\Delta$ outside the small-update regime. Figure \ref{img:delta_monotonicity} describes the monotonicity as a function of the update parameters $\lambda$ and $\rho$. We define the lower and upper increments:
\begin{align}
    & m_F = \min_{F_1<F_2, M} \delta_F, \quad M_F = \max_{F_1<F_2, M} \delta_F; \qquad\qquad & m_M = \min_{F, M_1<M_2} \delta_M, \quad M_M = \max_{F, M_1<M_2} \delta_M \\
    & \delta_F = \frac{\Delta(F_2, M) - \Delta(F_1, M)}{F_2 - F_1} \qquad\qquad & \delta_M = \frac{\Delta(F, M_2) - \Delta(F, M_2)}{M_2 - M_1} \nonumber
\end{align}
Then, we plot the heatmap of the function:
\begin{equation}
    \big[ 1 + sign(m_F \, M_F) \big] \frac{sign(m_F)}{2} = \begin{cases} 1 & \text{if } m_F < M_F < 0 \implies \Delta \text{ increasing in } F \\ 0 & \text{if } m_F < 0 < M_F \\ -1 & \text{if } 0 < m_F < M_F \implies \Delta \text{ decreasing in } F \end{cases}
\end{equation}
\begin{figure}[ht]
\centering
    \includegraphics[width=\linewidth]{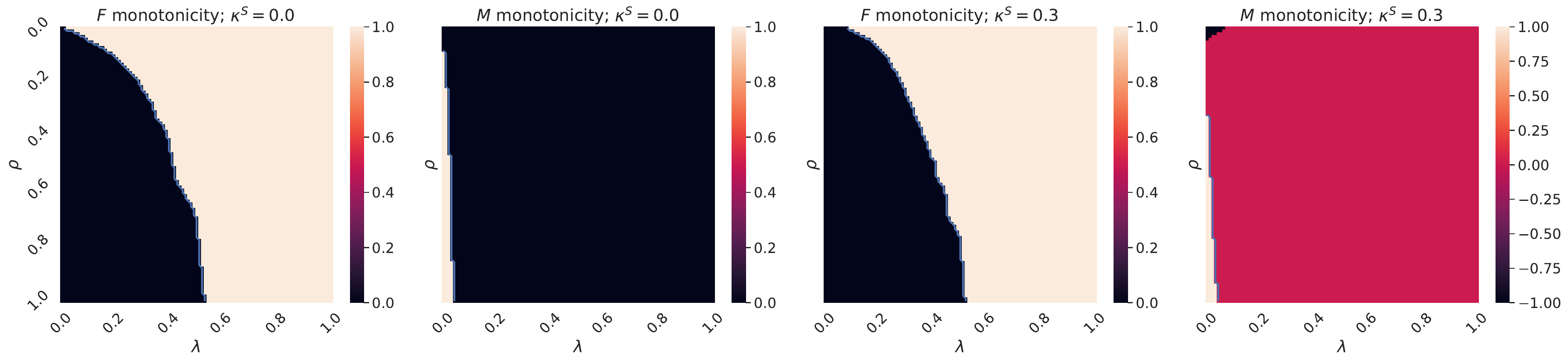}
    \vspace{-.3cm}
	\caption{Monotonicity of $\Delta$ in $F$ and $M$ as a function of $\lambda$ and $\rho$. $1$ is for increasing $\Delta$; $-1$ is for decreasing $\Delta$. The two left subplots are for $\kappa^S = 0$; the right subplots show $\kappa^S = 0.3$. Odd (or even) subplots represent the monotonicity of $\Delta$ in $F$ (or $M$). The other parameters are $\beta=0.9$, and $a=0.03$.}
\label{img:delta_monotonicity}
\end{figure}

\section{Optimal Effort and the Cost of Enforcement}
\label{sec:effort}

In this section, we analyze the impact of the state variables $F$ and $M$ and the stake coefficient $\kappa^S$ on the optimal effort provided by the agent. For tractability, since we are interested in the effort only when the agent does not deviate from honest behavior, we analyze the problem assuming $\mathcal{C} = \mathcal{X}$ -- that is, the agent always behaves honestly. This allows us to study a second dimension of the S\&S policy. On the one side, it helps preventing opportunistic deviation; on the other side, the staking mechanism may affect the effort optimally supplied by the agent.

The additional assumption we adopt in this section is:

\begin{assumption} \label{ass:honest}
    Let the continuation set equal the whole state space: $\mathcal{C} = \mathcal{X}$. Thus, $V=V^{cont}$ and it is the fixed point of the continuation-branch Bellman operator $\mathcal{B}^{cont} : W \in C_b(\mathcal{X}) \rightarrow V^{cont}_W \in C_b(\mathcal{X})$. Moreover, the maturity update $\sigma$ is independent of $F$; that is, $\sigma = \sigma(M)$ increasing and concave. Finally, the opportunity cost is linear: $c^o : v \rightarrow a \, v$ with $a < f$.
\end{assumption}

The following technical result will be useful to prove the monotonicity. The proof of the following proposition and the related corollary are in Appendix \ref{app_proofs_comparative}.

\begin{proposition}[$V$ concave] \label{prop:0509_0331}
Assume \ref{ass:update}--\ref{ass:cost}. Additionally, assume \ref{ass:honest}. Then, the value function $V(F, M)$ is concave in $F$.
\end{proposition}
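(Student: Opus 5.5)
The plan is to show that the continuation-branch Bellman operator $\mathcal{B}^{cont}$ of Assumption \ref{ass:honest} maps a closed subset $\mathcal{K}\subset C_b(\mathcal{X})$ into itself, where $\mathcal{K}$ consists of the functions that are concave and non-decreasing in $F$ for each fixed $M$. Since $V$ is the unique fixed point of $\mathcal{B}^{cont}$, value iteration started from $W_0\equiv 0\in\mathcal{K}$ then converges in sup norm to $V$, and $V$ inherits concavity in $F$ if $\mathcal{K}$ is closed. The contraction property of $\mathcal{B}^{cont}$ (modulus $\beta$) follows exactly as in the proof of Proposition \ref{prop:0509_0709}, since we have only dropped the deviation branch. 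The set $\mathcal{K}$ is closed under uniform limits because pointwise inequalities such as $W(tF_1+(1-t)F_2,M)\ge tW(F_1,M)+(1-t)W(F_2,M)$ and $W(F_1,M)\le W(F_2,M)$ pass to the limit.

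Under Assumption \ref{ass:honest}, the operator reads
\[
(\mathcal{B}^{cont}W)(F,M)=(f-a\kappa^S)\,v(F,M)+\max_{q\in[0,1]}\Big\{-c(q)+\beta\,W\big((1-\lambda)F+\lambda s(q),\,T_M(M)\big)\Big\},
\]
where $T_M(M)=(1-\rho)M+\rho\sigma(M)$ does not depend on $F$ because $\sigma=\sigma(M)$. This independence is the key simplification: for fixed $M$ the second argument of $W$ is a constant $m:=T_M(M)$, so only the one-variable map $F\mapsto W(\cdot,m)$ matters.

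I would verify the invariance in three steps. First, $a<f$ and $\kappa^S<\theta\le1$ give $f-a\kappa^S>0$, and by Assumption \ref{ass:volume} $v(\cdot,M)$ is non-decreasing and concave, so the first term belongs to $\mathcal{K}$. Second, the map $(F,q)\mapsto(1-\lambda)F+\lambda s(q)$ is jointly concave, because it is affine in $F$ and $s$ is concave. Composing this map with $W(\cdot,m)$, which is concave and non-decreasing, yields a function that is jointly concave in $(F,q)$; adding $-c(q)$, which is concave since $c$ is convex, preserves joint concavity. Third, partial maximization over the convex set $q\in[0,1]$ of a jointly concave function gives a concave function of $F$. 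Indeed, if $q_1,q_2$ attain the maximum at $F_1,F_2$, then the convex combination of $q_1$ and $q_2$ is feasible at the convex combination of $F_1$ and $F_2$ and attains at least the corresponding convex combination of the values. Attainment of the maximum follows from continuity of $c$, $s$ and $W$ on the compact set $[0,1]$. Monotonicity in $F$ is preserved because the inner objective is non-decreasing in $F$ for each $q$, since $1-\lambda>0$ and $W$ is non-decreasing, and a pointwise maximum of non-decreasing functions is non-decreasing. Continuity of $\mathcal{B}^{cont}W$ is already known from the well-posedness argument behind Proposition \ref{prop:0509_0709}.

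The main obstacle is the composition step. Concavity of $W\circ g$ with $g$ concave requires $W$ to be non-decreasing as well as concave, which is why monotonicity must be carried along in the invariant class rather than imported afterward from Proposition \ref{prop:0509_0709}. This is also where the $F$-independence of $\sigma$ is indispensable: otherwise the second argument of $W$ would move with $F$, and joint concavity in $(F,M)$ would be needed, which the assumptions on $\sigma$ (only a non-negative cross-derivative) do not provide. With the invariance in hand, the fixed-point argument closes the proof.
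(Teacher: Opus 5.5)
Your proposal is correct and follows essentially the same route as the paper. The paper uses an invariant set in $C_b(\mathcal{X})$ of functions that are non-decreasing and concave in $F$, closed under uniform limits, and shows it is preserved by $\mathcal{B}^{cont}$ because $(f-a\kappa^S)v$ is concave in $F$, the kernel $-c(q)+\beta W\big((1-\lambda)F+\lambda s(q),T_M(M)\big)$ is jointly concave in $(F,q)$ when $W$ is concave and non-decreasing, and partial maximization over $q\in[0,1]$ preserves concavity; the Banach fixed-point theorem then places $V$ in the set, and your remarks on carrying monotonicity in the invariant class and on the role of $\sigma=\sigma(M)$ just make explicit what the paper uses implicitly.
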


\begin{corollary}[$U$ concave] \label{cor:0825_1542}
    The continuation kernel $U$ is strictly concave in $q$. Specifically, it admits an unique maximizer $q^* = \text{argmax}_q U(F, M, q)$.
\end{corollary}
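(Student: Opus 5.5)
The plan is to read the corollary directly off Proposition~\ref{prop:0509_0331} through the structure of the continuation kernel. Write
\[
U(F,M;q) = -c(q) + \beta\,V\big(T_F(F,q),\,T_M(F,M)\big), \qquad T_F(F,q) = (1-\lambda)F + \lambda s(q).
\]
Under Assumption~\ref{ass:honest}, $\sigma=\sigma(M)$, so $T_M(F,M)=(1-\rho)M+\rho\sigma(M)$ does not depend on $q$. Hence, for fixed $(F,M)$, the map $q\mapsto \beta V(T(F,M,q))$ is the composition $q \mapsto s(q) \mapsto \phi(s) := V\big((1-\lambda)F+\lambda s,\; M'\big)$, with $M' := T_M(F,M)$ fixed.

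First, Proposition~\ref{prop:0509_0331} gives that $F\mapsto V(F,M')$ is concave, so $\phi$ is concave in $s$, being the composition with an affine map. Second, Proposition~\ref{prop:0509_0709} shows that $V$ is non-decreasing in $F$, and therefore $\phi$ is non-decreasing. The standard composition rule then applies: a concave non-decreasing function of a concave function is concave. Since $s$ is concave by Assumption~\ref{ass:update}, $q\mapsto \phi(s(q))$ is concave on $[0,1]$. If $V$ is not differentiable we avoid derivatives and check the rule directly from the definition: for $q_\alpha=\alpha q_1+(1-\alpha)q_2$ we have $s(q_\alpha)\ge \alpha s(q_1)+(1-\alpha)s(q_2)$, and then monotonicity followed by concavity of $\phi$ gives $\phi(s(q_\alpha))\ge \alpha\phi(s(q_1))+(1-\alpha)\phi(s(q_2))$. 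Multiplying by $\beta>0$ preserves concavity.

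Third, $-c$ is strictly concave by Assumption~\ref{ass:cost}, and a strictly concave function plus a concave function is strictly concave. So $U(F,M;\cdot)$ is strictly concave on $[0,1]$. Existence of a maximizer follows because $U$ is continuous in $q$ on the compact set $[0,1]$: $V$ is continuous by Proposition~\ref{prop:0509_0709}, and $s$ and $c$ are continuous. Uniqueness follows from strict concavity: if there were two maximizers, their midpoint would give a strictly larger value, a contradiction.

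The main obstacle is checking that the hypotheses of Proposition~\ref{prop:0509_0331} and of the monotonicity result actually apply to $V=V^{cont}$, which is the fixed point of $\mathcal{B}^{cont}$ rather than of $\mathcal{B}$. Under Assumption~\ref{ass:honest} the two value functions coincide, so monotonicity transfers. If it does not transfer directly, I would rerun the same induction argument on $\mathcal{B}^{cont}$: it preserves non-decreasing functions because $v$ and $T$ are order-preserving and $(c^o)'<f$, and monotonicity then passes to the limit.
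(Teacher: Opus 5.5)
Your proposal is correct and follows the paper's proof: you combine strict concavity of $-c$ with concavity of $q\mapsto V\big(T_F(F,q),T_M(F,M)\big)$, which comes from composing the concave, non-decreasing map $F\mapsto V(F,\cdot)$ (Propositions~\ref{prop:0509_0331} and~\ref{prop:0509_0709}) with the concave map $q\mapsto (1-\lambda)F+\lambda s(q)$. You are slightly more careful than the paper in two places: you verify the composition rule directly from the definition, without differentiability, and you justify existence of a maximizer by continuity on the compact set $[0,1]$, which the paper leaves implicit.
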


In the following, we first find that the agent's response to the state variable is asymmetric: greater maturity increases effort, while a high reputation level decreases quality. Then, we explicitly derive that an increased staking amount, although necessary to reduce deviation temptation, degrades the service provided. Thus, it's in the protocol governance interest to set $\kappa^S$ as small as possible, provided it is enough to prevent deviation. For example, one can set $\kappa^S = \bar{\kappa}^S$, as discussed in Eq. \eqref{eq:0512_2047} or, in the small-update regime, one can use the condition derived by Eq. \eqref{eq:0512_2045}.

The key theoretical instruments used to describe the effort provided are the supermodularity results, see \cite{topkis1998supermodularity} for a detailed discussion. Specifically, a function $U(F,M)$ is said to be supermodular if, $\forall F_1 \le F_2$ and $\forall M_1 \le M_2$, the functional $\mathcal{M}$ assumes non-negative values:
\begin{equation}
    \mathcal{M}(U, F_{1,2}, M_{1,2}) = \mathcal{M} (U, F_1, F_2, M_1, M_2) = U(F_1, M_1) + U(F_2, M_2) - U(F_1, M_2) - U(F_2, M_1)
\end{equation}
Similarly, the function $U(F,M)$ is said to be submodular if, for every couple of pairs, it holds $\mathcal{M} (U, F_{1,2}, M_{1,2}) \le 0$. A well-established result \cite{topkis1978minimizing}  provides a sufficient condition for the optimal effort $q$ to be monotonically dependent on $F$ and $M$. If the function $U$ is submodular in $(F,q)$, then $q^* = argmax_q U(F, M, q)$ is non-increasing in $F$; similarly, if $U$ is supermodular in $(M, q)$, then $q^*(F,M) = argmax_{q} U(F, M, q)$ is non-decreasing in $M$. The supermodularity of $V$ is discussed in the following result, proved in Appendix \ref{app_proofs_comparative}.

\begin{proposition}[Optimal Effort - Comparative Statics] \label{prop:0512_2148}
Assume \ref{ass:update}--\ref{ass:cost}. Additionally, assume \ref{ass:honest}. Then, the optimal effort adopted by the agent, $q^*(F,M; \kappa^S) = argmax_{q \in [0,1]} U(F, M, q)$, is such that:
\begin{itemize}
    \item $q^*(F,M; \kappa^S)$ is non-increasing in $F$.
    \item $q^*(F,M; \kappa^S)$ is non-decreasing in $M$.
    \item $q^*(F,M; \kappa^S)$ is non-increasing in $\kappa^S$.
\end{itemize}
\end{proposition}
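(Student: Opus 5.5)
The plan is to apply the monotone comparative statics result of \cite{topkis1978minimizing} to the continuation kernel. Under Assumption~\ref{ass:honest}, $T_M(M)=(1-\rho)M+\rho\,\sigma(M)$ does not depend on $F$ or $q$, so
\[
U(F,M,q;\kappa^S)=-c(q)+\beta\,V\big((1-\lambda)F+\lambda s(q),\,T_M(M);\kappa^S\big).
\]
Effort therefore enters only through the $F$-coordinate of the next state, via $x:=T_F(F,q)$. By Corollary~\ref{cor:0825_1542} the maximizer $q^*$ is unique, so single-valued monotonicity statements suffice. The three bullets then reduce to three lattice properties of $U$:
\begin{itemize}
\item submodularity in $(F,q)$;
\item supermodularity in $(M,q)$;
\item submodularity in $(\kappa^S,q)$.
\end{itemize}
Each of these will follow from a property of $V$ itself.

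\emph{Step 1 ($F$).} By Proposition~\ref{prop:0509_0331}, $V$ is concave in $F$. Take $F_1\le F_2$ and $q_1\le q_2$, and set $a_i=(1-\lambda)F_i$ and $b_j=\lambda s(q_j)$, so that $b_1\le b_2$ because $s'>0$. Concavity of $V(\cdot,M')$ gives
\[
V(a_2+b_2)-V(a_2+b_1)\le V(a_1+b_2)-V(a_1+b_1),
\]
since a concave function has decreasing increments along a translation. This is exactly $\mathcal{M}(U,\cdot)\le 0$ in $(F,q)$; the $-c(q)$ term is separable and does not affect it. Hence $q^*$ is non-increasing in $F$.

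\emph{Step 2 ($M$): supermodularity of $V$ in $(F,M)$.} Since $T_M$ is non-decreasing in $M$ (as $1-\rho+\rho\sigma'\ge0$) and $s$ is increasing, increasing differences of $V$ in $(F,M)$ transfer to increasing differences of $U$ in $(q,M)$. So it suffices to prove that $V$ is supermodular in $(F,M)$, which is the core of the argument. I will show that $\mathcal{B}^{cont}$ maps the closed set of functions that are supermodular in $(F,M)$, concave in $F$ and non-decreasing into itself; the contraction fixed point then inherits these properties.
\begin{itemize}
\item The flow term $(f-a\kappa^S)v(F,M)$ is supermodular because $\partial^2_{F,M}v\ge0$ and $f-a\kappa^S>0$.
\item The maximization term is the main obstacle, because a pointwise maximum of supermodular functions need not be supermodular. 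Moreover, $W(T_F(F,q),\cdot)$ is \emph{sub}modular in $(F,q)$, so Topkis' preservation theorem cannot be applied in the variables $(F,M,q)$ directly.
\end{itemize}
To get around this, I will change the control variable to $x=T_F(F,q)$, which ranges over the interval $\Gamma(F)=[(1-\lambda)F+\lambda s(0),\,(1-\lambda)F+\lambda s(1)]$. The maximization term becomes
\[
\max_{x\in\Gamma(F)}\Big\{-\varphi\big(x-(1-\lambda)F\big)+\beta\,W\big(x,T_M(M)\big)\Big\},\qquad \varphi(y):=c\big(s^{-1}(y/\lambda)\big).
\]
Here $\varphi$ is increasing and convex, because $c$ is convex increasing and $s^{-1}$ is convex increasing (as $s$ is concave increasing). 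This objective is supermodular in all pairs of $(x,F,M)$:
\begin{itemize}
\item $-\varphi(x-kF)$ with $\varphi$ convex is supermodular in $(x,F)$;
\item $W(x,T_M(M))$ is supermodular in $(x,M)$ by the inductive hypothesis, since $T_M$ is monotone;
\item the objective is separable in $(F,M)$.
\end{itemize}
The feasible set $\Gamma(F)$ is an interval whose endpoints increase with $F$, hence ascending. Topkis' preservation-under-maximization theorem then yields supermodularity in $(F,M)$. Concavity in $F$ and monotonicity are preserved as in Propositions~\ref{prop:0509_0709} and~\ref{prop:0509_0331}. Because the limit of supermodular functions is supermodular, the fixed point $V$ is supermodular in $(F,M)$.

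\emph{Step 3 ($\kappa^S$).} I treat $\kappa^S$ as an additional parameter with the reversed order $\tau=-\kappa^S$. The flow $(f+a\tau)v(F,M)$ is then supermodular in $(F,\tau)$, because its cross-partial is $a\,\partial_F v\ge0$, and also in $(M,\tau)$. Repeating the invariance argument of Step 2 with $(F,M,\tau)$ shows that $V$ has increasing differences in $(F,\tau)$. In the $x$-formulation, the objective then has increasing differences in $(x,\tau)$, so the optimal $x^*$, and hence $q^*=s^{-1}\big((x^*-(1-\lambda)F)/\lambda\big)$, is non-decreasing in $\tau$, i.e.\ non-increasing in $\kappa^S$.

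The one point I must check carefully is that $\mathcal{B}^{cont}$ remains a contraction on the closed subset of functions carrying all three properties jointly. This holds because each of these properties is preserved under uniform limits.
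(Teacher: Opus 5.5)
Your proof is correct. For the first two bullets it coincides with the paper's argument. Decreasing increments of the concave map $V(\cdot,M')$ give submodularity of $U$ in $(F,q)$. Supermodularity of $V$ in $(F,M)$ comes from an invariant-set argument for $\mathcal{B}^{cont}$ after the reparametrization $z=T_F(F,q)$: convexity of $c\circ s^{-1}$ handles the $(F,z)$ pair, and Topkis's preservation theorem is applied on the ascending interval of feasible $z$.

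The genuine difference is in the third bullet. You place $\tau=-\kappa^S$ on the lattice and rerun the same invariance argument to get joint supermodularity of $V$ in $(F,M,\tau)$. Every pairwise interaction in the reparametrized objective is then supermodular in one consistent orientation, and increasing differences of $U$ in $(q,\tau)$ follow.

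The paper proceeds differently. It builds an invariant set around submodularity of the kernel $U_W$ in $(\kappa^S,q)$ and expands two periods through $T(T(F,M,q),u)$. It then maximizes over $u$ an objective claimed to be pairwise submodular in $(q,u,\kappa^S)$, where the $(q,u)$ pair comes from concavity of $W$. Reversing the order of one variable flips the sign of exactly two of the three pairs, so no reorientation makes all three supermodular. Topkis's theorem therefore does not cover that step as written. For example, $-(x+u-1)^2-(y+u-1)^2$ is pairwise submodular in $(x,y,u)$, yet its maximum over $u\in[0,1]$, for $x,y\in[0,1]$, equals $-(x-y)^2/2$, which is supermodular in $(x,y)$.

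Your change of variables removes exactly this obstruction. Your route is thus more economical, since one invariance argument serves both the second and third bullets. It is also on firmer ground for the comparative static in the stake.
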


\begin{remark}
    The monotonicity of $q$ can be related to a kind of break-even condition. When the reputation $F$ increases, the profit from raising the effort to further increase the reputation is smaller due to the concavity of the volume function. Similarly, when the stake requirement $\kappa^S$ increases, there is a rise in the opportunity cost that makes higher volumes less profitable. Contrastingly, when the maturity $M$ increases, the agent finds it more profitable to raise the effort as higher future values of the reputation are better remunerated due to the positive mixed effect on the volume.
\end{remark}

\section{Conclusion}
\label{sec:conc}

This paper studies the economic role of reputation in environments where there are low frictions associated with identity reset and reputation reconstruction. Despite Autonomous AI-agent markets provide a particularly relevant application of this framework, the theoretical mechanism developed in this paper extends well beyond blockchain-based systems. Similar application scenarios arise whenever (i) reputation creates value because it increases the future economic activity that an agent can attract, and (2) that accumulated value may itself become the object of opportunistic extraction whenever identities can be abandoned and rebuilt at sufficiently low cost. Broadly, our results show that the latter way can be more profitable than the former. The effectiveness of reputation alone, therefore, cannot always guarantee the correct behaviors of the agents. The protocol governance should apply alternative policies to enforce the correct behavior, such as S\&S.

Specifically, we formalize this trade-off through a dynamic model in which an autonomous agent repeatedly chooses between preserving reputational capital and opportunistically liquidating it. The framework distinguishes between reputation quality and reputation maturity, capturing the idea that identical reputation scores may generate very different economic incentives depending on the amount of evidence supporting them. The analysis characterizes the conditions under which honest behavior is optimal and identifies how reputation persistence, rebuilding costs, extractable value, demand sensitivity, staking requirements, and slashing jointly determine the effectiveness of reputational discipline.

The model also highlights a broader design trade-off. Financial enforcement mechanisms such as staking and slashing strengthen incentives against opportunistic behavior by increasing the cost of deviation, but they may simultaneously reduce the incentives of honest agents to invest in service quality. Moreover, when the staking opportunity cost is too high, the paradoxically effect is to incentive deviation. The design of decentralized reputation systems, therefore, requires balancing deterrence against misconduct with the preservation of productive incentives; moreover, the protocol should implement actions to reduce the opportunity cost, such as remunerating the staked amount.


Several directions remain open for future research. A natural extension is to incorporate stochastic feedback, heterogeneous users, endogenous reputation aggregation, and strategic evaluators. Another promising direction is to integrate the theoretical framework with empirical data from emerging autonomous-agent markets, studying how reputation evolves jointly with task allocation, payment flows, and identity dynamics. More generally, as AI agents become increasingly autonomous economic actors, understanding how reputation, identity, and financial incentives interact will be essential for designing trustworthy and economically sustainable digital markets.

\section*{Acknowledgments}
FG acknowledges support from the research project ‘Dynamics and Information Research Institute—Quantum Information, Quantum Technologies’ within the agreement between UniCredit Bank and Scuola Normale Superiore.\\
FT acknowledges support from the grant PRIN2022 DD N. 104 of February 2, 2022 ”Liquidity and systemic risks in centralized and decentralized markets”, codice proposta 20227TCX5W - CUP J53D23004130006 funded by the European Union NextGenerationEU through the Piano Nazionale di Ripresa e Resilienza (PNRR).\\

\bibliographystyle{plainnat}
\bibliography{biblio}

\newpage
\appendix
\section{Symbols}
\label{app_symbols}
Table \ref{tab:symbols} summarizes the symbols notation used throughout the paper.
{ \rowcolors{2}{white}{black!5}
\begin{table}[H]
    \centering
    \begin{tabularx}{\textwidth}{L{5.8cm} | Y} 
        \toprule
        \textbf{Symbol} & \textbf{Meaning} \\
        \midrule
        $\mathcal{X}$ & State space: $[0,1]^2$ \\
        $(F,M)$ & Element in $\mathcal{X}$: reputation level, maturity of the reputation. Each is normalized in $[0,1]$. \\
        $h : \mathcal{X} \rightarrow \{0,1\}$ & Honesty control: $h=1$ means honest behavior; $h=0$ flags the opportunistic deviation. \\
        $q : \mathcal{X} \rightarrow \{0,1\}$ & Effort provided by the agent, normalized in $[0,1]$. \\
        $v:\mathcal{X} \rightarrow \mathbb{R}_+$ & Volume handled by the agent as a function of the state. \\
        $\pi:\mathcal{X} \times \{0,1\} \times [0,1] \rightarrow \mathbb{R}_+$ & Profit of the agent, as a function of the state and honesty and effort controls. \\
        $\pi^{cont}:\mathcal{X} \times [0,1] \rightarrow \mathbb{R}_+$ & Profit assuming the agent honestly behaves (function of state and effort). \\
        $\pi^{dev}:\mathcal{X} \rightarrow \mathbb{R}_+$ & Profit assuming the agent deviates (function of state only). \\
        $\theta \in (0,1]$ & Percentage of the volume appropriated by the agent when deviating. \\
        $\kappa^S \in [0,1]$ & Stake required by the protocol, expressed as a percentage of the volume handled by the agent. \\
        $\kappa^R \ge 0$ & Cost of starting a new identity. \\
        $f \in (0,1]$ & Percentage of the volume gained by the agent when honestly behaves (e.g., fee). \\
        $c^o: \mathbb{R}_+ \rightarrow \mathbb{R}_+$ & Opportunity cost of staking, as a function of the required stake. \\
        $c:[0,1] \rightarrow \mathbb{R}_+$ & Cost of providing a given level of effort. \\
        $a \in [0,f]$ & When $c^o$ is assumed to be linear, it is the opportunity cost coefficient: $c^o(x) = a \, x$. \\
        $\bar{\kappa}^S$ & Myopic staking thresholds: $\kappa^s \ge \bar{\kappa}^S$ is a sufficient condition for the agent to never deviate. \\
        $\sigma : \mathcal{X} \rightarrow [0,1]$ & Update of the maturity $M$, as a function of the current state . \\
        $T = (T_F, T_M): \mathcal{X} \times [0,1] \rightarrow \mathcal{X}$ & Continuation transition, function of the current state and the effort level. \\
        $(\lambda, \rho) \in [0,1]^2$ & Coefficients of the reputation and maturity components of the continuation transition. \\
        $R = (R_F, R_M): \mathcal{X} \rightarrow \mathcal{X}$ & Deviation transition function, function of the current state.\\
        $R^{same}$ and $R^{new}$ & Two examples of deviation transitions: same identity with penalized reputation and new identity, respectively. \\
        $\beta \in [0,1]$ & Time-discount factor. \\
        $V:\mathcal{X} \rightarrow \mathbb{R}_+$ & Value function of the agent. \\
        $V^{dev}_W:\mathcal{X} \rightarrow \mathbb{R}_+$ & Deviation branch with candidate state function $W$: $\pi^{dev}$ plus discounted $W$ evaluated in the deviation transition state. \\
        $V^{cont}_W:\mathcal{X} \rightarrow \mathbb{R}_+$ & Continuation branch with candidate state function $W$: $\pi^{cont}$ plus discounted $W$ evaluated in the continuation transition state (optimal effort). \\
        $U_W:\mathcal{X} \times [0,1] \rightarrow \mathbb{R}_+$ & Continuation kernel with candidate state function $W$: $-c(q)$ plus discounted $W$ evaluated in the continuation transition state (effort $q$). \\
        $\mathcal{C}, \mathcal{D}, \mathcal{I} \subset \mathcal{X}$ & Continuation, deviation, and indifference regions. \\
        $\Delta: \mathcal{X} \rightarrow [0,1]$ & Temptation function, defined as $V^{dev} - V^{cont}$. \\
        $L_M, L_M, L_v, L_\sigma, L_{R,F}, L_{R,MF}, L_{R,MM}$ & Lipschitz constants of different functions. \\
        \bottomrule
    \end{tabularx}
    \caption{Summary of the paper's main notation.}
    \label{tab:symbols}
\end{table}
}

\section{Proofs}
\label{app_proofs}
This appendix contains the proof of the results stated in the main text. The proofs are divided according to the main paper sections.

\subsection{Proofs of Section \ref{sec:dynamic}} \label{app_proofs_dynamic}
The following is the proof of Proposition \ref{prop:0509_0709}.
\begin{proposition*}[Existence, Monotonicity, and Sign]
Assume \ref{ass:update}--\ref{ass:cost}. Then there exists a unique, uniformly continuous value function $V$ satisfying Eq.~\eqref{eq:valuefunction_bellman}. Furthermore:
\begin{itemize}[label=$\triangleright$]
    \item $V$ is non-decreasing in $F$ and $M$;
    \item $V$ is non-increasing in $\kappa^S$;
    \item $V(F,M)\geq0$ for every $(F,M)\in\mathcal{X}$.
\end{itemize}
\end{proposition*}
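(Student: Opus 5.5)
The plan is to treat the problem as a standard discounted dynamic program on the compact state space $\mathcal{X}$ and to apply the Banach fixed-point theorem to the Bellman operator $\mathcal{B}$ of Eq.~\eqref{eq:bellman_operator} on $C_b(\mathcal{X})$. Each shape property will then come from showing that $\mathcal{B}$ maps a closed subset of $C_b(\mathcal{X})$ into itself.

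\textbf{Step 1 (well-definedness).} For $W\in C_b(\mathcal{X})$, I will show that $\mathcal{B}W\in C_b(\mathcal{X})$.
\begin{itemize}
\item The deviation branch $V^{dev}_W$ is continuous, since $v$ is continuous (it is $C^1$ under Assumption~\ref{ass:volume}), $R$ is continuous by Assumption~\ref{ass:deviation_transition}, and $W$ is continuous.
\item In the continuation branch, $U_W(F,M;q)=-c(q)+\beta W(T(F,M,q))$ is jointly continuous on the compact set $\mathcal{X}\times[0,1]$. Here I use that $c$ is convex and finite on $[0,1]$ (so continuous on the interior, and I will take it continuous at the endpoints) and that $T$ is continuous because $s$ and $\sigma$ are.
\item Berge's maximum theorem, with the constant compact constraint set $[0,1]$, then makes $\max_q U_W$ continuous.
\item The maximum of two continuous functions is continuous. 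Boundedness follows from compactness.
\end{itemize}

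\textbf{Step 2 (contraction).} Using $|\max\{a,b\}-\max\{a',b'\}|\le\max\{|a-a'|,|b-b'|\}$ and $|\max_q g-\max_q h|\le\sup_q|g-h|$, I get $\|\mathcal{B}W-\mathcal{B}W'\|_\infty\le\beta\|W-W'\|_\infty$. Blackwell's sufficient conditions (monotonicity and discounting) would also give this.

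The contraction mapping theorem then yields a unique fixed point $V\in C_b(\mathcal{X})$. It is uniformly continuous because $\mathcal{X}$ is compact. I will also note briefly that this fixed point coincides with the sequence-problem value in Eq.~\eqref{eq:valuefunction}. This is standard for bounded rewards with $\beta<1$: iterate the Bellman equation and use the fact that a maximizer exists at each step.

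\textbf{Step 3 (shape properties via closed invariant sets).} Let $\mathcal{K}$ be the set of functions in $C_b(\mathcal{X})$ that are non-decreasing in $F$ and $M$ and non-negative. This set is closed under uniform limits. Starting from $W_0\equiv 0$, the iterates $\mathcal{B}^nW_0$ converge to $V$, so it suffices to show $\mathcal{B}(\mathcal{K})\subseteq\mathcal{K}$.

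\emph{Monotonicity.}
\begin{itemize}
\item In the deviation branch, $(\theta-\kappa^S)v$ is non-decreasing because $\kappa^S<\theta$ and $v$ is monotone. $W\circ R$ is non-decreasing because $R$ is order-preserving.
\item In the continuation branch, $T_F$ is non-decreasing in $F$ and independent of $M$. $T_M=(1-\rho)M+\rho\sigma(F,M)$ is non-decreasing in both arguments since $\sigma$ is. Hence $T(\cdot,\cdot,q)$ is order-preserving for every $q$, so $W\circ T$ is non-decreasing pointwise in $q$ and so is its maximum over $q$.
\item The flow term $g(v):=fv-c^o(\kappa^S v)$ has derivative $f-\kappa^S(c^o)'(\kappa^S v)\ge f-(c^o)'>0$, using $\kappa^S<1$ and Assumption~\ref{ass:cost}. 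So $g$ is increasing in $v$, and therefore non-decreasing in $(F,M)$.
\end{itemize}

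\emph{Non-negativity.} I bound $\mathcal{B}W\ge V^{cont}_W\ge g(v)-c(0)+\beta W(T(F,M,0))\ge 0$, using $q=0$, $c(0)=0$, $g(v)\ge0$ (since $g(0)=0$ and $g$ is increasing) and $W\ge0$.

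\emph{Monotonicity in $\kappa^S$.} I will compare fixed points through the operator. Write $\mathcal{B}_{\kappa}$ for the Bellman operator with stake $\kappa$, and take $\kappa_1\le\kappa_2$.
\begin{itemize}
\item $R$ does not depend on $\kappa^S$ (Assumption~\ref{ass:deviation_transition}), and neither does $T$.
\item $(\theta-\kappa)v$ is non-increasing in $\kappa$, and $-c^o(\kappa v)$ is non-increasing in $\kappa$ because $c^o$ is increasing.
\item Hence $\mathcal{B}_{\kappa_1}W\ge\mathcal{B}_{\kappa_2}W$ for every $W$. Since both operators are monotone ($W\le W'\Rightarrow\mathcal{B}W\le\mathcal{B}W'$), induction from a common starting point gives $\mathcal{B}_{\kappa_1}^nW_0\ge\mathcal{B}_{\kappa_2}^nW_0$.
\item Passing to the limit gives $V_{\kappa_1}\ge V_{\kappa_2}$.
\end{itemize}

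\textbf{Expected main obstacle.} The delicate step is the continuity in Step 1: Berge's theorem for the inner maximization, plus the continuity of $c$ at the endpoints of $[0,1]$, which the paper does not state explicitly. I would handle it by using the uniform continuity of $W$ on compact sets. The positivity of $g'$ requires the footnoted condition $(c^o)'<f$ together with $\kappa^S\le1$; this is exactly where that technical assumption enters.
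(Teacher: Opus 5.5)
Your proposal is correct and follows essentially the same route as the paper. Both use a $\beta$-contraction argument for $\mathcal{B}$ on $C_b(\mathcal{X})$, closed $\mathcal{B}$-invariant sets for monotonicity in $(F,M)$ and for non-negativity (which you merge into a single set $\mathcal{K}$), and, for $\kappa^S$, the pointwise ordering $\mathcal{B}_{\kappa_1}W\ge\mathcal{B}_{\kappa_2}W$ combined with monotonicity of the operators. The remaining differences are cosmetic: you iterate both operators from $W_0\equiv 0$ instead of iterating $\mathcal{B}_{\kappa_1}$ from $V_{\kappa_2}$, and you explicitly flag the endpoint continuity of $c$ and the identification with the sequence-problem value, which the paper takes for granted.
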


\begin{proof}
First, we show existence and uniqueness. For every
$W\in C_b(\mathcal{X})$, continuity of $R$, $T$, $v$, $c^o$, and $c$,
together with compactness of the control set $[0,1]$, implies that
$\mathcal{B}W\in C_b(\mathcal{X})$. Thus, the Bellman operator maps
$C_b(\mathcal{X})$ into itself.

For any $W_1,W_2\in C_b(\mathcal{X})$, using
\[
\big|\max\{a,b\}-\max\{c,d\}\big|
\leq
\max\{|a-c|,|b-d|\},
\]
we obtain
\begin{align*}
\|\mathcal{B}W_1-\mathcal{B}W_2\|_\infty
&\leq
\max_{(F,M)\in\mathcal{X}}
\max\Bigg\{
\beta\,
\big|
W_1(R(F,M))-W_2(R(F,M))
\big|,
\\
&\qquad\qquad\qquad\qquad
\beta\,
\max_{q\in[0,1]}
\big|
W_1(T(F,M,q))-W_2(T(F,M,q))
\big|
\Bigg\}
\\
&\leq
\beta\,
\|W_1-W_2\|_\infty.
\end{align*}
Hence, $\mathcal{B}$ is a contraction with modulus $\beta$. The Banach
fixed-point theorem guarantees the existence of a unique fixed point
$V\in C_b(\mathcal{X})$. Since $\mathcal{X}$ is compact, $V$ is uniformly
continuous.

\medskip

$\bm{\triangleright}$ We now prove monotonicity in the reputational state.
Consider the set
\[
S
=
\left\{
W\in C_b(\mathcal{X})
\;\middle|\;
(F_1,M_1)\preceq(F_2,M_2)
\Longrightarrow
W(F_1,M_1)\leq W(F_2,M_2)
\right\},
\]
The set
$S$ is non-empty and closed under uniform convergence. We show that it is
invariant under $\mathcal{B}$.

Let $W\in S$ and
$(F_1,M_1)\preceq(F_2,M_2)$. For the deviation branch,
Assumption \ref{ass:volume} gives
\[
v(F_1,M_1)\leq v(F_2,M_2),
\]
while Assumption \ref{ass:deviation_transition} gives
\[
R(F_1,M_1)\preceq R(F_2,M_2).
\]
Since $W$ is increasing,
\[
W(R(F_1,M_1))
\leq
W(R(F_2,M_2)),
\]
and therefore
\[
V^{dev}_W(F_1,M_1)
\leq
V^{dev}_W(F_2,M_2).
\]

For the continuation branch, Assumption \ref{ass:update} implies, for every
$q\in[0,1]$,
\[
T(F_1,M_1,q)
\preceq
T(F_2,M_2,q),
\]
so that
\[
W(T(F_1,M_1,q))
\leq
W(T(F_2,M_2,q)).
\]
Hence,
\[
\max_{q\in[0,1]}U_W(F_1,M_1;q)
\leq
\max_{q\in[0,1]}U_W(F_2,M_2;q).
\]

Moreover, define
\[
g(x)=f\,x-c^o(\kappa^S x).
\]
By Assumption \ref{ass:cost},
\[
g'(x)
=
f-\kappa^S(c^o)'(\kappa^S x)>0,
\]
where we use $\kappa^S<\theta\leq1$ and $(c^o)'<f$.
Since $v$ is increasing,
\[
g(v(F_1,M_1))
\leq
g(v(F_2,M_2)).
\]
Thus,
\[
V^{cont}_W(F_1,M_1)
\leq
V^{cont}_W(F_2,M_2).
\]

Both branches preserve the componentwise order, and therefore
$\mathcal{B}W\in S$. Since $S$ is non-empty, closed, and invariant under
$\mathcal{B}$, the unique fixed point belongs to $S$. Hence, $V$ is
non-decreasing in both $F$ and $M$.

\medskip

$\bm{\triangleright}$ We next prove monotonicity in the staking requirement.
Let $0\leq\kappa^S_1\leq\kappa^S_2$ and denote by
$\mathcal{B}_{\kappa^S_i}$ the corresponding Bellman operators. For every
$W\in C_b(\mathcal{X})$, Assumption
\ref{ass:deviation_transition} implies that the post-deviation transition is
the same under the two staking levels. Hence,
\[
V^{dev}_{W,\kappa^S_1}(F,M)
\geq
V^{dev}_{W,\kappa^S_2}(F,M).
\]
Since $c^o$ is increasing,
\[
V^{cont}_{W,\kappa^S_1}(F,M)
\geq
V^{cont}_{W,\kappa^S_2}(F,M),
\]
and therefore
\[
\mathcal{B}_{\kappa^S_1}W
\geq
\mathcal{B}_{\kappa^S_2}W
\]
pointwise.

Moreover, each Bellman operator is monotone in its argument. Thus, if
$V_{\kappa^S_i}$ denotes its unique fixed point,
\[
V_{\kappa^S_2}
=
\mathcal{B}_{\kappa^S_2}V_{\kappa^S_2}
\leq
\mathcal{B}_{\kappa^S_1}V_{\kappa^S_2}.
\]
Iterating $\mathcal{B}_{\kappa^S_1}$ and taking the limit, using the
contraction property, gives
\[
V_{\kappa^S_2}
\leq
V_{\kappa^S_1}.
\]
Hence, $V$ is non-increasing in $\kappa^S$.

\medskip

$\bm{\triangleright}$ Finally, we prove non-negativity. Consider
\[
\widetilde{S}
=
\left\{
W\in C_b(\mathcal{X})
\;\middle|\;
W(F,M)\geq0,\quad\forall(F,M)\in\mathcal{X}
\right\}.
\]
The set $\widetilde{S}$ is non-empty and closed. Let
$W\in\widetilde{S}$. Since $q=0$ is feasible and $c(0)=0$,
\[
\max_{q\in[0,1]}U_W(F,M;q)
\geq
U_W(F,M;0)
=
\beta W(T(F,M,0))
\geq0.
\]
Furthermore, the function
\[
g(x)=f\,x-c^o(\kappa^S x)
\]
satisfies $g(0)=0$ and $g'(x)>0$, as shown above. Therefore
\[
f\,v(F,M)
-
c^o(\kappa^S v(F,M))
\geq0,
\]
and consequently
\[
V^{cont}_W(F,M)\geq0.
\]
Since
\[
(\mathcal{B}W)(F,M)
=
\max\left\{
V^{dev}_W(F,M),
V^{cont}_W(F,M)
\right\},
\]
we have
\[
(\mathcal{B}W)(F,M)\geq0.
\]
Thus, $\widetilde{S}$ is invariant under $\mathcal{B}$ and, by convergence
of the value iteration to the unique fixed point,
\[
V(F,M)\geq0
\qquad
\forall(F,M)\in\mathcal{X}.
\]
\end{proof}

The following is the proof of Proposition \ref{prop:R_temptation_comparison}.

\begin{proposition*}[Post-deviation comparison]
Let $W$ be non-decreasing in $F$ and $M$. Then, $\forall (F,M)$ such that:
\begin{equation}
    F \ge \frac{F_{reset} - \lambda \, s^{dev}}{1 - \lambda} \qquad\qquad M \ge \frac{M_{reset} - \rho \, \sigma(F,M)}{1 - \rho}
\end{equation}
we have:
\begin{equation}
\Delta_W^{same}(F,M)
-
\Delta_W^{new}(F,M)
=
\beta
\left[
W\big(R^{same}(F,M)\big)
-
W\big(R^{new}(F,M)\big)
\right]
\geq0.
\end{equation}
\end{proposition*}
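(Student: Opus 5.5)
The plan is a direct two-step computation: an algebraic cancellation followed by a componentwise comparison that uses the monotonicity of $W$.

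\textbf{Step 1 (the identity).} By definition, $\Delta_W^R = V^{dev}_{W,R} - V^{cont}_W$. The continuation branch $V^{cont}_W$ in Eq.~\eqref{eq:continuation_branch} does not involve $R$, so it is identical for $R^{same}$ and $R^{new}$ and cancels in the difference. In the deviation branch \eqref{eq:deviation_branch}, the immediate terms $(\theta-\kappa^S)\,v(F,M)-\kappa^R$ are also common to both mechanisms. I will treat $\kappa^R$ as a fixed parameter shared by both branches, so that only the $R$-dependent term changes. What remains is $\beta\,[W(R^{same}(F,M)) - W(R^{new}(F,M))]$, which is the stated equality.

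\textbf{Step 2 (the sign).} Since $\beta>0$ and $W$ is non-decreasing in each argument, hence in the componentwise order $\preceq$, it suffices to show $R^{new}(F,M)\preceq R^{same}(F,M)$. That is, I need
\[
F_{reset}\le (1-\lambda)F+\lambda s^{dev}, \qquad M_{reset}\le (1-\rho)M+\rho\,\sigma(F,M).
\]
Using $\lambda,\rho\in[0,1)$ from Assumption~\ref{ass:update}, so that $1-\lambda>0$ and $1-\rho>0$, both inequalities rearrange exactly into the two conditions in \eqref{eq:0821_2318}. Note that $\sigma(F,M)$ appears on both sides of the $M$-condition, but the condition is stated pointwise, so no fixed-point issue arises. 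Once the two inequalities hold, monotonicity of $W$ gives $W(R^{new}(F,M))\le W(R^{same}(F,M))$, and multiplying by $\beta>0$ yields the claimed non-negativity.

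\textbf{Main obstacle.} There is no real analytical difficulty. The only point needing care is the bookkeeping in Step 1: making sure that every term other than $\beta W(R(\cdot))$ is literally the same under both post-deviation rules, in particular that $\kappa^R$ and the staking terms are not mechanism-specific. I will fix this convention explicitly at the start of the proof.
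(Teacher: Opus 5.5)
Your proposal is correct and follows the paper's proof essentially verbatim: you cancel the common continuation branch and the immediate deviation payoff, treating $\kappa^R$ and the staking terms as mechanism-independent as the paper also does, and then combine $R^{new}(F,M)\preceq R^{same}(F,M)$ with the monotonicity of $W$. Your explicit rearrangement of the two hypotheses using $1-\lambda>0$ and $1-\rho>0$ simply spells out the step the paper summarizes as ``by the hypothesis in the statement.''
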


\begin{proof}
The current deviation payoff and the continuation branch are identical under
the two mechanisms. The only difference is the continuation value following
misconduct. Therefore,
\begin{align*}
\Delta_W^{same}(F,M)
-
\Delta_W^{new}(F,M)
&=
\beta
\left[
W\big(R^{same}(F,M)\big)
-
W\big(R^{new}(F,M)\big)
\right].
\end{align*}
By the hypothesis in the statement,
\[
R^{new}(F,M)
\preceq
R^{same}(F,M),
\]
and since $W$ is non-decreasing in both components,
\[
W\big(R^{new}(F,M)\big)
\leq
W\big(R^{same}(F,M)\big).
\]
The result follows.
\end{proof}

Now, we show the proof of Proposition \ref{prop:lip}.
\begin{proposition*}[Lipschitz Continuity]
Assume \ref{ass:update}--\ref{ass:cost} and Assumption \ref{ass:R_lipschitz}. Assume furthermore that $v$ and $\sigma$ are Lipschitz continuous with constants $L_v$ and $L_\sigma$, respectively. Define $A_M := 1-\rho+\rho L_\sigma$. Suppose that
\begin{equation}
\beta A_M<1, \qquad\qquad \beta L_{R,MM}<1, \qquad\qquad \beta L_{R,F}<1.
\end{equation}

Then the value function $V$ is Lipschitz continuous separately in $M$ and
$F$. In particular, valid Lipschitz constants are
\begin{align}
L_M &= L_v \max \left\{ \frac{\theta-\kappa^S}{1-\beta \, L_{R,MM}}, \, \frac{f}{1-\beta \, A_M} \right\} \\
L_F &= \max \left\{ \frac{(\theta-\kappa^S) \, L_v + \beta \, L_{R,MF} \, L_M}{1-\beta \, L_{R,F}}, \, \frac{f \, L_v + \beta \, \rho \, L_\sigma \, L_M}{1-\beta \, (1-\lambda)} \right\} \nonumber
\end{align}
\end{proposition*}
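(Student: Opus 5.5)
The approach is to show that the Bellman operator $\mathcal{B}$ of Eq.~\eqref{eq:bellman_operator} maps a closed set of functions into itself. This is the set of functions in $C_b(\mathcal{X})$ that are Lipschitz in $M$ with constant $L_M$ (uniformly in $F$) and Lipschitz in $F$ with constant $L_F$ (uniformly in $M$). The set is non-empty, since it contains $0$, and it is closed under uniform convergence, because pointwise inequalities $|W(F,M_1)-W(F,M_2)|\le L_M|M_1-M_2|$ pass to limits. Proposition~\ref{prop:0509_0709} shows that $\mathcal{B}$ is a contraction. Hence value iteration from $W=0$ converges to $V$ inside this set, and $V$ inherits both Lipschitz bounds. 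Since $|\max\{a,b\}-\max\{c,d\}|\le\max\{|a-c|,|b-d|\}$, it is enough to bound each branch $V^{dev}_W$ and $V^{cont}_W$ separately. This is why the constants are maxima of a deviation term and a continuation term.

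\textbf{Step 1 (the $M$-direction).} Fix $F$ and let $W$ belong to the set. For the deviation branch,
\[
|V^{dev}_W(F,M_1)-V^{dev}_W(F,M_2)|\le(\theta-\kappa^S)L_v|M_1-M_2|+\beta L_M L_{R,MM}|M_1-M_2|.
\]
This uses that $R_F$ does not depend on $M$, which is Assumption~\ref{ass:R_lipschitz}, so only the $M$-component of $R$ moves. For the continuation branch, I would use that $g(x)=fx-c^o(\kappa^S x)$ has $0<g'\le f$ (Assumption~\ref{ass:cost} and the proof of Proposition~\ref{prop:0509_0709}), so $g\circ v$ is $fL_v$-Lipschitz. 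Next, $|\max_q U_W(F,M_1;q)-\max_q U_W(F,M_2;q)|\le\beta\sup_q|W(T(F,M_1,q))-W(T(F,M_2,q))|$. Because $T_F$ does not depend on $M$, this is bounded by $\beta L_M|T_M(F,M_1)-T_M(F,M_2)|\le\beta L_M A_M|M_1-M_2|$. The requirement that the new constant not exceed $L_M$ then reduces to two inequalities: $(\theta-\kappa^S)L_v+\beta L_{R,MM}L_M\le L_M$ and $fL_v+\beta A_M L_M\le L_M$. The stated $L_M$, a maximum of two ratios, satisfies both, using $\beta L_{R,MM}<1$ and $\beta A_M<1$.

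\textbf{Step 2 (the $F$-direction).} Fix $M$ and vary $F$. The deviation branch gives $(\theta-\kappa^S)L_v|\Delta F|+\beta\bigl(L_F L_{R,F}+L_M L_{R,MF}\bigr)|\Delta F|$. To get this, I split $W(R(F_1,M))-W(R(F_2,M))$ through the intermediate point $(R_F(F_2),R_M(F_1,M))$, so each coordinate change picks up its own Lipschitz constant. The continuation branch gives $fL_v|\Delta F|+\beta\bigl(L_F(1-\lambda)+L_M\rho L_\sigma\bigr)|\Delta F|$, splitting in the same way. Here the $F$-coordinate of $T$ moves by $(1-\lambda)|\Delta F|$, since $s(q)$ cancels for a common $q$, and the $M$-coordinate moves by $\rho L_\sigma|\Delta F|$. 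Invariance then requires $L_F\ge\bigl[(\theta-\kappa^S)L_v+\beta L_{R,MF}L_M\bigr]/(1-\beta L_{R,F})$ and $L_F\ge\bigl[fL_v+\beta\rho L_\sigma L_M\bigr]/(1-\beta(1-\lambda))$, which is exactly the stated $L_F$. The second denominator is positive because $\beta<1$.

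The main obstacle is the coupling between directions in Step 2: the $F$-bound depends on $L_M$ through the cross terms. This is why Step 1 must be closed first, on its own, which works because $R_F$ and $T_F$ are independent of $M$. If joint invariance causes trouble, the fallback is to iterate in two stages. First show that the $L_M$-set is invariant. Then, within it, show that the $L_F$-set is invariant, noting that the intersection is still closed and non-empty.
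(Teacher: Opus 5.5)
Your proposal is correct and follows essentially the same route as the paper: a closed, non-empty set of Lipschitz functions that is invariant under the contracting Bellman operator $\mathcal{B}$, with branchwise bounds that yield exactly the inequalities defining $L_M$ and $L_F$ (including the split through the intermediate point $(R_F(F_2),R_M(F_1,M))$). The only cosmetic difference is that the paper first proves invariance of the $M$-Lipschitz set alone and then of the joint set, which is your stated fallback; your direct joint-set argument works equally well, because the $M$-bound for $\mathcal{B}W$ uses only the $M$-Lipschitz property of $W$.
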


\begin{proof}
The proof follows the same invariant-set argument used above, with the
additional contribution generated by the state dependence of the
post-deviation transition.

We first establish Lipschitz continuity with respect to $M$. Consider the set
\begin{align*}
S_M
=
\big\{
W\in C_b(\mathcal{X})
\ \big|\
|W(F,M_1)-W(F,M_2)|
\leq
L_M|M_1-M_2|,
\\
\forall F,M_1,M_2\in[0,1]
\big\}.
\end{align*}
The set $S_M$ is non-empty, since constant functions belong to it, and it is
closed under uniform convergence. We show that $S_M$ is invariant under the
Bellman operator.

Let $W\in S_M$. Using
\[
\big|
\max\{a,b\}
-
\max\{c,d\}
\big|
\leq
\max
\{
|a-c|,
|b-d|
\},
\]
we have
\begin{align*}
&
|(\mathcal{B}W)(F,M_1)-(\mathcal{B}W)(F,M_2)|
\\
&\leq
\max
\left\{
|V^{dev}_W(F,M_1)-V^{dev}_W(F,M_2)|,
\,
|V^{cont}_W(F,M_1)-V^{cont}_W(F,M_2)|
\right\}.
\end{align*}

Consider first the deviation branch. Since $R_F$ is independent of $M$,
\begin{align*}
&
|V^{dev}_W(F,M_1)-V^{dev}_W(F,M_2)|
\\
&\leq
(\theta-\kappa^S)
|v(F,M_1)-v(F,M_2)|
\\
&\qquad
+
\beta
\left|
W\big(R_F(F),R_M(F,M_1)\big)
-
W\big(R_F(F),R_M(F,M_2)\big)
\right|
\\
&\leq
\left[
(\theta-\kappa^S)L_v
+
\beta L_M L_{R,MM}
\right]
|M_1-M_2|.
\end{align*}

For the continuation branch, define as before
\[
g(x)=f x-c^o(\kappa^S x).
\]
From Assumption \ref{ass:cost},
\[
0\leq g'(x)\leq f,
\]
and therefore
\[
|g(v(F,M_1))-g(v(F,M_2))|
\leq
fL_v|M_1-M_2|.
\]

Moreover, $T_F(F,q)$ does not depend on $M$, while
\begin{align*}
|T_M(F,M_1)-T_M(F,M_2)|
&=
\left|
(1-\rho)(M_1-M_2)
+
\rho
\big[
\sigma(F,M_1)-\sigma(F,M_2)
\big]
\right|
\\
&\leq
(1-\rho+\rho L_\sigma)|M_1-M_2|
\\
&=
A_M|M_1-M_2|.
\end{align*}
Hence,
\begin{align*}
&
\left|
\max_{q\in[0,1]}U_W(F,M_1;q)
-
\max_{q\in[0,1]}U_W(F,M_2;q)
\right|
\\
&\leq
\beta L_M A_M |M_1-M_2|.
\end{align*}

It follows that
\begin{align*}
&
|V^{cont}_W(F,M_1)-V^{cont}_W(F,M_2)|
\\
&\leq
\left[
fL_v+\beta L_M A_M
\right]
|M_1-M_2|.
\end{align*}

Therefore, $S_M$ is invariant provided that
\[
(\theta-\kappa^S)L_v
+
\beta L_M L_{R,MM}
\leq
L_M
\]
and
\[
fL_v+\beta L_M A_M
\leq
L_M.
\]
Under the conditions
\[
\beta L_{R,MM}<1
\qquad\text{and}\qquad
\beta A_M<1,
\]
these inequalities are satisfied by
\[
L_M
=
L_v
\max
\left\{
\frac{\theta-\kappa^S}
{1-\beta L_{R,MM}},
\,
\frac{f}
{1-\beta A_M}
\right\}.
\]
The unique fixed point therefore belongs to $S_M$, establishing Lipschitz
continuity in $M$.

We now turn to $F$. Consider the set
\begin{align*}
S_F
=
\big\{
W\in C_b(\mathcal{X})
\ \big|\
&|W(F_1,M)-W(F_2,M)|
\leq
L_F|F_1-F_2|,
\\
&
|W(F,M_1)-W(F,M_2)|
\leq
L_M|M_1-M_2|
\big\}.
\end{align*}
As before, this set is non-empty and closed. We show that it is invariant.

For the deviation branch,
\begin{align*}
&
|V^{dev}_W(F_1,M)-V^{dev}_W(F_2,M)|
\\
&\leq
(\theta-\kappa^S)
|v(F_1,M)-v(F_2,M)|
\\
&\quad
+
\beta
\left|
W\big(R_F(F_1),R_M(F_1,M)\big)
-
W\big(R_F(F_2),R_M(F_2,M)\big)
\right|.
\end{align*}
Adding and subtracting
$W(R_F(F_2),R_M(F_1,M))$ gives
\begin{align*}
&
|V^{dev}_W(F_1,M)-V^{dev}_W(F_2,M)|
\\
&\leq
\Big[
(\theta-\kappa^S)L_v
+
\beta L_F L_{R,F}
+
\beta L_M L_{R,MF}
\Big]
|F_1-F_2|.
\end{align*}

For the continuation branch,
\[
|T_F(F_1,q)-T_F(F_2,q)|
=
(1-\lambda)|F_1-F_2|,
\]
while
\[
|T_M(F_1,M)-T_M(F_2,M)|
\leq
\rho L_\sigma |F_1-F_2|.
\]
Hence,
\begin{align*}
&
|V^{cont}_W(F_1,M)-V^{cont}_W(F_2,M)|
\\
&\leq
\Big[
fL_v
+
\beta(1-\lambda)L_F
+
\beta\rho L_\sigma L_M
\Big]
|F_1-F_2|.
\end{align*}

It is therefore sufficient that
\[
(\theta-\kappa^S)L_v
+
\beta L_F L_{R,F}
+
\beta L_M L_{R,MF}
\leq
L_F
\]
and
\[
fL_v
+
\beta(1-\lambda)L_F
+
\beta\rho L_\sigma L_M
\leq
L_F.
\]
Under $\beta L_{R,F}<1$, these inequalities are satisfied by
\begin{align*}
L_F
=
\max
\left\{
\frac{
(\theta-\kappa^S)L_v
+
\beta L_{R,MF}L_M
}
{1-\beta L_{R,F}},
\,
\frac{
fL_v
+
\beta\rho L_\sigma L_M
}
{1-\beta(1-\lambda)}
\right\}.
\end{align*}

Thus, $S_F$ is invariant under the Bellman operator, and its unique fixed
point belongs to $S_F$. The value function is therefore Lipschitz continuous
in both $F$ and $M$ with the constants stated in
the statement.
\end{proof}

Next, we report the proof of Proposition \ref{prop:temptation_monotonicity}.

\begin{proposition*}[General Monotonicity]
Assume \ref{ass:update}--\ref{ass:cost} and the additional assumptions of
Proposition \ref{prop:lip}. Then, for every $\bar{F},\bar{M}>0$, if
\begin{equation}
\begin{cases} \big[ \theta-f-\kappa^S \big] \, \partial_F v(1,\bar{M}) \geq \beta \, L_F \, (1-\lambda) + \beta \, \rho \, L_M \, L_\sigma & \mathbf{(F)} \\[0.3cm]
\big[ \theta-f-\kappa^S \big] \, \partial_M v(\bar{F},1) \geq \beta \, L_M \, (1-\rho) + \beta \, \rho \, L_M \, L_\sigma & \mathbf{(M)} \end{cases}
\end{equation}
then the temptation function $\Delta(F,M)$ is non-decreasing in the corresponding variable on $[\bar{F},1]\times[\bar{M},1]$.
\end{proposition*}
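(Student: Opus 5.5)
The plan is to compare $\Delta$ at two ordered points directly, using the decomposition \eqref{eq:temptation_explicit}. Fix $M\in[\bar M,1]$ and $\bar F\le F_1<F_2\le 1$ (the $M$ case is symmetric). I will split $\Delta(F_2,M)-\Delta(F_1,M)$ into three pieces and bound each one separately.

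\textbf{Immediate term.} The first piece is
\[
(\theta-\kappa^S-f)\big[v(F_2,M)-v(F_1,M)\big]+c^o(\kappa^S v(F_2,M))-c^o(\kappa^S v(F_1,M)).
\]
The $c^o$ difference is nonnegative because $c^o$ is increasing and $v$ is non-decreasing, so I drop it. Since $\partial^2_{F,F}v\le0$, $\partial_F v(\cdot,M)$ is non-increasing in $F$, and its smallest value on $[F_1,F_2]$ is at $F=1$. Since $\partial^2_{F,M}v\ge0$ and $M\ge\bar M$, we also have $\partial_F v(1,M)\ge\partial_F v(1,\bar M)$. The mean value theorem then gives $v(F_2,M)-v(F_1,M)\ge\partial_F v(1,\bar M)(F_2-F_1)$. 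This requires $\theta-f-\kappa^S\ge0$, which is implied by condition $\mathbf{(F)}$ because its right-hand side is nonnegative, unless $\partial_F v(1,\bar M)=0$ (trivial case). Hence the first piece is at least $(\theta-f-\kappa^S)\,\partial_F v(1,\bar M)(F_2-F_1)$.

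\textbf{Post-deviation term.} The second piece is $\beta[V(R(F_2,M))-V(R(F_1,M))]$. It is nonnegative by the monotonicity of $V$ (Proposition \ref{prop:0509_0709}) together with the order preservation of $R$, so I discard it as well.

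\textbf{Continuation term.} The third piece is $-\big[\max_q U_V(F_2,M;q)-\max_q U_V(F_1,M;q)\big]$, and I need an upper bound on that difference of maxima. I use that $|\max_q a_q-\max_q b_q|\le\sup_q|a_q-b_q|$, that $c(q)$ cancels, and that $T(F_i,M,q)=(T_F(F_i,q),T_M(F_i,M))$. Splitting through the intermediate point $(T_F(F_2,q),T_M(F_1,M))$ and applying the separate Lipschitz bounds of Proposition \ref{prop:lip} gives
\[
\beta\big[L_F|T_F(F_2,q)-T_F(F_1,q)|+L_M|T_M(F_2,M)-T_M(F_1,M)|\big]\le\beta\big[L_F(1-\lambda)+L_M\rho L_\sigma\big](F_2-F_1).
\]

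\textbf{Combining.} Putting the three pieces together,
\[
\Delta(F_2,M)-\Delta(F_1,M)\ge\big[(\theta-f-\kappa^S)\partial_F v(1,\bar M)-\beta L_F(1-\lambda)-\beta\rho L_\sigma L_M\big](F_2-F_1)\ge0
\]
by $\mathbf{(F)}$. For $M$, the same steps apply. Concavity in $M$ and supermodularity give $\partial_M v(F,\xi)\ge\partial_M v(\bar F,1)$. The bound $|T_M(F,M_2)-T_M(F,M_1)|\le A_M|M_2-M_1|$ with $T_F$ independent of $M$ yields the right-hand side $\beta L_M(1-\rho)+\beta\rho L_ML_\sigma$ of $\mathbf{(M)}$.

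\textbf{Expected obstacle.} The main difficulty is the lower bound on the volume increment. It needs differentiability of $v$ on the interior region and the sign of $\theta-f-\kappa^S$, which I extract from the condition itself as above; this is where the restriction $\bar F,\bar M>0$ enters. The remaining steps are routine Lipschitz bookkeeping.
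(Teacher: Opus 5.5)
Your proof is correct and is essentially the paper's argument. You discard the nonnegative post-deviation and opportunity-cost increments, bound the change in $\max_q U_V$ by the separate Lipschitz constants of Proposition~\ref{prop:lip}, and lower-bound the volume increment via the mean value theorem together with coordinatewise concavity and $\partial^2_{F,M}v\ge 0$. The differences are cosmetic: you write out the $F$ case where the paper details $M$, and you use $|\max_q a_q-\max_q b_q|\le\sup_q|a_q-b_q|$ instead of evaluating at a maximizer for the larger state. There is one genuine improvement: you make explicit that $\theta-f-\kappa^S\ge 0$ follows from the condition, which the paper uses tacitly when multiplying the derivative lower bound. Your degenerate case $\partial_F v(1,\bar M)=0$ is indeed harmless: the right-hand side of $\mathbf{(F)}$ must then vanish, forcing $L_F=0$, hence $L_v=0$ and $v\equiv 0$.
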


\begin{proof}
We prove the result for $M$; the argument for $F$ is analogous. Let $M_1>M_2\geq\bar{M}$. From Eq.~\eqref{eq:deviation_branch},
\begin{equation*}
V^{dev}(F,M_1)-V^{dev}(F,M_2) =  (\theta-\kappa^S) \, \big[ v(F,M_1)-v(F,M_2) \big] + \beta \, \left[ V\big(R(F,M_1)\big) - V\big(R(F,M_2)\big) \right]
\end{equation*}

By Assumption \ref{ass:deviation_transition},
\[
R(F,M_2)
\preceq
R(F,M_1),
\]
and Proposition \ref{prop:0509_0709} establishes that $V$ is
non-decreasing in the reputational state. Therefore,
\[
V\big(R(F,M_1)\big)
-
V\big(R(F,M_2)\big)
\geq0.
\]
Hence,
\begin{equation}
\label{eq:dev_lower_bound_M}
V^{dev}(F,M_1)-V^{dev}(F,M_2)
\geq
(\theta-\kappa^S)
\big[
v(F,M_1)-v(F,M_2)
\big].
\end{equation}

For the continuation branch, let
\[
q_1
\in
\argmax_{q\in[0,1]}
U(F,M_1;q).
\]
Using the fact that the maximizers at $M_1$ and $M_2$ need not coincide, we
obtain
\begin{align*}
&
V^{cont}(F,M_1)-V^{cont}(F,M_2)
\\
\leq{}&
f
\big[
v(F,M_1)-v(F,M_2)
\big]
\\
&+
\beta
\left[
V\big(T(F,M_1,q_1)\big)
-
V\big(T(F,M_2,q_1)\big)
\right].
\end{align*}
The opportunity-cost term has been omitted from this upper bound because
$c^o$ is increasing and
$v(F,M_1)\geq v(F,M_2)$.

Since $T_F$ is independent of $M$, Proposition \ref{prop:lip} implies
\begin{align*}
&
V\big(T(F,M_1,q_1)\big)
-
V\big(T(F,M_2,q_1)\big)
\\
\leq{}&
L_M
\left[
(1-\rho)(M_1-M_2)
+
\rho
\big(
\sigma(F,M_1)-\sigma(F,M_2)
\big)
\right].
\end{align*}
Therefore,
\begin{align}
&
V^{cont}(F,M_1)-V^{cont}(F,M_2)
\nonumber\\
\leq{}&
f
\big[
v(F,M_1)-v(F,M_2)
\big]
+
\beta L_M(1-\rho)(M_1-M_2)
\nonumber\\
&+
\beta\rho L_M
\big[
\sigma(F,M_1)-\sigma(F,M_2)
\big].
\label{eq:cont_upper_bound_M}
\end{align}

Combining
Eqs.~\eqref{eq:dev_lower_bound_M} and
\eqref{eq:cont_upper_bound_M} gives
\begin{align*}
&
\Delta(F,M_1)-\Delta(F,M_2)
\\
\geq{}&
\big[
\theta-f-\kappa^S
\big]
\big[
v(F,M_1)-v(F,M_2)
\big]
\\
&-
\beta L_M(1-\rho)(M_1-M_2)
\\
&-
\beta\rho L_M
\big[
\sigma(F,M_1)-\sigma(F,M_2)
\big].
\end{align*}

Dividing by $M_1-M_2>0$ and using the mean-value theorem,
coordinatewise concavity of $v$, and the Lipschitz property of $\sigma$, we
obtain
\[
\frac{
v(F,M_1)-v(F,M_2)
}{
M_1-M_2
}
\geq
\partial_M v(\bar{F},1)
\]
and
\[
\frac{
\sigma(F,M_1)-\sigma(F,M_2)
}{
M_1-M_2
}
\leq
L_\sigma.
\]
It follows that
\begin{align*}
\frac{
\Delta(F,M_1)-\Delta(F,M_2)
}{
M_1-M_2
}
\geq{}&
\big[
\theta-f-\kappa^S
\big]
\partial_M v(\bar{F},1)
\\
&-
\beta L_M(1-\rho)
-
\beta\rho L_ML_\sigma.
\end{align*}
Condition $\mathbf{(M)}$ in the statement therefore implies
\[
\Delta(F,M_1)
\geq
\Delta(F,M_2).
\]

The proof for $F$ proceeds analogously. In particular, the
order-preserving property of $R$ implies that the post-deviation continuation
term is again non-negative and can therefore be omitted when constructing a
sufficient lower bound for the change in the temptation function. This yields
condition $\mathbf{(F)}$ in the statement.
\end{proof}

\subsection{Proofs of Section \ref{sec:sur}} \label{app_proofs_sur}

The following is the proof of Proposition \ref{prop:0825_1616}.
\begin{proposition*}[Explicit form of the value functions]
    Assume \ref{ass:update} to \ref{ass:volume}, and $\lambda = \rho = 0$. Then, the two branches of the value function read:
    \begin{align}
    V^{dev}(F,M,0,0) &= (\theta - \kappa^S) \, v(F, M) + \frac{1}{1-\beta} \, \max \begin{cases} (\theta - \kappa^S) \, \beta v_{reset} - \kappa^R \\ \beta \big[ f \, v_{reset} - c^o(\kappa^S \, v_{reset}) \big] - (1 - \beta) \, \kappa^R \end{cases} \\
    V^{cont}(F,M,0,0) &= \frac{1}{1-\beta} \, \bigg[ fv(F,M) - c^o(\kappa^S \, v(F,M)) \bigg] \nonumber
    \end{align}
\end{proposition*}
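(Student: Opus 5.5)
The plan is to exploit the fact that for $\lambda=\rho=0$ the honest transition is the identity, $T(F,M,q)=(F,M)$ for every $q$, and to specialize, as in this section, to the reset map $R=R^{new}$. With these two facts the Bellman equation \eqref{eq:valuefunction_bellman} becomes a scalar fixed-point equation at each state. I will use the shorthand
\[
g(x):=f\,x-c^o(\kappa^S x),\qquad d(x):=(\theta-\kappa^S)\,x-\kappa^R,\qquad v_{reset}:=v(F_{\mathrm{reset}},M_{\mathrm{reset}}),\qquad V_r:=V(F_{\mathrm{reset}},M_{\mathrm{reset}}).
\]
Since $c$ is increasing with $c(0)=0$ and $q$ no longer affects the next state, the continuation kernel is maximized at $q=0$. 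Hence $\max_q U_V(F,M;q)=\beta V(F,M)$, and Proposition \ref{prop:0509_0709} gives a unique fixed point $V$ satisfying
\[
V(F,M)=\max\big\{\, d(v(F,M))+\beta V_r,\; g(v(F,M))+\beta V(F,M)\,\big\}.
\]

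The first step is to solve this equation pointwise. If the second branch attains the maximum, then $V=g/(1-\beta)$; otherwise $V=d+\beta V_r$. Therefore
\[
V(F,M)=\max\big\{d(v(F,M))+\beta V_r,\; g(v(F,M))/(1-\beta)\big\}.
\]
To check this claim, I will verify that the right-hand side satisfies the Bellman identity and then invoke uniqueness; this is a short case check using $g\ge 0$.

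The second step evaluates this at the reset state itself. There $V_r=\max\{d(v_{reset})+\beta V_r,\; g(v_{reset})/(1-\beta)\}$, and the same case analysis yields
\[
V_r=\frac{1}{1-\beta}\max\{d(v_{reset}),\,g(v_{reset})\}.
\]

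The third step substitutes into $V^{dev}=d(v(F,M))+\beta V_r$ and expands $-\kappa^R+\frac{\beta}{1-\beta}d(v_{reset})$ and $-\kappa^R+\frac{\beta}{1-\beta}g(v_{reset})$. These give exactly the two entries of the max in \eqref{eq:0417_1410}, namely $\big[\beta(\theta-\kappa^S)v_{reset}-\kappa^R\big]/(1-\beta)$ and $\big[\beta g(v_{reset})-(1-\beta)\kappa^R\big]/(1-\beta)$.

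For the continuation branch, the definition gives $V^{cont}=g(v(F,M))+\beta V(F,M)$. I expect this to be the main obstacle, since it coincides with $g(v(F,M))/(1-\beta)$ only on the continuation region, where $V=V^{cont}$. My first approach is to interpret $V^{cont}$ in the statement as the value of the stationary honest policy, i.e.\ never deviating from $(F,M)$, which is the natural reading when $T$ is the identity. Under this interpretation the formula holds everywhere, and the maximum of the two stated expressions still equals $V$ by the first step.

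I will also note that on the deviation region the true branch value is $g+\beta V^{dev}$, and
\[
V^{dev}-\big(g+\beta V^{dev}\big)=(1-\beta)\Big(V^{dev}-\tfrac{g}{1-\beta}\Big).
\]
So the sign of the temptation, and hence the regions $\mathcal{C}$, $\mathcal{D}$ and $\mathcal{I}$, are the same under either reading. This is what the subsequent corollary and Proposition \ref{prop:0825_1626} rely on.
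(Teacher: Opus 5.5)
Your proposal is correct and follows essentially the paper's proof. With $\lambda=\rho=0$ honest states are absorbing, and the reset state is absorbing under either action, so $V(F_{\mathrm{reset}},M_{\mathrm{reset}})=\frac{1}{1-\beta}\max\{d(v_{reset}),g(v_{reset})\}$; this is substituted into the deviation branch, and the continuation branch comes from $V^{cont}=g+\beta V^{cont}$. Your remark that the Bellman branch $V^{cont}_V=g+\beta V$ equals $g/(1-\beta)$ only on $\mathcal{C}$ is a real sharpening: the paper's proof tacitly uses the same ``perpetual honesty'' reading through its ``once honest, always honest'' remark. Your check that $\Delta$ changes only by the positive factor $(1-\beta)$ on $\mathcal{D}$, leaving $\mathcal{C}$, $\mathcal{D}$ and $\mathcal{I}$ unchanged, is also correct.
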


\begin{proof}
    First, observe that, for a generic point $(F,M) \in \mathcal{X}$, if the agent chooses to honestly behave once, it will always honestly behave, as we observed that $T(F,M, q) = (F,M)$ -- that is, if $h(F,M,0,0)=1$, then the state $(F,M)$ is absorbing. Then, we compute the value function in $(F_{reset}, M_{reset}, 0, 0)$, which is absorbing independently of the agent's choice. In this case, the future discounted value is the same in the continuation and deviation branch (it is $\beta \, V(F_{reset}, M_{reset}, 0, 0)$). Thus, to compare the profits is enough to look at the instantaneous rewards. Specifically:
    \begin{equation*}
        V_{reset}(0, 0) = \frac{1}{1-\beta} \, \begin{cases} (\theta - \kappa^S) \, v_{reset} - \kappa^R & if \ (\theta - \kappa^S - f) \, v_{reset} \ge \kappa^R - c^o(\kappa^S \, v_{reset}) \\ f \, v_{reset} - c^o(\kappa^S \, v_{reset}) & otherwise\end{cases}
    \end{equation*}
    Lastly, for a generic point $(F,M) \in \mathcal{X}$ we can rewrite the continuation branch as:
    \begin{align*}
    V^{cont}(F,M, 0, 0) &= f \, v(F,M) - c^o(\kappa^S \, v(F,M)) + \beta \, V^{cont}(F,M, 0, 0) = \\
    &= \frac{1}{1-\beta} \, \big\{ f \, v(F,M) - c^o(\kappa^S \, v(F,M)) \big\}
    \end{align*}
    while for the deviation branch we just replace $V_{reset}(0, 0)$ with the previous expression. Putting the pieces together, we obtain the claim.
\end{proof}

Next, we show the proof of Proposition \ref{prop:0825_1626}.

\begin{proposition*}[Small-update regime monotonicity]
    Assume \ref{ass:update} to \ref{ass:volume}. Assume that one of the following conditions hold:
    \begin{align}
        (\theta - \kappa^S) \, (1-\beta) - f \ge \xi & \qquad\qquad\text{\textbf{Increasing}} \\
        (\theta - \kappa^S) \, (1-\beta) - f + \kappa^S \, \le - \xi & \qquad\qquad\text{\textbf{Decreasing}} \nonumber
    \end{align}
    With linear opportunity cost $c^o(v) = a \, v$, these conditions can be sharpened as:
    \begin{equation}
        (\theta - \kappa^S) \, (1-\beta) - f + a \, \kappa^S \gtrless \pm \xi
    \end{equation}
    Then, $\forall \bar{\delta}, \bar{F}, \bar{M} \in (0,1)$, there exists a small-update regime $[0, \bar{\lambda}) \times [0, \bar{\rho})$ such that $\forall F_2 \in [\bar{F},1]$, $\forall M_2 \in [\bar{M},1]$, $\forall F_1 \in [0, F_2 \, \bar{\delta}]$, and $\forall M_1 \in [0, M_2 \, \bar{\delta}]$, it holds:
    \begin{align}
        \Delta(F_1,M_2, \lambda, \rho) \le \Delta(F_2,M_2, \lambda, \rho) & \qquad\qquad \Delta(F_2,M_1, \lambda, \rho) \le \Delta(F_2,M_2, \lambda, \rho) & \qquad\qquad\text{\textbf{Increasing}} \\
        \Delta(F_1,M_2, \lambda, \rho) \ge \Delta(F_2,M_2, \lambda, \rho) & \qquad\qquad \Delta(F_2,M_1, \lambda, \rho) \ge \Delta(F_2,M_2, \lambda, \rho) & \qquad\qquad\text{\textbf{Decreasing}} \nonumber
    \end{align}
\end{proposition*}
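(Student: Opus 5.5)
The strategy is to work first in the exact limit $\lambda=\rho=0$. There Proposition \ref{prop:0825_1616} and its Corollary give $\Delta(F,M,0,0)$ in closed form, and we can establish a \emph{strict} ordering with a quantitative margin. We then transfer this ordering to $\lambda,\rho>0$ by uniform continuity of $\Delta$ on the compact set $\mathcal{X}\times[0,1]^2$.

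Write $\Delta(F,M,0,0)=\frac{1}{1-\beta}\,\phi\big(v(F,M)\big)+K$, where $K$ is the state-independent reset term and
\[
\phi(x):=\big[(\theta-\kappa^S)(1-\beta)-f\big]\,x+c^o(\kappa^S x).
\]
Since $c^o$ is increasing and convex with $0\le (c^o)'<f$, the derivative of $\phi$ satisfies
\[
(\theta-\kappa^S)(1-\beta)-f\;\le\;\phi'(x)\;\le\;(\theta-\kappa^S)(1-\beta)-f+\kappa^S .
\]
Under \textbf{Increasing}, this gives $\phi'\ge\xi$. Under \textbf{Decreasing}, it gives $\phi'\le-\xi$. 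In the linear case $\phi'$ is the constant $(\theta-\kappa^S)(1-\beta)-f+a\kappa^S$, which yields the sharpened conditions. Hence
\[
|\phi(x_2)-\phi(x_1)|\ge \xi\,|x_2-x_1|,
\]
with the sign of the difference fixed by the case.

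Next we bound the volume gap away from zero uniformly over the configurations in the statement. Take $F_2\in[\bar F,1]$, $M_2\in[\bar M,1]$ and $F_1\le\bar\delta F_2$. Monotonicity gives $v(F_2,M_2)-v(F_1,M_2)\ge v(F_2,M_2)-v(\bar\delta F_2,M_2)$. The function
\[
(F_2,M_2)\mapsto v(F_2,M_2)-v(\bar\delta F_2,M_2)
\]
is continuous on the compact set $[\bar F,1]\times[\bar M,1]$, so it attains a minimum $\eta_F$. We need $\eta_F>0$, which requires $v$ to be strictly increasing in $F$ at positive $M$. This is \textbf{the main obstacle}: Assumption \ref{ass:volume} only gives weak monotonicity. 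I would argue as follows. Concavity of $F\mapsto v(F,M_2)$, together with $v(0,M_2)=0$, forces $v(F,M_2)/F$ to be non-increasing in $F$. So if $v(F_2,M_2)>0$, then $v(\bar\delta F_2,M_2)\ge\bar\delta\,v(F_2,M_2)$, but a strict gap is still not guaranteed. Failing a clean derivation, I would add the mild requirement that $v>0$ and strictly increasing in the interior; this holds for the Cobb--Douglas example. The same construction gives $\eta_M>0$ for the $M$-direction.

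It follows that at $\lambda=\rho=0$,
\[
\Delta(F_2,M_2,0,0)-\Delta(F_1,M_2,0,0)\ \ge\ \frac{\xi\,\eta_F}{1-\beta}=:2\epsilon
\]
in the increasing case, with the reverse inequality in the decreasing case, and analogously in $M$.

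Finally, $\Delta(F,M,\lambda,\rho)$ is uniformly continuous on $\mathcal{X}\times[0,1]^2$. This follows from continuity of $V$ in $(F,M,\lambda,\rho)$, obtained by the contraction argument of Proposition \ref{prop:0509_0709} with $(\lambda,\rho)$ treated as parameters, together with continuity of the max over $q$ via Berge's theorem. Therefore there exist $\bar\lambda,\bar\rho$ such that $|\Delta(F,M,\lambda,\rho)-\Delta(F,M,0,0)|<\epsilon$ for all $(F,M)$ whenever $\lambda<\bar\lambda$ and $\rho<\bar\rho$. Applying this at both points $(F_1,M_2)$ and $(F_2,M_2)$ preserves a strict inequality of size at least $0$. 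This gives the claimed orderings, and the same argument handles $(F_2,M_1)$ versus $(F_2,M_2)$.
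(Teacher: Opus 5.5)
Your proof is correct and follows the paper's own argument. Both compute $\Delta$ in closed form at $\lambda=\rho=0$, extract a uniform margin whose sign is set by the coefficient of $v$, and transfer that margin to small $(\lambda,\rho)$ by uniform continuity on a compact set.

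The only real difference is how the margin is obtained. The paper works with the difference quotient $\varphi=[\Delta(F,M)-\Delta(F,\delta M)]/((1-\delta)M)$. It bounds $\varphi$ via the mean-value theorem and $\partial_M v(F,\tilde M)\ge\partial_M v(\bar F,1)$, which gives the explicit margin $\xi\,\partial_M v(\bar F,1)/(1-\beta)$. Its final ``$>0$'' silently requires $\partial_M v(\bar F,1)>0$ (and $\partial_F v(1,\bar M)>0$), which Assumption~\ref{ass:volume} does not provide. Your compactness bound needs only $\eta_F,\eta_M>0$. That condition follows from the paper's hidden one but is strictly weaker: for $v=F^{1/2}(2M-M^2)$ the paper's margin vanishes while yours does not.

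The obstacle you flag is genuine, not an artifact of your route. Take $v=F\,h(M)$ with $h$ concave, strictly increasing on $[0,1/2]$ and constant on $[1/2,1]$, together with $\sigma=FM/10$, $\lambda=0$, and $\kappa^R$ so large that deviation is never optimal. Then $V^{dev}(F,0.6)=V^{dev}(F,0.95)$, but $V^{cont}(F,0.95)>V^{cont}(F,0.6)$ for every $\rho>0$. So the Increasing conclusion fails (admissible with $\bar M=\bar\delta=0.9$) even when its hypothesis holds. A strictness assumption like the one you add is therefore necessary.

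One caveat applies equally to your proof and the paper's. At $\lambda=\rho=0$ one has $V^{cont}=fv-c^o(\kappa^S v)+\beta\max\{V^{dev},V^{cont}\}$. Hence the formula $V^{cont}=\frac{1}{1-\beta}[fv-c^o(\kappa^S v)]$ of Proposition~\ref{prop:0825_1616} is exact only where continuing is optimal. In general $\Delta(F,M,0,0)=\min\{y,(1-\beta)y\}$, where $y$ is the Corollary's expression. This map is increasing with slope at least $1-\beta$, so your argument still goes through with $2\epsilon=\xi\eta_F$ in place of $\xi\eta_F/(1-\beta)$.
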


\begin{proof}    
    Let us consider the monotonicity in $M$ and consider $F, M_1, M_2 \in [0,1]$, with $M_1 \le M_2 \, \bar{\delta}$; the monotonicity in $F$ is symmetric. Let rewrite $M_1$ as $\delta \, M_2$ with $\delta = M_1 / M_2 \le \bar{\delta}$. The proof scheme is as follows. To show the monotonicity, we study the sign of the function
    \begin{equation*}
        \varphi(F, M, \lambda, \rho, \delta) = \frac{\Delta(F, M, \lambda, \rho) - \Delta(F, \delta \, M, \lambda, \rho)}{(1 - \delta) \, M} \gtreqless 0
    \end{equation*}
    defined in the domain $[\bar{F},1] \times [\bar{M}, 1] \times [0,1]^2 \times [0, \bar{\delta}]$. Specifically, we study $\varphi(F, M, \lambda, \rho, \delta) $. Furthermore, $\varphi$ is (uniformly) continuous, as it is the composition of continuous functions. $\varphi(\cdot, \cdot, 0, 0, \cdot)$ can be rewritten as:
    \begin{align*}
        \varphi(F, M, 0, 0, \delta) &= \frac{\big[ (\theta - \kappa^S) \, (1 - \beta) - f \big] \, \big[ v(F, M_2) - v(F, M_1) \big] + c^o\big( \kappa^S \, v(F, M_2) \big) - c^o\big( \kappa^S \, v(F, M_1) \big)}{(1 - \beta) \, (1 - \delta) \, M} = \\
        &= \frac{\big[ (\theta - \kappa^S) \, (1 - \beta) - f \big] \, \partial_M v(F, \tilde{M}) \, (1 - \delta) \, M + c^o\big( \kappa^S \, v(F, M_2) \big) - c^o\big( \kappa^S \, v(F, M_1) \big)}{(1 - \beta) \, (1 - \delta) \, M}
    \end{align*}
    where we used the mean-value theorem and $\tilde{M} \in (M_1, M_2)$. According to the case in \eqref{eq:0417_1344}, we can lower bound the opportunity cost component with $0$ ("Increasing") or upper bound with $\kappa^S \, \partial_M v(F, \tilde{M}) \, (1 - \delta) \, M$ ("Decreasing"). Keeping in mind that $\partial_M v(F, \tilde{M}) \ge \partial_M v(\bar{F}, 1)$ due to the sign of the second-order derivatives of $v$, we have:
    \begin{align*}
        \varphi(F, M, 0, 0, \delta) \ge \frac{\xi \, \partial_M v(\bar{F}, 1)}{1 - \beta} > 0 & \qquad\qquad\text{\textbf{Increasing}} \\
        \varphi(F, M, 0, 0, \delta) \le - \frac{\xi \, \partial_M v(\bar{F}, 1)}{1 - \beta} < 0 & \qquad\qquad\text{\textbf{Decreasing}} \nonumber
    \end{align*}
    By the uniform continuity, $\exists$ a small-update regime $[0, \bar{\lambda}) \times [0, \bar{\rho})$ such that $\varphi(F, M, \lambda, \rho, \delta) > 0$. This, in turns, implies $\Delta(F, M_2, \lambda, \rho) > \Delta(F, M_1, \lambda, \rho)$ (in the increasing case) or $\varphi(F, M, \lambda, \rho, \delta) < 0 \implies \Delta(F, M_2, \lambda, \rho) < \Delta(F, M_1, \lambda, \rho)$ (in the decreasing case), thus proving the statament.
\end{proof}

\subsection{Proofs of Section \ref{sec:effort}} \label{app_proofs_comparative}

The following is the proof of Proposition \ref{prop:0509_0331}.

\begin{proposition*}[$V$ concave]
Assume \ref{ass:update}--\ref{ass:cost}. Additionally, assume \ref{ass:honest}. Then, the value function $V(F, M)$ is concave in $F$.
\end{proposition*}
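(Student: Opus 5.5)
The natural route is the same invariant-set argument used for Proposition~\ref{prop:0509_0709}, now applied to the continuation-branch operator $\mathcal{B}^{cont}$ of Assumption~\ref{ass:honest}. Since $\mathcal{B}^{cont}$ is a $\beta$-contraction on $C_b(\mathcal{X})$ (the contraction estimate is the deviation-free part of the one already proved), the plan is to exhibit a non-empty set $S_{conc}$, closed under uniform convergence and invariant under $\mathcal{B}^{cont}$, consisting of functions with the required shape. The fixed point $V$ then lies in $S_{conc}$.

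Concavity in $F$ alone is probably not preserved. The reason is that $T_F(F,q)=(1-\lambda)F+\lambda s(q)$ involves the control $q$, and $\max_q$ of a function that is concave in $F$ for each fixed $q$ need not be concave in $F$. I would therefore take
\[
S_{conc}=\big\{W\in C_b(\mathcal{X}) :\ W \text{ non-decreasing in } (F,M),\ W(\cdot,M) \text{ concave for each } M\big\}.
\]
Constants belong to $S_{conc}$, and each defining property is preserved under uniform limits.

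The invariance check has two parts, one for the flow payoff and one for the continuation kernel.

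\textbf{Flow term.} Here $f v-a\kappa^S v=(f-a\kappa^S)v$ with $f-a\kappa^S>0$, since $a<f$ and $\kappa^S<1$. Because $\partial^2_{FF}v\le 0$, this term is concave in $F$.

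\textbf{Continuation term.} By Assumption~\ref{ass:honest}, $\sigma=\sigma(M)$, so $T_M$ does not depend on $F$. Fix $M$ and put $M'=T_M(M)$. Take $F_1,F_2$, let $q_1,q_2$ be the corresponding maximizers (they exist by compactness and continuity), and let $\mu\in[0,1]$. Set $F_\mu=\mu F_1+(1-\mu)F_2$ and $\bar q=\mu q_1+(1-\mu)q_2$. Then:
\begin{itemize}
\item Concavity of $s$ gives $s(\bar q)\ge \mu s(q_1)+(1-\mu)s(q_2)$.
\item Hence $T_F(F_\mu,\bar q)\ge \mu T_F(F_1,q_1)+(1-\mu)T_F(F_2,q_2)$.
\item Monotonicity of $W$ in $F$, followed by concavity of $W(\cdot,M')$, gives
\[
W(T_F(F_\mu,\bar q),M')\ge \mu W(T_F(F_1,q_1),M')+(1-\mu)W(T_F(F_2,q_2),M').
\]
\item Convexity of $c$ gives $-c(\bar q)\ge -\mu c(q_1)-(1-\mu)c(q_2)$.
\end{itemize}
Combining these, $\max_q U_W(F_\mu,M;q)\ge U_W(F_\mu,M;\bar q)\ge \mu\max_q U_W(F_1,M;q)+(1-\mu)\max_q U_W(F_2,M;q)$. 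This is the standard joint-concavity-in-(state, control) argument: the map $(F,q)\mapsto -c(q)+\beta W(T(F,M,q))$ is jointly concave, and maximizing it over $q$ preserves concavity in $F$.

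\textbf{Monotonicity.} Preservation of monotonicity in $(F,M)$ under $\mathcal{B}^{cont}$ follows exactly as in the proof of Proposition~\ref{prop:0509_0709}, using only the continuation branch.

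Adding the flow and continuation terms shows $\mathcal{B}^{cont}W\in S_{conc}$, and the fixed-point argument concludes.

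I expect the main obstacle to be the step where the optimal control varies with $F$. There, monotonicity and concavity of $W$ have to be used jointly, and it must be checked that the convex combination $T_F(F_\mu,\bar q)$ stays inside $[0,1]$. It does, since $F_\mu\in[0,1]$ and $s(\bar q)\in[0,1]$. A secondary point to handle is that the independence of $T_M$ from $F$ (i.e.\ $\sigma=\sigma(M)$) is genuinely needed. Without it, the $M'$ argument would vary with $F$, and concavity in $F$ alone would not suffice; one would then need joint concavity, which the assumptions do not provide.
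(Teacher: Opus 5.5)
Your proposal is correct and follows essentially the same route as the paper. The paper also runs an invariant-set/Banach fixed-point argument for $\mathcal{B}^{cont}$ on functions that are non-decreasing and concave in $F$; the flow term $(f-a\kappa^S)v$ is concave by Assumption~\ref{ass:volume}, and $\max_q U_W$ is concave in $F$ by the same convex-combination-of-maximizers argument, which relies on convexity of $c$, concavity of $s$, monotonicity plus concavity of $W$, and $T_M$ not depending on $F$. The only difference is that you also require monotonicity in $M$ in your invariant set, which is harmless but not needed.
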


\begin{proof}
We prove this statement via a Banach argument. Let $S$ be the set defined as:
\begin{align*}
    S = \big\{ W \in C_b (\mathcal{X}) \ | & \ \forall F_1 \le F_2, \forall M, \forall \xi \in [0,1], \ \mathbf{(1)} \ W(F_1, M) \le W(F_2, M) \text{ and }\\
    & \mathbf{(2)} \ W\big( \xi \, F_1 + (1-\xi) \, F_2, M \big) \ge \xi \, W(F_1, M) + (1-\xi) \, W(F_2, M) \big\}
\end{align*}
Condition $\mathbf{(1)}$ says that we are considering functions that are non-decreasing in $F$; condition $\mathbf{(2)}$ embodies the concavity in $F$. We want to show that $V \in S$ exploiting Banach's fixed point theorem. $S \neq \emptyset$ as the constant functions are in $S$. Furthermore, it is closed as we are working with the uniform convergence, thus both the monotonicity in $F$ and the concavity are preserved. As for the invariance, let us consider $W \in S$. The task is to show that $\mathcal{B}^{cont} W \in S$ -- that is, $\mathcal{B}^{cont} W$ satisfies $\mathbf{(1)}$ and $\mathbf{(2)}$. The former has already been proven in Proposition \ref{prop:0509_0709}. As for the concavity, let us consider
\begin{equation*}
    \mathcal{B}^{cont} W\big( \xi \, F_1 + (1-\xi) \, F_2, M \big) = (f - a \, \kappa^S ) \, v\big( \xi \, F_1 + (1-\xi) \, F_2, M \big) + \max_{q \in [0,1]} U_W\big( \xi \, F_1 + (1-\xi) \, F_2, M, q \big)
\end{equation*}
The first term of the sum is concave by Assumption \ref{ass:volume}. As for the second term, from traditional calculus, we know that the maximum function, $\max_q U_W$, is concave in $F$ if the argument $U_W$ is jointly concave in $(F,q)$\footnote{Let's consider $F_1, F_2$, and $\xi \in [0,1]$. Let $q_j \in argmax_q U_W(F_j, M, q)$, with $j=1,2$. Due the convexity of the $q$ domain $[0,1]$, we have that $\xi \, q_1 + (1-\xi) \, q_2$. Thus: $\max_q U_W(\xi \, F_1 + (1-\xi) \, F_2, M, q) \ge U_W(\xi \, F_1 + (1-\xi) \, F_2, M, \xi \, q_1 + (1-\xi) \, q_2) \ge \xi \, U_W(F_1, q_1) + (1 - \xi) \, U_W(F_2, q_2) = \xi \, \max_q U_W(F_1, q) + (1 - \xi) \, \max_q U_W(F_2, q)$.}. So, we rewrite:
\begin{align*}
    U_W\big(\xi \, F_1 + (1-\xi) \, F_2,& M, \xi \, q_1 + (1-\xi) \, q_2\big) = -c\big(\xi \, q_1 + (1-\xi) \, q_2 \big) + \\
    & + \beta \, W \bigg( (1-\lambda) \, \big[ \xi \, F_1 + (1-\xi) \, F_2 \big] + \lambda \, s\big( \xi \, q_1 + (1-\xi) \, q_2 \big), T_M(M) \bigg) \ge \\
    & \ge - \xi \, c(q_1) - (1-\xi) \, c(q_2) + \\
    & + \beta \, W \bigg( (1-\lambda) \, \big[ \xi \, F_1 + (1-\xi) \, F_2 \big] + \lambda \, \xi \, s(q_1) + \lambda \, (1-\xi) \, s(q_2), T_M(M) \bigg)
\end{align*}
The cost function $c$ is convex, so $-c$ is concave; the $W$ contribution is bounded by the invariance hypothesis $W \in S$, guaranteeing both $W$ monotone and concave. This leads to the joint concavity. To summarize, we have that $\mathcal{B}^{cont} W$ is the sum of two concave pieces, so it is concave and belongs to $S$. Thus, $S$ is invariant for $\mathcal{B}^{cont}$ and the fixed point of $\mathcal{B}^{cont}$, that is $V$, is concave in $F$.
\end{proof}

Next, there is the proof of Corollary \ref{cor:0825_1542}.

\begin{corollary*}[$U$ concave]
    The continuation kernel $U$ is strictly concave in $q$. Specifically, it admits an unique maximizer $q^* = \text{argmax}_q U(F, M, q)$.
\end{corollary*}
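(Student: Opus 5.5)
The corollary reduces to showing that $q\mapsto U(F,M,q)=-c(q)+\beta\,V\big(T_F(F,q),T_M(M)\big)$ is strictly concave on $[0,1]$. Uniqueness of the maximizer then follows at once. $U$ is continuous in $q$, because $V$ is continuous by Proposition \ref{prop:0509_0709} and $s$ and $c$ are continuous, and $[0,1]$ is compact, so a maximizer exists. A strictly concave function on a convex set has at most one maximizer, since the midpoint of two distinct maximizers would give a strictly larger value.

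Under Assumption \ref{ass:honest}, $V$ is the fixed point of $\mathcal{B}^{cont}$, and $T_M$ does not depend on $q$. So for fixed $(F,M)$ I will study $g(q):=V\big((1-\lambda)F+\lambda s(q),\,T_M(M)\big)$ and write $U=-c+\beta g$.

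The key step is to show that $g$ is concave in $q$. By Proposition \ref{prop:0509_0331}, $x\mapsto V(x,T_M(M))$ is concave, and by Proposition \ref{prop:0509_0709} it is non-decreasing. The inner map $q\mapsto (1-\lambda)F+\lambda s(q)$ is concave because $s''\le 0$ and $\lambda\ge0$. A non-decreasing concave function composed with a concave function is concave. To check this directly, take $q_\xi=\xi q_1+(1-\xi)q_2$. Concavity of $s$ gives $T_F(F,q_\xi)\ge \xi T_F(F,q_1)+(1-\xi)T_F(F,q_2)$. Monotonicity of $V$ in $F$ then gives $g(q_\xi)\ge V\big(\xi T_F(F,q_1)+(1-\xi)T_F(F,q_2),T_M(M)\big)$, and concavity of $V$ in $F$ bounds this below by $\xi g(q_1)+(1-\xi)g(q_2)$. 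This is the same chain of inequalities already used in the invariance step of the proof of Proposition \ref{prop:0509_0331}.

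Strictness comes from the cost term. By Assumption \ref{ass:cost}, $c$ is strictly convex, so $-c$ is strictly concave. Since $\beta>0$, $\beta g$ is concave, and adding it to $-c$ gives a strictly concave function: for $q_1\neq q_2$ and $\xi\in(0,1)$ the inequality from $-c$ is strict and the one from $\beta g$ is weak. Hence $U$ is strictly concave in $q$, and $q^*$ is unique.

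I expect the main obstacle to be bookkeeping rather than mathematics. Concavity of $V$ in $F$ was proved only along the $F$-direction with $M$ held fixed, so it matters that $T_M$ does not depend on $q$. Assumption \ref{ass:honest} ensures this, since $\sigma=\sigma(M)$ makes $T_M(F,M)=T_M(M)$, and so the second argument of $V$ really stays fixed as $q$ varies. I will also confirm that monotonicity of $V$ in $F$, which the composition step needs, is available for the continuation-only operator. It is, because condition (1) in the invariant set $S$ of Proposition \ref{prop:0509_0331} already includes it.
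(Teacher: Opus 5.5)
Your proof is correct and follows the paper's argument: $-c$ is strictly concave, and $q\mapsto V\big(T_F(F,q),T_M(F,M)\big)$ is concave as a non-decreasing concave function of $F$ (Propositions~\ref{prop:0509_0709} and~\ref{prop:0509_0331}) composed with the concave map $q\mapsto(1-\lambda)F+\lambda s(q)$; your compactness argument for the existence of a maximizer is a small addition that the paper leaves implicit. One minor correction: $T_M$ never depends on $q$ (this already holds under Assumption~\ref{ass:update}), so the restriction $\sigma=\sigma(M)$ is not what keeps the second argument fixed here; that restriction is used inside the proof of Proposition~\ref{prop:0509_0331}, where $F$ varies.
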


\begin{proof}
    The continuation kernel is defined as $U(F,M,q) = -c(q) + \beta \, V\big( T_F(F, q), T_M(F, M) \big)$. By Assumption \ref{ass:cost}, $-c(q)$ is strictly concave. Moreover, $T_F(F, q) = (1-\lambda) \, F + \lambda \, s(q)$ is concave by Assumption \ref{ass:update}. Thus, $V\big( T_F(F, q), T_M(F, M) \big)$ is concave as the composition of concave and increasing functions. Overall, $U$ is strictly concave. This implies that its maximizer is unique.
\end{proof}

Finally, we discuss the proof of Proposition \ref{prop:0512_2148}.

\begin{proposition*}[Optimal Effort - Comparative Statics]
Assume \ref{ass:update}--\ref{ass:cost}. Additionally, assume \ref{ass:honest}. Then, the optimal effort adopted by the agent, $q^*(F,M; \kappa^S) = argmax_{q \in [0,1]} U(F, M, q)$, is such that:
\begin{itemize}
    \item $q^*(F,M; \kappa^S)$ is non-increasing in $F$.
    \item $q^*(F,M; \kappa^S)$ is non-decreasing in $M$.
    \item $q^*(F,M; \kappa^S)$ is non-increasing in $\kappa^S$.
\end{itemize}
\end{proposition*}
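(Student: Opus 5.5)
The plan is to apply Topkis' monotone comparative statics theorem to the continuation kernel $U(F,M,q)=-c(q)+\beta V\big(T_F(F,q),T_M(M)\big)$. Under Assumption \ref{ass:honest} we have $T_M(M)=(1-\rho)M+\rho\sigma(M)$, independent of $F$ and $q$. By Corollary \ref{cor:0825_1542} the maximizer $q^*$ is unique, so monotone selections coincide with $q^*$ itself. Since $-c(q)$ depends on $q$ only, it is modular in every pair. Each of the three claims therefore reduces to a sign of the mixed difference of $\beta V(T_F(F,q),T_M(M))$: submodularity in $(F,q)$, supermodularity in $(M,q)$, and submodularity in $(\kappa^S,q)$.

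For $F$, take $F_1\le F_2$ and $q_1\le q_2$. Set $x_{ij}=(1-\lambda)F_i+\lambda s(q_j)$. The four points satisfy $x_{11}+x_{22}=x_{12}+x_{21}$, and both $x_{12}$ and $x_{21}$ lie between $x_{11}$ and $x_{22}$. Concavity of $V$ in $F$ (Proposition \ref{prop:0509_0331}) then gives
\[
V(x_{11},\cdot)+V(x_{22},\cdot)\le V(x_{12},\cdot)+V(x_{21},\cdot),
\]
which is the standard fact that a concave function of a sum of two increasing terms is submodular. This yields submodularity of $U$ in $(F,q)$, so $q^*$ is non-increasing in $F$.

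For $M$, we need $V(x,T_M(M))$ to have increasing differences in $(x,M)$. Since $T_M$ is non-decreasing ($\sigma$ increasing) and $s$ is increasing, it suffices to prove that $V$ itself is supermodular in $(F,M)$. I would do this by the same invariant-set argument as in Propositions \ref{prop:0509_0709} and \ref{prop:0509_0331}. Let $S'$ be the set of $W\in C_b(\mathcal X)$ that are non-decreasing, concave in $F$, and supermodular in $(F,M)$. We must show $\mathcal B^{cont}$ maps $S'$ to itself.
\begin{itemize}
\item The flow term $(f-a\kappa^S)v$ is supermodular by $\partial^2_{F,M}v\ge0$.
\item For the term $\max_q U_W$, supermodularity of $U_W$ in $(F,M,q)$ on a lattice is needed. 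The $(F,M)$ and $(M,q)$ cross differences follow from supermodularity of $W$ composed with the order-preserving maps $F\mapsto T_F$, $q\mapsto T_F$, $M\mapsto T_M$.
\item The $(F,q)$ pair is submodular, not supermodular, which breaks the usual lattice argument.
\end{itemize}
This is the step I expect to be the main obstacle. My first attempt would be to reparametrize in $-q$ for the $(F,q)$ interaction. That, however, flips the $(M,q)$ sign, so instead I would argue directly. Fix $F$, write $\phi(M)=\max_q U_W(F,M,q)$, and show that $\phi(M_2)-\phi(M_1)$ is non-decreasing in $F$. To do so, evaluate at the maximizers $q^*(F,M_2)$ and use supermodularity of $W$ in $(x,M)$ together with an envelope-type bound. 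If this fails in general, I would fall back on restricting to the differentiable case and verifying $\partial_M\partial_q U=\beta\lambda s'(q)T_M'(M)\,\partial^2_{F,M}V\ge0$.

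For $\kappa^S$, I would treat $\kappa^S$ as a parameter of the fixed point $V_{\kappa^S}$. It suffices to show that $\partial_F V_{\kappa^S}$ (or its finite difference in $F$) is non-increasing in $\kappa^S$, i.e.\ that $V$ is submodular in $(F,\kappa^S)$. Then $U$ is submodular in $(q,\kappa^S)$, because $q$ moves only the $F$-argument upward. The flow term $(f-a\kappa^S)v(F,M)$ has cross difference $-a\,\partial_F v\le0$, so it is submodular in $(F,\kappa^S)$. Submodularity of $V$ in $(F,\kappa^S)$ is then propagated through $\mathcal B^{cont}$ by the same invariant-set scheme on the extended domain $\mathcal X\times[0,\theta)$, using the concavity-in-$F$ structure already available. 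Topkis' theorem then delivers $q^*$ non-increasing in $\kappa^S$.
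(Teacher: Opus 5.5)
Your argument for $F$ is correct and is the paper's: increasing differences of the concave $V$ along $x_{ij}$, then Topkis with the unique maximizer. The gap is in the $M$ and $\kappa^S$ claims, exactly at the step you flag: showing that $\mathcal{B}^{cont}$ preserves supermodularity of $W$ in $(F,M)$. Neither of your remedies closes it.

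The ``envelope-type bound'' is never specified. The differentiable fallback is circular: $\partial_M\partial_q U=\beta\lambda s'(q)T_M'(M)\,\partial^2_{F,M}V\ge 0$ presupposes $\partial^2_{F,M}V\ge 0$, which is what the invariance step must deliver. It only re-derives the easy final implication, from supermodularity of $V$ to supermodularity of $U$ in $(M,q)$.

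No reordering of variables can help while $q$ remains the control. The interactions of $U_W$ in the pairs $(F,M)$, $(F,q)$, $(M,q)$ are super-, sub- and supermodular respectively. Reversing the order of one variable flips the sign of both pairs containing it, so the product of the three signs never changes and can never become positive. The same obstruction appears for $\kappa^S$, where all three interactions in $(F,\kappa^S,q)$ are submodular. The concavity in $F$ you propose to use is exactly what produces the bad $(F,q)$ sign.

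The missing idea is the paper's change of control variable: choose next-period reputation $z=T_F(F,q)=(1-\lambda)F+\lambda s(q)$ instead of $q$. This needs $\lambda>0$; if $\lambda=0$, then $q^*\equiv 0$ and there is nothing to prove. Then
\[
\max_{q\in[0,1]}U_W(F,M,q)=\max_{z\in[T_F(F,0),\,T_F(F,1)]}\Big[-c\Big(s^{-1}\Big(\frac{z-(1-\lambda)F}{\lambda}\Big)\Big)+\beta\,W\big(z,T_M(M)\big)\Big].
\]
In this form each pair behaves correctly:
\begin{itemize}
\item $W$ no longer depends on $F$.
\item $c\circ s^{-1}$ is convex (convex increasing composed with convex increasing), so the cost term, a convex function of $z-(1-\lambda)F$ entering with a minus sign, is supermodular in $(F,z)$.
\item The pair $(M,z)$ is supermodular by the inductive hypothesis and monotonicity of $T_M$.
\item $F$ and $M$ do not interact.
\item The feasible interval is ascending in $F$.
\end{itemize}
Topkis's theorem on maximizing a supermodular function over an ascending family of feasible sets then gives supermodularity of the maximum, without using concavity of $W$ at all.

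With the same substitution your $\kappa^S$ target also goes through. If $W$ is submodular in $(F,\kappa^S)$, the objective is supermodular in $(F,-\kappa^S,z)$, and the flow term $(f-a\kappa^S)v$ is submodular in $(F,\kappa^S)$. Hence submodularity of $V$ in $(F,\kappa^S)$ is invariant under $\mathcal{B}^{cont}$. Since $q$ only raises the $F$-argument of $V$, $U$ is then submodular in $(q,\kappa^S)$. This invariant is cleaner than the paper's set $S_{\kappa^S}$. The paper's invariance step asserts that pairwise submodularity in $(\kappa^S,q,u)$ survives $\max_u$, and that step tacitly needs the same reparametrization.
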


\begin{proof}
    First, we observe that the monotonicity in the statement follows from the supermodularity or submodularity of the function $U$ -- with a slight abuse of notation, we consider $U$ as a four-variate function: $U=U(F,M, \kappa^S, q)$. Then, we observe that, due to the fact that $U(F,M, \kappa^S, q) = -c(q) + \beta \, V\big( T(F,M, q), \kappa^S \big)$ and $c$ only depends on $q$, the super/submodularity of $U$ can be obtained by looking at the super/submodularity of $V$. In the following, we analyze the three points separately.\\
    \linebreak
    $\bm{\triangleright}$ We show that $q^*$ is non-increasing in $F$ by leveraging the Topkis's theorem. Thus, we have to show that $U(F,M,q)$ is submodular in $(F,q)$. To prove this, we can consider $M$ as a constant. So, for this piece of proof, with a slight abuse of notation, we neglect the explicit dependence of $U$, $W$, and $T$ on $M$ -- thus, we write $U(F,q)$, $W(F, q)$, and $T(F,q)$. We observe that the submodularity condition can be written as:
    \begin{equation*}
        \mathcal{M}(U, F_{1,2}, q_{1,2}) / \beta = V\big(T(F_1, q_1) \big) + V\big(T(F_2, q_2) \big) - V\big(T(F_1, q_2) \big) - V\big(T(F_2, q_1) \big) \le 0
    \end{equation*}
    where we only care about the $F$ component of $T$: $T_F(F_j, q_i) = (1 - \lambda) \, F_j + \lambda \, s(q_i)$. Proposition \ref{prop:0509_0331} discusses the concavity of the value function $V$. Thus, we can exploit a well-known property of concave function: given $V(F)$ concave, $\tilde{F}_1 \le \tilde{F}_2$, and $\tilde{F}_\Delta >0$, then $V(\tilde{F}_2 + \tilde{F}_\Delta) - V(\tilde{F}_2) \le V(\tilde{F}_1 + \tilde{F}_\Delta) - V(\tilde{F}_1)$. This is enough for the submodularity. Indeed, we consider $\tilde{F}_j = (1-\lambda) F_1 + \lambda \, s(q_j)$, with $j=1,2$, and $\tilde{F}_\Delta = (1 - \lambda) \, (F_2 - F_1)$. Thus, we obtain that $\mathcal{M}(U, F_{1,2}, q_{1,2}) \le 0$. This allows us to say that $U$ is submodular in $(F,q)$, so the optimal effort $q^* = argmax_q U(F,M,q)$ is non-increasing in $F$.\\
    \linebreak
    $\bm{\triangleright}$ To show that $q^*$ is non-decreasing in $M$ we have to prove that $U(M,q)$ is supermodular. Furthermore, $U$ is supermodular in $(M,q)$ if $V$ is supermodular in $(F,M)$. Indeed, $T_F$ (monotonically increasing) depends on $q$ and is independent on $M$; $T_M$ only depends on $M$. To prove the supermodularity of $V$, we exploit a Banach argumentation. Let's consider the set
    \begin{equation*}
        S_M = \big\{ W \in C_b(\mathcal{X}) \ | \ \forall F_1 \le F_2, \ \forall M_1 \le M_2, \ \mathcal{M}(W, F_{1,2}, M_{1,2}) \ge 0\big\}
    \end{equation*}
    The set is trivially non-empty, as the constant function is in it. To prove that $S$ is closed, we consider a sequence, $\{W_n\}$, of elements of $S_M$ converging to a certain $W$. Then, $\forall \varepsilon>0$, for sufficiently large $n$ we have that $W_n(F,M) - \varepsilon < W(F,M)$ and $ W(F,M) - \varepsilon < W_n(F,M)$, so
    \begin{align*}
        \mathcal{M}(W, F_{1,2}, M_{1,2}) = & W(F_1,M_1) + W(F_2,M_2) - W(F_1,M_2) - W(F_2,M_1) \ge \\
        & W_n(F_1,M_1) + W_n(F_2,M_2) - W_n(F_1,M_2) - W_n(F_2,M_1) - 4 \, \varepsilon \ge - 4 \, \varepsilon
    \end{align*}
    and letting $\varepsilon \rightarrow 0$ we have $W \in S_M$. Lastly, we have to show the invariance. We rewrite the supermodularity condition for $\mathcal{B}^{cont}W$ as:
    \begin{equation*}
        \mathcal{M}(\mathcal{B}^{cont}W, F_{1,2}, M_{1,2}) = (f-a \, \kappa^S) \, \mathcal{M}(v, F_{1,2}, M_{1,2}) + \mathcal{M}\bigg( \max_{q \in [0,1]} U_W(F, M, q), F_{1,2}, M_{1,2} \bigg)
    \end{equation*}
    The first term in the sum is greater than $0$ as the volume function $v$ is supermodular, as $\partial_{F, M} v \ge 0$ by Assumption \ref{ass:volume}. As for the second term, we have to show the supermodularity of the function $\max U_W$. We define $z=T_F(F,q)=(1-\lambda) \, F + \lambda \, s(q)$ and substitute $q$ with $s^{-1} \big( (z - (1-\lambda) \, F) / \lambda \big)$. The existence of $s^{-1}$ (increasing) is guaranteed by the fact that $s(q)$ is strictly increasing, as in Assumption \ref{ass:update}. Moreover, $s(q)$ is concave, so $s^{-1}$ is convex. Keeping in mind that $z \in \big[ z_{min}=T_F(F,0), \ z_{max}=T_F(F,1) \big]$, we can write:
    \begin{equation*}
        \max_{q \in [0,1]} U_W(F, M, q) = \max_{z \in [z_{min}, z_{max}]} \bigg[ -c\big( s^{-1}\big( u(F, z) / \lambda \big) \big) + \beta \, W\big( z, T_M(M) \big) \bigg]
    \end{equation*}
    where $u(F,z) = z - (1-\lambda) \, F$. To prove that the maximum over $z$ is supermodular in $(F,M)$, we exploit a result in \cite{topkis1998supermodularity}\footnote{Specifically, Theorem 2.7.6. In our setting (we are working with real intervals), it holds as long as the $z$ domain is increasing in $(F,M)$, which is trivially verified as neither $z_{min}$ nor $z_{max}$ depends on $M$, and they are $T_F(F,M)$ is increasing in $F$, $\forall M \in [0,1]$.}, guaranteeing that the $\max$ is supermodular in $(F,M)$ if its argument is supermodular pair-wise. As for $(F,M)$, we just observe that their contribution is disjoint. Regarding $(M, z)$, the supermodularity is given by the invariance hypothesis. Lastly, we consider $(F,z)$. Due to the change of variables we made, $W$ does not depend on $F$, thus we only have to focus on showing that $\mathcal{M}\bigg( -c(s^{-1}(u)), F_{1,2}, z_{1,2} \bigg) \ge 0$ or, equivalently, $\mathcal{M}\bigg( c(s^{-1}(u)), F_{1,2}, z_{1,2} \bigg) \le 0$. The function $c(s^{-1})$ is convex (composition of convex and strictly increasing functions), so we aim to apply the symmetrical result as the previous point (that is, the fixed-length increments of a convex function are increasing) with $\tilde{u}_j = z_j - (1-\lambda) \, F_2$ with $j=1,2$ and $\tilde{u}_\Delta = (1 - \lambda) \, (F_2 - F_1)$. Thus, we obtain:
    \begin{equation*}
        c\big(s^{-1}(\tilde{u}_2 + \Delta) \big) - c\big(s^{-1}(\tilde{u}_2) \big) \ge c\big(s^{-1}(\tilde{u}_1 + \Delta) \big) - c\big(s^{-1}(\tilde{u}_1) \big)
    \end{equation*}
    That is:
    \begin{equation*}
        c\big(s^{-1}(u(z_2, F_1)) \big) - c\big(s^{-1}(u(z_2, F_2)) \big) \ge c\big(s^{-1}(u(z_1, F_1)) \big) - c\big(s^{-1}(u(z_1, F_2)) \big)
    \end{equation*}
    which is exactly $\mathcal{M}\bigg( c(s^{-1}(u)), F_{1,2}, z_{1,2} \bigg) \le 0$. Ultimately, we have the pairwise supermodularity of $U$, which leads to the supermodularity of $\max_q U$, which implies $\mathcal{B}^{cont}W \in S_M$. Thus, $S_M$ is non-empty, closed, and invariant under $\mathcal{B}^{cont}$. By the Banach theorem, the fixed point of $\mathcal{B}^{cont}$, that is $V^{cont} = V$, is in $S_M$, thus $V$ is supermodular. Ultimately, $U(F, M, q)$ is supermodular in $(M,q)$. This implies that the optimal $q^* = argmax_q U(F, M, q)$ is non-decreasing in $M$. \\
    \linebreak
    $\bm{\triangleright}$ Finally, we investigate the monotonicity of $q^*$ as a function of $\kappa^S$. Without re-explicating all the computations, we extend the definition of the value function to consider $\kappa^S$ as an additional input. With a slight abuse of notation, we omit to explicitly display the dependence of $V$ and $T$ on the state variables $F$ and $M$. The target is showing that $V$ is in
    \begin{equation*}
        S_{\kappa^S} = \big\{ W \in C_b(\mathcal{X} \times [0,1]) \ | \ \mathbf{(1)} \ W \text{ concave in $F$ and } \ \mathbf{(2)} \ \forall \kappa^S_1 \le \kappa^S_2, \ \forall q_1 \le q_2, \ \mathcal{M}(U_W, \kappa^S_{1,2}, q_{1,2}) \le 0\big\}
    \end{equation*}
    Concavity in $F$ at every fixed $(M, \kappa^S)$ is preserved by $\mathcal{B}^{cont}$ on $C_b\big( \mathcal{X} \times [0,1] \big)$. Thus, we focus on the submodularity. Replicating the same reasoning as above, we can show that $S_{\kappa^S}$ is non-empty and closed. It remains to show the invariance for $\mathcal{B}^{cont}$. We can decompose:
    \begin{align*}
        \mathcal{M}\big( U_{\mathcal{B}^{cont} W}, \kappa^S_{1,2}, q_{1,2} \big) =& a \, \beta \, \mathcal{M}\big( - \kappa^S \, v\big( T(F, M, q) \big), \kappa^S_{1,2}, q_{1,2} \big) +\\
        & + \beta \, \mathcal{M} \bigg( \max_{u \in [0,1]} \bigg[ -c(u) + \beta \, W\big( T\big( T(F, M, q), u \big) \big) \bigg], \kappa^S_{1,2}, q_{1,2} \bigg)
    \end{align*}
    As for the first term of the sum, we observe that, thanks to the smoothness assumptions on $v$ and $s$ in Assumptions \ref{ass:volume} and \ref{ass:update}, $- \kappa^S \, v\big( T(F, M, q) \big)$ is $C^2$ in $(\kappa^S, q)$, and its mixed second derivative is $- \lambda \, s'(q) \, \partial_F v < 0$, so the left term is negative\footnote{A well-known sufficient condition for submodularity of a $C^2$ function states that if a function $f(x,y)$ is twice derivable and $\partial^2_{x,y} f \le 0$, then $f$ is submodular.}. As for the second term of the sum, we rewrite
    \begin{equation*}
        T\big( T(F, M, q), u \big) = (1-\lambda)^2 \, F + (1-\lambda) \, \lambda \, s(q) + \lambda \, s(u)
    \end{equation*}
    Then, we observe that the submodularity of $\max_u$ is obtained by showing the pairwise submodularity of the $\max$ argument, thanks to the same theorem applied in the "$q^*$ non-increasing in $M$" case. The pairs $(q, \kappa^S)$ and $(u, \kappa^S)$ lead to submodularity by the invariance hypothesis $W \in S_{\kappa^S}$. As for the pair $(q,u)$, we exploit the same result on the increments of concave functions. Specifically, we have the concavity of $W$ with respect to $F$, and we consider $\tilde{F}_j = (1-\lambda)^2 F + (1-\lambda) \, \lambda \, s(q_j) + \lambda \, s(u_1)$ and $\tilde{F}_\Delta = \lambda \, [s(u_2) - s(u_1)]$. Ultimately, we have shown $\mathcal{M}\big( U_{\mathcal{B}^{cont} W}, \kappa^S_{1,2}, q_{1,2} \big) \le 0$. As a consequence, $U$ is submodular in $(q, \kappa^S)$ and the optimal effort $q^*$ is non-increasing in the stake ratio $\kappa^S$.
\end{proof}

\end{document}